\documentclass[a4paper,11pt,onecolumn,tightenlines,superscriptaddress]{article}
\pdfoutput=1
\usepackage{preamble}
\usepackage{mathrsfs}
\usepackage[a4paper,left=3cm,right=3cm]{geometry}

\begin{document}
\title{Geometric aspects of fault tolerance}
\author{Jonathan Conrad\thanks{{\'E}cole Polytechnique F{\'e}d{\'e}rale de Lausanne, \url{jonathan.conrad@epfl.ch}}}
\date{\today}

\maketitle
\begin{abstract}
We highlight three geometric aspects of fault tolerance. First, we construct an operator-algebraic formulation of the fiber bundle framework for fault tolerance, and show that a generalization of the Knill-Laflamme conditions implies a flat connection for this framework. 
Next, we develop this framework for Gottesman-Kitaev-Preskill (GKP) codes. We show that, for each polarization type, their moduli space embeds smoothly into a Grassmannian of finite-dimensional subspaces of a regularized distribution space.
Finally, we explain how GKP codes associated with Jacobians give rise to magic states that can be obtained by code space projections of Gaussian states and identify a $\rm CCZ$-type magic state arising this way.
\end{abstract}

\begin{small}
\tableofcontents
\end{small}
\clearpage
\section*{From physical to logical quantum channels}
Quantum fault tolerance studies how logical -- macroscopic -- quantum channels can be implemented using physical -- microscopic --  quantum channels in a robust manner, and much work has been devoted to constructing examples that exhibit this behaviour~\cite{Campbell_2017,Kitaev_2003, nguyen2025quantumfaulttoleranceconstantspace, aliferis2005quantum}. 
Microscopic conditions that imply fault tolerance, i.e., robustness of logical quantum channels, have, however, been limited to specific models and error assumptions. 
In ref.~\cite{gottesman2017fibre} Gottesman and Zhang proposed the idea of lifting the property of fault tolerance to a more general, quantifiable microscopic property of evolving quantum systems by constructing a fiber bundle over the space of possible code spaces. In their framework, unitary evolution implements parallel transport, and the property of microscopic fault tolerance is associated to the existence of a flat (projective) connection on the fiber bundle. The consequence of such a definition is that logical gates are realized via monodromy, and their robustness becomes a topological property of the construction. This perspective is consistent with much recent work on the implementation of logical gates through sequences of code-switching~\cite{davydova2023quantum}. Although the Gottesman--Zhang framework provides an appealing geometric perspective, it treats unitary dynamics and was developed through specialized examples.
\begin{figure}
    \centering
    \includegraphics[width=.8\textwidth]{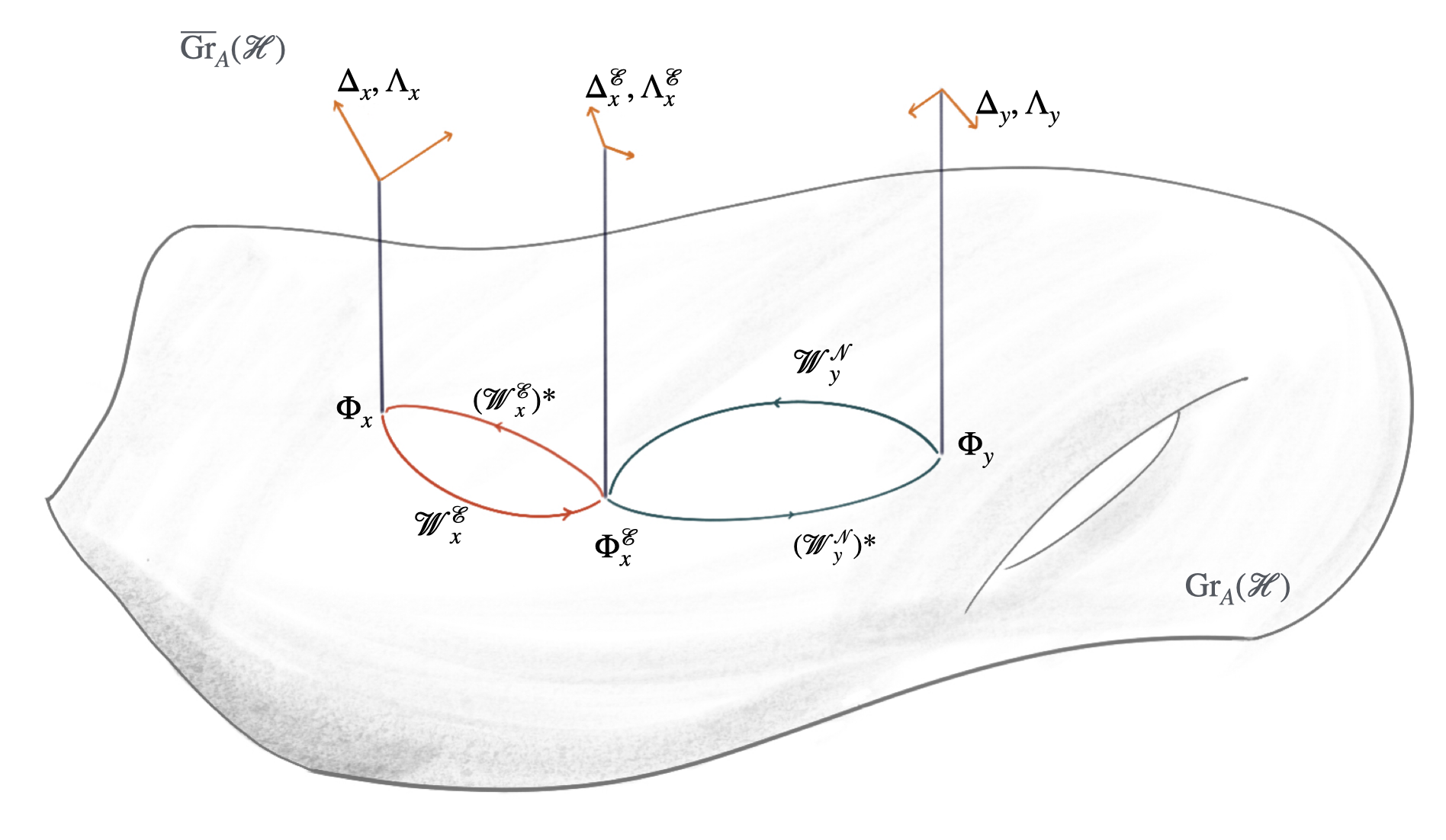}
    \caption{Illustration of the Grassmannian bundle. The total space is the framed Grassmannian $\overline{\rm Gr}_A(\CH)$ consisting of triples $(\Phi, \Delta, \Lambda)$ that project to $\Phi$ on the base space. We illustrate Theorem~\ref{thm:GeoKL}, where jointly correctable errors $\CE, \CN$ isometrically transport between the code idempotents $\Phi_x, \Phi_y$ and the error-space idempotent $\Phi^{\CE}_x$.}
\end{figure}
\subsubsection*{Aspect 1: Knill-Laflamme geometry on the Grassmannian bundle}
In this work, we extend this framework to general quantum channels. The first conceptual step is to consider the $C^*$-algebra $A$ of encoded logical operators for a physical Hilbert space $\CH$ and the Grassmannian associated to all such representations. On this Grassmannian, each encoded logical algebra is fixed by a completely positive idempotent 
\begin{equation}
    \Phi=\Phi^2=\Phi^*:\; B(\CH)\to B(\CH),
\end{equation}
and can be given a frame $(\Delta, \Lambda)$, such that $\Delta\Lambda=\Phi,\, \Lambda\Delta=\id_A$,
where the completely positive map $\Delta: A\to B(\CH)$ can be understood as the encoding isometry of the algebra $A$,\footnote{Or dually, the decoding map~\cite{kitaev2025almostidempotentquantumchannelsapproximate}.} and $\Lambda$ realizes the decoding map (we will typically consider normalized frames with $\Lambda=\Delta^*$). 
At each point, the frame realizes a representation of $A$ on $\Im \Phi$ via the product \begin{equation}X\times_{\Delta}Y=\Phi(X\Delta(1_A)^{-1}Y),\end{equation} which is similar to the Choi-Effros product examined in ref.~\cite{kitaev2025almostidempotentquantumchannelsapproximate}.
We consider transport on this bundle induced by channels satisfying a generalized set of  Knill-Laflamme (KL) conditions~\cite{knill1997theory}. For two idempotents $\Phi_x, \Phi_y$, and two quantum channels $\CE_s, \CE_t$, these are formulated as the diagonality of the Gram matrix
\begin{equation}
    (\CE_s\Phi_x)^*(\CE_t\Phi_y)=\sum_{k} \alpha_{xy, st}^k \Delta_x p_k \Lambda_y,
\end{equation}
where $\alpha_{xy, st}^k>0$ and $p_k$ are projections onto the $k$-th block of $A$.
For $x=y$ and $s=t$, this formulation is equivalent to the KL conditions and their generalizations~\cite{kribs2006operatorquantumerrorcorrection, beny2007generalization}, in which case they imply that correctable errors implement isometric transport on this Grassmannian. 
For different points $x\neq y$, this condition implies that logical information is faithfully transported. When the generalized KL conditions hold for every pair of channels in a set $S$, then it also holds for all convex combinations of channels. We can hence endow a smooth manifold structure on such sets and show that the flat connection is realized in the differential geometric sense.
\subsubsection*{Aspect 2: Chern connection on moduli space of GKP codes}
Next, we consider the class of GKP codes. These are quantum error-correcting codes for infinite-dimensional Hilbert spaces associated to symplectically integral lattices\footnote{Or polarized abelian varieties, if this perspective is more pleasing.}, and their associated Grassmannian is represented  by a space of dual-Schwartz distributions. We first develop the structure of this Grassmannian using regularized operator theory in Segal-Bargmann representation of quantum mechanics and then show that it can be identified with the moduli space of GKP-codes $\CM_D$ via a smooth embedding. These considerations serve to show that the theory developed in sec.~\ref{sec:KL_Geometry} can be applied to GKP codes and substantiates previous discussions of flat-transport of the logical algebra for GKP codes over its parameter space, as developed in refs.~\cite{conrad2024latticesgatescurvesgkp,Burchards_2025}. The bundle of code spaces over the choices of GKP codes, projected to the complex orbifold $\CJ_D=\CM_D/\!/U(n)$ is known to carry a projectively flat Chern-connection, and we explain the relationship between the different perspectives.
\subsubsection*{Aspect 3: GKP magic states from orbifold points and Jacobians}
In the final part, we investigate GKP magic states~\cite{BravyiKitaev} arising from unitary (symplectic orthogonal) symmetries. For GKP codes obtained as Jacobians of a compact Riemann surface $C$, automorphisms of $C$ induce unitary automorphisms of the GKP code. Consequently, projection of the physical vacuum onto the corresponding code spaces yields quantum states with additional symmetries, which may be magic states. We explain the general mechanism, compute the structure of the magic state associated to the Klein quartic curve, and find that a $\ket{\rm CCZ}$ magic state can be obtained from the octahedral hyperelliptic curve.

\subsection*{Related works}
This manuscript builds on a wide range of work on quantum error correction and fault tolerance. Here we summarize the central ingredients and refer to more specific literature throughout the following sections.
The ideas of the first section rely heavily on the quest put forward by Gottesman and Zhang in ref.~\cite{gottesman2017fibre} to conceive a geometric, bundle-theoretical formulation of quantum fault tolerance, where the flat connection microscopically certifies the fault tolerance property. In the present work, this quest is paired with an operator-algebraic framework for quantum error correction, originally developed by Kribs et al. and Beny et al. in refs.~\cite{kribs2006operatorquantumerrorcorrection,beny2007generalization}. This formulation was motivated by the work of Kitaev in ref.~\cite{kitaev2025almostidempotentquantumchannelsapproximate}. Holonomy and channel-induced transport of subspaces have also been studied in literature on open quantum systems, such as in refs.~\cite{Albert_2016,Kult_2008}, which can be understood as related constructions with different permitted evolutions and frame choices.
Our treatment of fault tolerance for GKP codes builds on previous work by the author in refs.~\cite{conrad2024latticesgatescurvesgkp,Burchards_2025}. Recent independent work by Mayrand and Royer in ref.~\cite{mayrand2026complex} has also developed a thorough dictionary between the mathematics of polarized abelian varieties and GKP codes and their code states, including the Jacobian/passive-Clifford correspondence.
The final section, on magic states in GKP codes, builds on and extends observations made in previous work by the author~\cite{conrad2024latticesgatescurvesgkp}.
The mathematical tools used here are largely standard. For background, we refer to refs.~\cite{KitaevShenVyalyi,Holevo2019,Paulsen_2003,NielsenChuang2010} on operator theory, refs.~\cite{Birkenhake_2004,Tata_1} on complex algebraic and analytic methods, and ref.~\cite{Hall2013} on the relevant physics.

\subsection*{Acknowledgments}
This project has benefitted from discussions with many colleagues over its course. In particular, I thank Victor Albert for first bringing the question of fiber-bundle fault tolerance to my attention. I thank Daniel Gottesman and Lucy Zhang for explaining their motivation behind ref.~\cite{gottesman2017fibre} and for helpful discussions on this work, and I am grateful to Julio Magdalena for regular discussions on formulations of fault tolerance, as well as feedback on this manuscript. I further thank Steve Flammia and Thomas Vidick for discussions on this topic.
During the later stages of this project, I have made use of LLMs (ChatGPT and Claude) for finding and understanding the literature, suggesting and optimizing proof strategies and reviewing the manuscript.
\newpage

\section{Knill-Laflamme geometry of fault tolerance}\label{sec:KL_Geometry}
In this section we define a generalization of the Knill-Laflamme conditions~\cite{knill1997theory} to compare multiple error channels and different encodings, and show how it defines a geometric structure on the space of encoded logical operator algebras associated to a Hilbert space. We show that this perspective allows us to develop a formulation of fiber bundle fault tolerance for general quantum channels. The framework we develop in the following generalizes the idea of fiber bundle fault tolerance by Gottesman and Zhang~\cite{gottesman2017fibre} to the setting of operator algebra quantum error correction initially investigated by Kribs et al. in ref.~\cite{kribs2006operatorquantumerrorcorrection}, drawing inspiration from Kitaev's recent work in ref.~\cite{kitaev2025almostidempotentquantumchannelsapproximate}.

\subsection{Operator-algebra quantum error correction and fault tolerance}
Let $\CH$ be a separable Hilbert space and let
$A=\bigoplus_{k=1}^{K}B(K_k)$ be a finite-dimensional $C^*$-algebra,
equipped with the canonical trace $\tau_A=\sum_k\Tr_{K_k}$.
We use the inner products
\begin{equation}
\langle a,b\rangle_A=\tau_A(a^*b),\qquad
\langle X,Y\rangle_{\mathrm{HS}}=\Tr_{\CH}(X^*Y)
\end{equation}
to identify the space of Hilbert–Schmidt (HS) operators $X,Y\in S_2(\CH)$ and consider normal completely positive (CP) maps $\CE$ on the space of bounded operators $B(\CH)$ with dual $\CE^*$ taken under trace duality. A CP map is a quantum channel if it is unital, $\CE(I)=I$, and we will generally not make this assumption. 
A physical state is a positive, trace-one element $\rho\in S_1(\CH)=\{X\in B(\CH),\, \Tr|X|<\infty\}$ that defines a bounded linear functional 
\begin{equation}
    M_\rho:\; B(\CH) \to \C,\quad  O\mapsto \Tr[\rho O].
\end{equation}

We focus on the description of operator algebras associated to quantum error correcting codes, in a generalized sense. Consider $B(\CH)$, the space of bounded operators on a Hilbert space $\CH$. A quantum error correcting code defines a subspace $\Phi B(\CH)\subset B(\CH)$ via an idempotent map $\Phi=\Phi^2=\Phi^*$ that represents the $C^*$ algebra of logical operators described by $A$. Via this description, we can investigate the evolution of encoded quantum states purely through the evolution of the subspace defined by the idempotent $\Phi$. We make this precise in the following.

\begin{definition}[Framed Grassmannians]
    The framed Grassmannian $\overline{\Gr}_A\lr{\CH}$ on $\CH$ for $A$ is the collection of triples
        $(\Phi_x, \Delta_x, \Lambda_x), x\in \FX$,
    where $\FX$ is a label set and
    \begin{equation}
\Phi_x:B(\CH)\to B(\CH),\; \Phi_x=\Phi_x^*=\Phi_x^2,
\end{equation}
are completely positive (CP) idempotents
together with a frame at $\Phi_x \in\Gr_{A}(\CH)$ provided by a choice of CP maps
\begin{equation}
\Delta_x:A\to B(\CH),
\qquad
\Lambda_x:B(\CH)\to A
\end{equation}
such that
\begin{equation}
\Lambda_x\Delta_x=\id_A,
\qquad
\Delta_x\Lambda_x=\Phi_x.\label{eq:frame_cond}
\end{equation}
A frame $(\Delta_x, \Lambda_x)$ is normalized when $\Lambda_x=\Delta_x^*$. If, in addition, $\Delta(1_A)^2=\Delta(1_A)$, we call $\Phi$ a subspace framed (SF) idempotent.  
\end{definition}
%
Following ref.~\cite{kitaev2025almostidempotentquantumchannelsapproximate}, the framed idempotent structure induces a representation of $A$ on ${\rm Im}\Phi$ via the Choi-Effros product~\cite{ChoiEffros}. 
\begin{definition}
    Let $(\Phi, \Delta, \Lambda)\in \overline{\rm Gr}_{A}(\CH)$ be a normalized framed idempotent with $\Lambda=\Delta^*$ and let $1_A\in A$ be the unit element. For $X,Y\in\Im\Phi$, the weighted Choi-Effros product is given by
    \begin{equation}
        X \times_{\Delta} Y = \Phi\lr{X\Delta(1_A)^{-1}Y}.
    \end{equation}
\end{definition}
The expression $X\Delta(1_A)^{-1}Y$ is understood as a bounded strong limit~\cite[eq.~(1)]{LiebRuskai}.
We can check that, for a framed idempotent, the product defined above makes $\Delta$ a $*$-homomorphism from the algebra. Each choice of normalized frame therefore realizes a representation of the algebra $A$ on $\Im\Phi$. 
\begin{lemma}
    Let $(\Phi, \Delta, \Lambda)\in \overline{\rm Gr}_{A}(\CH)$ be a normalized framed idempotent with $\Lambda=\Delta^*$, let $1_A$ be the unit of $A$ and let $a, b\in A$. $\Delta$ realizes a $*$-homomorphism 
    \begin{equation}
        \Delta(ab)=\Delta(a) \times_{\Delta} \Delta(b).
    \end{equation}
\end{lemma}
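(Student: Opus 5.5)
The plan is to first determine the structure of a completely positive map $\Delta: A\to B(\CH)$ satisfying $\Delta^*\Delta=\id_A$. Once that structure is explicit, the identity $\Delta(ab)=\Phi(\Delta(a)\Delta(1_A)^{-1}\Delta(b))$ becomes a direct computation.

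I would begin with two simple observations.
\begin{itemize}
\item Since $\Phi\Delta=\Delta\Lambda\Delta=\Delta$, every $\Delta(c)$ lies in $\Im\Phi$ and is fixed by $\Phi$.
\item It therefore suffices to show the unprojected identity $\Delta(a)\Delta(1_A)^{-1}\Delta(b)=\Delta(ab)$. Applying $\Phi$ to both sides then gives the claim.
\end{itemize}

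\emph{Structure of $\Delta$.} By Stinespring/Kraus for the finite-dimensional algebra $A=\bigoplus_k B(K_k)$, we can write
\begin{equation}
\Delta(a)=\sum_k\sum_i W_{k,i}\,a_k\,W_{k,i}^*,\qquad W_{k,i}:K_k\to\CH .
\end{equation}
With respect to $\tau_A$ and $\mathrm{Tr}_\CH$, the dual is
\begin{equation}
\Delta^*(X)=\bigoplus_k\sum_i W_{k,i}^*XW_{k,i}.
\end{equation}
Hence
\begin{equation}
\Delta^*\Delta(a)=\bigoplus_{k}\sum_{k',i,j}(W_{k,i}^*W_{k',j})\,a_{k'}\,(W_{k,i}^*W_{k',j})^* .
\end{equation}
The condition $\Delta^*\Delta=\id_A$ then constrains the $W$'s in two ways.
\begin{itemize}
\item Taking $a$ supported on block $k'$ and reading off block $k\neq k'$ forces $W_{k,i}^*W_{k',j}=0$, since a sum of positive terms vanishes.
\item On the diagonal block, the family $\{W_{k,i}^*W_{k,j}\}_{i,j}$ is a Kraus decomposition of the identity channel on $B(K_k)$. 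By the standard uniqueness argument, the Kraus operators of the identity channel are scalars, so $W_{k,i}^*W_{k,j}=c^k_{ij}1_{K_k}$ with $\sum_{ij}|c^k_{ij}|^2=1$.
\end{itemize}
From these relations I would deduce a tensor-product form for each block. There is a multiplicity space $M_k$, vectors $f_{k,i}\in M_k$ with Gram matrix $(c^k_{ij})$, and an isometry $U_k:K_k\otimes M_k\to\CH$ such that $W_{k,i}=U_k(1_{K_k}\otimes|f_{k,i}\rangle)$. The ranges of the $U_k$ are mutually orthogonal, which follows from the cross-block vanishing. Setting $\sigma_k=\sum_i|f_{k,i}\rangle\langle f_{k,i}|\ge 0$, and shrinking $M_k$ to the support of $\sigma_k$, we obtain
\begin{equation}
\Delta(a)=\sum_k U_k\,(a_k\otimes\sigma_k)\,U_k^*,\qquad \sigma_k>0 \text{ on } M_k .
\end{equation}

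\emph{Computation.} From this form, $\Delta(1_A)=\sum_kU_k(1\otimes\sigma_k)U_k^*$. Its inverse, taken on its support as the bounded strong limit in the sense used in the definition, is $\sum_kU_k(1\otimes\sigma_k^{-1})U_k^*$. Because the $U_k$ are isometries with orthogonal ranges, the product telescopes:
\begin{equation}
\Delta(a)\Delta(1_A)^{-1}\Delta(b)=\sum_kU_k(a_kb_k\otimes\sigma_k\sigma_k^{-1}\sigma_k)U_k^*=\Delta(ab).
\end{equation}
Applying $\Phi$ and using $\Phi\Delta=\Delta$ gives the lemma. The $*$-property $\Delta(a^*)=\Delta(a)^*$ is immediate because $\Delta$ is CP.

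The main obstacle is the structure step. Two points need care there. First, one must deduce the tensor-product form rigorously from $W_{k,i}^*W_{k,j}=c^k_{ij}1$: define $U_k$ on $K_k\otimes\mathrm{span}\{f_{k,i}\}$ and check that it is an isometry using the Gram relations. Second, the inverse of $\Delta(1_A)$ need not be bounded when $\CH$ is infinite-dimensional or $\sigma_k$ has small eigenvalues. Since each $M_k$ is finite-dimensional, I would handle this by working on the support of $\Delta(1_A)$ and interpreting $X\Delta(1_A)^{-1}Y$ as the strong limit of $X(\Delta(1_A)+\epsilon)^{-1}Y$. In this limit, $\sigma_k(\sigma_k+\epsilon)^{-1}\sigma_k\to\sigma_k$.
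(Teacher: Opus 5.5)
Your proof is correct, but it takes a different route from the paper.

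\textbf{The paper's route.} The paper never determines the structure of the frame. It applies the Lieb--Ruskai operator Schwarz inequality $\Psi(u^*u)\ge\Psi(u^*v)\Psi(v^*v)^{-1}\Psi(v^*u)$ twice:
\begin{itemize}
\item to $\Delta$ with $u=a$, $v=1_A$;
\item to $\Lambda$ with $u=\Delta(1_A)^{-1/2}\Delta(a)$, $v=\Delta(1_A)^{1/2}$.
\end{itemize}
Together with $\Lambda\Delta=\mathrm{id}_A$, this squeezes $\Lambda\bigl(\Delta(a^*)\Delta(1_A)^{-1}\Delta(a)\bigr)$ between $a^*a$ and $a^*a$. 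Polarization followed by an application of $\Delta$ then gives the projected identity.

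\textbf{Your route.} You use a Kraus-rank argument, of the same kind the paper later uses for the KL equivalence and the cross-KL lemma. It shows that every normalized frame has the form $\Delta(a)=\sum_k U_k(a_k\otimes\sigma_k)U_k^*$. Here the $U_k$ are isometries with mutually orthogonal ranges, and $\operatorname{Tr}\sigma_k^2=\sum_{ij}|c^k_{ij}|^2=1$.

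\textbf{What each buys.} The paper's argument is shorter and uses nothing beyond complete positivity and the frame identities. Yours is longer, but it proves more:
\begin{itemize}
\item The unprojected identity $\Delta(a)\Delta(1_A)^{-1}\Delta(b)=\Delta(ab)$ holds exactly. So the outer $\Phi$ in $\times_\Delta$ is redundant on $\operatorname{Im}\Phi=\operatorname{Im}\Delta$.
\item You get an explicit description of all normalized frames.
\item You get a concrete meaning for $\Delta(1_A)^{-1}$.
\end{itemize}

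\textbf{One correction.} Your claim that each $M_k$ is finite-dimensional is false in general. The Kraus family of $\Delta$ may be countably infinite, so $\sigma_k$ can be a Hilbert--Schmidt (even trace-class) operator of infinite rank, and then $\sigma_k^{-1}$ is unbounded. Your limiting argument does not need finite-dimensionality, however. Write
\[
(\Delta(1_A)+\epsilon)^{-1}=\sum_kU_k\bigl(1\otimes(\sigma_k+\epsilon)^{-1}\bigr)U_k^*+\epsilon^{-1}\Bigl(I-\sum_kU_kU_k^*\Bigr).
\]
The last term is annihilated by $\Delta(a)$ because the ranges of the $U_k$ are orthogonal. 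Moreover
\[
\bigl\|\sigma_k(\sigma_k+\epsilon)^{-1}\sigma_k-\sigma_k\bigr\|=\bigl\|\epsilon\,\sigma_k(\sigma_k+\epsilon)^{-1}\bigr\|\le\epsilon .
\]
Hence $\Delta(a)(\Delta(1_A)+\epsilon)^{-1}\Delta(b)\to\Delta(ab)$ in norm, whatever $\dim M_k$ is. Just drop the finite-dimensionality claim.
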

\begin{proof}
 Complete positivity of $\Psi = \Delta, \Lambda$ implies, via Stinespring dilation~\cite{Steinspring}, that we can write $\Psi(x)=S^*\pi(x)S$ with $\pi$ a $*$-representation of the domain of $\Psi$ and $S$ a bounded linear operator. Lieb and Ruskai~\cite{LiebRuskai} showed that\footnote{In ref.~\cite{LiebRuskai}, $C>0$ denotes positive semidefiniteness, i.e., $\langle x,Cx\rangle\geq0$ for all $x\in\CH$.} %
 %
 %
 %
 any such map satisfies $\Psi(u^*u)\geq \Psi(u^*v)\Psi(v^*v)^{-1}\Psi(v^*u)$. Consequently, taking $\Psi=\Delta$, $u=a$, $v=1_A$ and $\Psi=\Lambda, u=\Delta(1_A)^{-1/2}\Delta(a)$, $v=\Delta(1_A)^{1/2}$, one obtains the two inequalities
   \begin{equation}
    a^*a\leq \Lambda\lr{\Delta(a^*)\Delta(1_A)^{-1}\Delta(a)}\leq a^*a,
   \end{equation}
   and hence equality. Setting $a\mapsto a+b$ and $a\mapsto a+ib$, comparing terms using linearity, and finally applying $\Delta$ proves the statement.
\end{proof}

In the special case of SF-idempotents, $\Delta(1_A)^2=\Delta(1_A)$ and the encoded unit is a projector. This is the case when the encoder satisfies $\Delta(ab)=\Delta(a)\Delta(b)$ under the standard product. SF-idempotents realize a subspace described via
\begin{equation}
    \Phi_x = \sum_{k=1}^K \Pi_{x,k}\otimes \overline{\Pi}_{x,k},\quad \Pi_{x,k}=V_{x,k}V_{x,k}^{\dagger},
\end{equation}
with isometries $V_{x,k}$ with $V_{x,k}^{\dagger}V_{x,k}=I_{K_k}$.
SF-idempotents need not be unital on $B(\CH)$. Rather, the definition implies that
\begin{equation}
    \Phi_x(I) = \Delta_x(1_A) = \sum_{k=1}^K\Pi_{x,k} \eqqcolon \Pi_x. 
\end{equation}
We also observe that to each idempotent $\Phi_x$, one may associate multiple frames, as for an invertible element $g\in \Aut(A)$ we have
\begin{equation}
    (\Delta_x g)(g^{-1}  \Lambda_x) = \Delta_x \Lambda_x=\Phi_x.
\end{equation}
Throughout this work we assume $g\in \Aut(A)$ to respect the $*$-structure of the operator algebra.
To emphasize this freedom, we define a frame bundle over the unframed Grassmannian.
\begin{definition}[Grassmannian frame bundle]
   Let  $\overline{\Gr}_A\lr{\CH}$ be the set of normalized framed Grassmannians on $\CH$ for $A$. The Grassmannian frame bundle is defined by forgetting the frame
   \begin{equation}
        \pi:\, \overline{\Gr}_A\lr{\CH} \to \Gr_A\lr{\CH},\quad (\Phi_x, \Delta_x, \Lambda_x) \mapsto \Phi_x,\quad  x\in \FX.
   \end{equation}
\end{definition}

We equip the induced space of idempotents $\Gr_A(\CH)$ with the topology given by the completely bounded (cb) norm defined for maps between two $C^*$-algebras $\Lambda: A'\to A''$  as
\begin{equation}
    \|\Lambda\|_{\rm cb}=\sup_n \sup_{\substack{X \in M_n\otimes A'\\ X\neq 0}}\frac{\|(I_n\otimes \Lambda)(X)\|}{\|X\|},
\end{equation}
where $M_n=B(\C^n)$ is the algebra of $n\times n$ matrices. The induced metric \begin{equation}d_{\rm cb}(\Psi, \Phi) \eqqcolon \|\Psi -\Phi\|_{\rm cb},
\end{equation}
defines the norm topology on $\Gr_A\lr{\CH}$: a subset $U\subset \Gr_A\lr{\CH}$ is called open if for all $\Psi\in U$ there is an $\epsilon > 0$ such that 
\begin{equation}
    d_{\rm cb}(\Psi, \Phi)<\epsilon \Rightarrow \Phi \in U.
\end{equation}
Let $U_{\alpha}, U_{\beta}$ be two open subsets such that $\Phi\in U_{\alpha}\cap U_{\beta}$, and for $\gamma \in \lrc{\alpha, \beta}$, let the normalized section $s_{\gamma}(\Phi)=(\Phi, \Delta_{\gamma}(\Phi),\Lambda_{\gamma}(\Phi))$ provide the local trivialization on $U_{\gamma}$. The previous discussion implies that there are elements $g_{\alpha\beta}(\Phi)=g_{\beta\alpha}(\Phi)^{-1}\in \Aut(A)$ such that
\begin{equation}
(\Delta_{\beta}(\Phi),\Lambda_{\beta}(\Phi))=(\Delta_{\alpha}(\Phi)g_{\alpha\beta}(\Phi),g_{\alpha\beta}(\Phi)^{-1}\Lambda_{\alpha}(\Phi)).
\end{equation}

\subsection{Correctable Grassmannians and correctable transport}
A classical approach in quantum error correction is to fix the quantum error correcting code (here specified through the idempotent $\Phi$) and examine error channels predominantly supported on correctable sets. Plainly, a correctable error channel can be defined as follows.
\begin{definition}[Correctable channel]\label{def:correctable}
A channel $\CE$ is correctable for a code point \(\Phi\) if there exists a recovery channel $\CR$ such that \begin{equation} \CR\mathcal E\Phi = c \Phi \label{eq:correctable}\end{equation} for a constant $c\in \R_+$.
\end{definition} 
The requirement here is the existence of a projective inverse for $\CE$ on the image of $\Phi$. A channel is a completely positive trace-preserving (CPTP) map on $S_1(\CH)$.

\begin{definition}[Knill--Laflamme conditions]\label{def:KL}
    Let $\CE_s, \CE_t$ be a pair of quantum channels and let $(\Phi_{x}, \Delta_x, \Lambda_x), (\Phi_{y}, \Delta_y, \Lambda_y) \in \overline{\Gr}_A\lr{\CH}$ be a pair of framed idempotents. The Knill-Laflamme conditions are satisfied if there exist scalars $\alpha^k_{xy, st}\in \R_{>0}$ such that
    {\renewcommand{\theHequation}{KL}%
    \begin{equation}
        (\CE_s\Phi_x)^*\CE_t \Phi_y =\sum_{k=1}^K \alpha^k_{xy, st} \Delta_x p_{k}\Lambda_y.\tag{KL}\label{eq:KL}
    \end{equation}}
\end{definition}

We first examine what this condition implies for $x=y, s=t$ in a simplified setting. To this end, we define, for a framed idempotent $(\Phi_x, \Delta_x, \Lambda_x)$ and $\bs{a}\in \R_{>0}^K$, the diagonal map
\begin{equation}
    \mathfrak{m}_{\bs{a}}=\sum_k a_k p_k ,\quad   D_{x,\bs{a}}= \Delta_x  \mathfrak{m}_{\bs{a}}\Lambda_x, 
\end{equation}
with $p_k$ the projection onto $B(K_k)$.
For an SF-idempotent $(\Phi_{x}, \Delta_x, \Lambda_x)\in \overline{\Gr}_A\lr{\CH}$ with $\Phi_x=\sum_{k}\Phi_{x,k}$ for $\Phi_{x,k}=\Pi_{x,k}\otimes \overline{\Pi}_{x,k}$ \eqref{eq:KL} requires the existence of real scalars $a_k\in\R^+$ such that 
\begin{equation}
    \Phi_{x}\CE^*\CE\Phi_{x} =D_{x,\bs{a}}>0.
\end{equation}
Define $\CW_x=\CE D_{x,\bs{a}}^{-\frac{1}{2}}$. This defines an isometry with
\begin{equation}
    \CW_x^*\CW_x=\Phi_x,\quad  \CW_x\CW_x^*=\CE D_{x,\bs{a}}^{-1}\CE^* \eqqcolon \Phi^{\CE}_x=\Phi^{\CE}_x, 
\end{equation}
which implements isometric transport between the initial code idempotent $\Phi_x$ and the error idempotent $\Phi^{\CE}_x=(\Phi^{\CE}_x)^2$ as well as their frames. Note, however, that this isometry need not preserve the SF property of the framed idempotents. %
For a fixed code $\Phi_x=\Pi\otimes\overline{\Pi}$ with code space projector $\Pi$, the above condition is equivalent to the usual Knill-Laflamme conditions, discussed, e.g., in ref.~\cite{NielsenChuang2010}. 

\begin{lemma}[Equivalence to subspace KL conditions]
    Let $\Phi=\sum_{k=1}^K\Pi_k\otimes\overline{\Pi}_k$ be an SF-idempotent with code space projector $\Pi$, and let 
    \begin{equation}
        \CE=\CE_s=\CE_t=\sum_{i} E_i\otimes \overline{E}_i,
    \end{equation}
    be specified by Kraus operators  $\lrc{E_i}_i$. Then eq.~\eqref{eq:KL} with $x=y$ and $s=t$ is equivalent to the condition that there exist matrices $\alpha^{k}=(\alpha^k)^{\dagger} $,
necessarily positive semidefinite with $\sum_i\alpha^k_{ii}=1$ such that
    \begin{equation}
        \Pi_k E_{i}^{\dagger}E_j\Pi_l = \delta_{kl}\alpha^{k}_{ij}\Pi_l.\label{eq:class_KL}
    \end{equation}
\end{lemma}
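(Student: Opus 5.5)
Both sides of the equivalence are statements about the superoperator $\Phi\CE^*\CE\Phi$. The plan is to write this superoperator in the tensor ("$X\otimes\overline{Y}$") notation used for $\Phi$ and $\CE$, and then compare matrix elements.

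\textbf{Step 1: expand the KL Gram operator.} With $\Phi=\sum_k\Pi_k\otimes\overline{\Pi}_k$ and $\CE=\sum_i E_i\otimes\overline{E}_i$, adjoints under trace duality act factorwise. A direct expansion gives
\begin{equation}
\Phi\CE^*\CE\Phi=\sum_{k,l,k',l'}\sum_{i,j}\bigl(\Pi_k E_i^{\dagger}E_j\Pi_l\bigr)\otimes\overline{\bigl(\Pi_{k'}E_i^{\dagger}E_j\Pi_{l'}\bigr)} .
\end{equation}
For the right-hand side of \eqref{eq:KL}, write $\Delta(a)=V a V^{\dagger}$ with $V=\oplus_k V_k$, so that $\Lambda=\Delta^*$. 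Then
\begin{equation}
\Delta p_k\Lambda=\Pi_k\otimes\overline{\Pi}_k=\Phi_k .
\end{equation}
So \eqref{eq:KL} with $x=y$, $s=t$ reads $\Phi\CE^*\CE\Phi=\sum_k a_k\Phi_k$.

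\textbf{Step 2: \eqref{eq:class_KL} implies \eqref{eq:KL}.} Substitute $\Pi_kE_i^\dagger E_j\Pi_l=\delta_{kl}\alpha^k_{ij}\Pi_k$ into the expansion. Only the terms with $k=l$ and $k'=l'$ survive, and they give
\begin{equation}
\sum_{k,k'}\Bigl(\sum_{i,j}\alpha^k_{ij}\overline{\alpha^{k'}_{ij}}\Bigr)\,\Pi_k\otimes\overline{\Pi}_{k'} .
\end{equation}
The cross terms with $k\neq k'$ do not vanish automatically. This is the main obstacle. The relevant point is that \eqref{eq:class_KL} is exactly the statement that $\CE$ is correctable on the algebra $A$. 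In operator-algebra QEC (Kribs et al., B\'eny et al.) this condition forces the $\alpha^k$ to define a common noise structure, and the off-diagonal block terms are removed by projecting onto the relevant algebra.

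I would handle it concretely by choosing a Kraus representation adapted to the code. Use the unitary freedom $E_i\mapsto\sum_j u_{ij}E_j$, under which $\alpha^k\mapsto u^\dagger\alpha^k u$. When the code is genuinely a subspace code ($K=1$), or the $\alpha^k$ commute, they can be diagonalized simultaneously. Then the coefficient $\sum_{i,j}\alpha^k_{ij}\overline{\alpha^{k'}_{ij}}$ becomes a Hilbert--Schmidt pairing of positive matrices. In the intended reading this coefficient is interpreted as restricted to the diagonal $k=k'$ through $\Phi$'s action on $\Im\Phi$, and so the expression reduces to $\sum_k a_k\Phi_k$ with
\begin{equation}
a_k=\|\alpha^k\|_{\rm HS}^2>0 .
\end{equation}

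\textbf{Step 3: \eqref{eq:KL} implies \eqref{eq:class_KL}.} Conversely, assume $\Phi\CE^*\CE\Phi=\sum_ka_k\Phi_k$. Evaluate both sides on the rank-one inputs $\ket{\psi}\!\bra{\phi}$ with $\psi\in\Im\Pi_l$, $\phi\in\Im\Pi_{l'}$, and pair the result with $\ket{\chi}\!\bra{\omega}$. This turns the superoperator identity into
\begin{equation}
\sum_i M_i\otimes\overline{M_i}\propto\delta\cdot\Phi,\qquad M_i=\bigl(\langle\chi|E_i^{\dagger}E_j|\psi\rangle\bigr)_j .
\end{equation}
The vectors $(E_j\psi)_j$ then form a Gram system $G=\bigl(\langle E_i\chi,E_j\psi\rangle\bigr)$ that is forced to be proportional to $\langle\chi,\psi\rangle$ on each block and to vanish across blocks. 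This is the standard argument in Nielsen--Chuang that the rank-one Gram operator determines $\Pi E_i^\dagger E_j\Pi$ up to the scalar matrix $\alpha$. Hermiticity of $\alpha^k$ follows by taking adjoints. Positivity follows because $\alpha^k$ is a Gram matrix of the vectors $E_i\psi$ with $\psi$ a unit vector in block $k$. The normalization $\sum_i\alpha^k_{ii}=1$ comes from trace preservation, $\sum_iE_i^\dagger E_i=I$, compressed by $\Pi_k$.
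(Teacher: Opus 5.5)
Your Step 1 expansion is wrong, and the ``main obstacle'' of Step 2 is an artifact of it. Since $\Phi(X)=\sum_k\Pi_kX\Pi_k$ uses the \emph{same} block projector on both sides, the block indices are tied across the two tensor factors:
\[
\Phi\CE^*\CE\Phi=\sum_{k,l}\sum_{i,j}A^{kl}_{ij}\otimes\overline{A^{kl}_{ij}},\qquad A^{kl}_{ij}=\Pi_kE_i^{\dagger}E_j\Pi_l .
\]
Your sum over independent $k,l,k',l'$ is the expansion with $\Pi\otimes\overline{\Pi}$ in place of $\Phi$, and these differ as soon as $K>1$. Your way of removing the resulting cross terms is also not a valid argument. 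Simultaneous diagonalization does not make $\sum_{ij}\alpha^k_{ij}\overline{\alpha^{k'}_{ij}}$ vanish: if all $\alpha^k$ are equal, it equals $\|\alpha^k\|_F^2\neq0$. Declaring that the coefficient is ``interpreted as restricted to $k=k'$'' simply assumes the conclusion. Under your own expansion the KL identity would actually fail, so Step 2 does not prove the reverse implication as written. With the correct expansion it is one line: \eqref{eq:class_KL} kills every $k\neq l$ term and leaves $\sum_k\bigl(\sum_{ij}|\alpha^k_{ij}|^2\bigr)\Phi_k$. Here $a_k=\|\alpha^k\|_F^2>0$ because $\sum_i\alpha^k_{ii}=1$ forces $\alpha^k\neq0$. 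This is exactly the paper's argument.

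In Step 3 the decisive claim is asserted rather than derived: that $\langle\chi|E_i^{\dagger}E_j|\psi\rangle$ must be proportional to $\langle\chi|\psi\rangle$ within a block and vanish across blocks. The missing step is a Kraus-rank argument. Compose the KL identity with $\Phi_k$ on the left and $\Phi_l$ on the right to isolate $\sum_{ij}A^{kl}_{ij}\otimes\overline{A^{kl}_{ij}}=\delta_{kl}a_k\Pi_k\otimes\overline{\Pi}_k$. For $k\neq l$, a zero CP map has all Kraus operators zero. For $k=l$, the right-hand side has Kraus rank one, so each $A^{kk}_{ij}$ is a scalar multiple of $\Pi_k$. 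Your matrix-element pairing can be completed equivalently. For $\chi,\omega,\psi,\phi\in\Im\Pi_k$ it gives $\sum_{ij}\langle\chi|E_i^{\dagger}E_j|\psi\rangle\overline{\langle\omega|E_i^{\dagger}E_j|\phi\rangle}=a_k\langle\chi|\psi\rangle\overline{\langle\omega|\phi\rangle}$. Hence the linear map $\overline{\chi}\otimes\psi\mapsto(\langle\chi|E_i^{\dagger}E_j|\psi\rangle)_{ij}$ has a rank-one Gram operator and factors through $\langle\chi|\psi\rangle$. Once that is in place, your Hermiticity, positivity (Gram matrix of the $E_i\psi$) and normalization arguments are fine.
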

\begin{proof}
We begin with the reverse direction. Assuming eq.~\eqref{eq:class_KL}, we have
\begin{align}
    \Phi\CE^*\CE\Phi&=\sum_{kl}\sum_{ij} \Pi_k E_{i}^{\dagger}E_j\Pi_l \otimes \overline{\Pi_k E_{i}^{\dagger}E_j\Pi_l}\\ 
    &= \sum_{k}\sum_{ij} |\alpha^{k}_{ij}|^2 \Phi_k.
\end{align}
This satisfies eq.~\eqref{eq:KL}.  
For the forward implication, eq.~\eqref{eq:KL} gives the following identity for $\CE$ with Kraus operators $\lrc{E_i}_i$ and $A_{ij}^{kl}=\Pi_k E_i^{\dagger}E_j\Pi_l $:
\begin{equation}
    \sum_{ij}A_{ij}^{kl} \otimes \overline{A_{ij}^{kl}} = \delta_{kl} a_k \Pi_k\otimes \overline{\Pi}_k.
\end{equation}
For each pair $k=l$, the r.h.s. is a CP map of Kraus rank $1$, hence $A_{ij}^{kl} = \alpha^{kl}_{ij} \Pi_k$ for some $\alpha^{kl}_{ij} \in \C$. For $k\neq l$, a zero CP map has every Kraus operator zero,
so $A_{ij}^{kl}=0$. For each $k$, write $\alpha^k_{ij}=\alpha^{kk}_{ij}$. Then $\alpha^k$ is a Hermitian matrix with $\|\alpha^k\|^2_F=a_k$.
For a state $\ket{v}\in\Im\Pi_k$ and any finitely supported
coefficient vector $c$,
\begin{equation}
 c^\dagger\alpha^kc=\|\sum_i c_iE_i\ket{v}\|^2\geq0,
 \qquad
 \sum_i\alpha^k_{ii}
 = \braket{v|\sum_iE_i^\dagger E_i|v}=1.
\end{equation}
Thus $0<a_k=\Tr[(\alpha^k)^2]\leq1$. Since physical channels have a countable number of Kraus operators, Kraus sums converge on trace-class inputs in trace norm, and $\sum_iE_i^\dagger E_i=I$ converges strongly.
\end{proof}
The constant $a_k=\|\alpha^k\|^2_F$ can be interpreted as the purity of the quantum state over the different syndrome subspaces (see, e.g., ref.~\cite{Du2025} for a discussion of this interpretation).

In the simplest case, the assumption $\Phi\CE^*\CE\Phi=c\Phi$ implies the existence of a map $\CR_0=\Phi\CE^*$ satisfying eq.~\eqref{eq:correctable}. This definition, however, does not yield a valid quantum channel, which is required to be TP. We repair this using the following construction, which gives the classical transpose channel~\cite{Ng_2010} $\CR_T=\CR_0\CN$ on the error support.

\begin{lemma}[Normalization of HS isometries]
    Let $\Phi$ be an SF-idempotent with $\Pi=\Phi(I)$ and let $\CE$ be a channel such that $\CE(\Pi)\geq \epsilon Q >0$ for some $\epsilon>0$, where $Q=P_{{\rm supp} (\CE(\Pi))}$.  Then, with $\Pi=\sum_{k=1}^K\Pi_k$, the KL condition $\Phi\CE^*\CE\Phi=c\Phi$ implies that
\begin{equation}
\CN: B(\CH)\to B(\CH),\quad \CN(X)=\CE^{-\frac{1}{2}}(\Pi)\vert_QX\CE^{-\frac{1}{2}}(\Pi)\vert_Q,
\end{equation}
is a CP map with $\CN \CE \Phi (I)=\CN (\CE(\Pi))=Q$. Moreover, the transpose recovery $\CR_T=\Phi\CE^*\CN$ satisfies $\CR_T\CE\Phi=\Phi$.
\end{lemma}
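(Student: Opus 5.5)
We work directly with Kraus operators. Write $\CE(X)=\sum_i E_iXE_i^\dagger$ in the Schrödinger picture, so the dual is $\CE^*(Y)=\sum_i E_i^\dagger Y E_i$. Put $\Phi(X)=\sum_k \Pi_kX\Pi_k$ and $M\coloneqq\CE(\Pi)^{-1/2}|_Q$, the inverse square root taken on $Q=P_{{\rm supp}(\CE(\Pi))}$ and set to zero on $Q^\perp$.

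\textbf{Step 1: $\CN$ is well defined and CP.} The hypothesis $\CE(\Pi)\geq\epsilon Q$ means that on $Q\CH$ the spectrum of $\CE(\Pi)$ lies in $[\epsilon,\infty)$. By the bounded functional calculus, $M$ is therefore a bounded positive operator with $\|M\|\leq\epsilon^{-1/2}$. The map $\CN(X)=MXM$ has the single Kraus operator $M$, so it is CP. Also $M\CE(\Pi)M=Q$, because $Q$ commutes with $\CE(\Pi)$ and $M$. Since $\Phi(I)=\Pi$, this gives $\CN\CE\Phi(I)=\CN(\CE(\Pi))=Q$. The assumption $\epsilon>0$ is used only to make $M$ bounded, so that everything stays in $B(\CH)$.

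\textbf{Step 2: identify $c$ via the preceding lemma.} With $\Phi\CE^*\CE\Phi=c\Phi$, the equivalence to subspace KL conditions (with all $a_k=c$) gives
\begin{equation}
\Pi_kE_i^\dagger E_j\Pi_l=\delta_{kl}\alpha^k_{ij}\Pi_k .
\end{equation}
We use it in the block-diagonal form $\Pi E_i^\dagger E_j\Pi=\sum_k\alpha^k_{ij}\Pi_k$.

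\textbf{Step 3: compute $\CR_T\CE\Phi$.} Expand
\begin{equation}
\CR_T\CE\Phi(X)=\sum_{k}\sum_{i,j,l,m}\Pi_kE_i^\dagger M E_j\Pi_l X\Pi_l E_m^\dagger M E_i\Pi_k .
\end{equation}
Everything reduces to controlling the operators $B_{ij}\coloneqq\Pi E_i^\dagger M E_j\Pi$, and this is the step I expect to be the main obstacle. The normalizing operator $M$ sits between the Kraus operators, so the KL relation cannot be applied to $E_i^\dagger E_j$ directly. My plan is as follows.

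\begin{itemize}
\item[(a)] Note that $\operatorname{ran}(E_j\Pi)\subset Q\CH$, since $E_j\Pi E_j^\dagger\leq\CE(\Pi)$. So $M$ may be replaced by its restriction to the closed span $\mathcal{S}=\overline{\operatorname{span}}\{E_j\Pi\CH\}$.
\item[(b)] Use the KL relations to show that the Gram structure of the vectors $E_j\ket{v}$, $\ket{v}\in\Im\Pi_k$, is independent of $\ket{v}$. Concretely, define $W:\C^{J}\otimes\Im\Pi\to Q\CH$ by $\ket{j}\otimes\ket{v}\mapsto E_j\ket{v}$. Then $W^\dagger W=\bigoplus_k\alpha^{k\,T}\otimes\Pi_k$ and $\CE(\Pi)=WW^\dagger$.
\item[(c)] Apply the identity $W^\dagger(WW^\dagger)^{-1/2}|_QW=(W^\dagger W)^{1/2}$ (polar decomposition). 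This gives $B_{ij}=\sum_k[(\alpha^{k\,T})^{1/2}]_{ij}\Pi_k$, i.e.\ again a matrix tensored with $\Pi_k$ in each block.
\end{itemize}

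\textbf{Step 4: assemble.} Substituting $B_{ij}$ into Step 3, the $l,m$ sums collapse via $\Pi_k\Pi_l=\delta_{kl}\Pi_k$, and the output is
\begin{equation}
\sum_k\Big(\sum_{i,j}|[(\alpha^{k})^{1/2}]_{ij}|^2\Big)\Pi_kX\Pi_k .
\end{equation}
The inner sum is $\Tr\alpha^k=\sum_i\alpha^k_{ii}=1$ by the preceding lemma, so the result is $\Phi(X)$, i.e.\ $\CR_T\CE\Phi=\Phi$.

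Convergence of the infinite Kraus sums is handled exactly as in the preceding lemma: trace-norm convergence on trace-class inputs, and strong convergence of $\sum_iE_i^\dagger E_i$. Boundedness of $M$ from Step 1 keeps every intermediate sum controlled.
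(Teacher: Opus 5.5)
Your proof is correct. The paper's own proof only asserts that trace preservation on $Q$ and $\CR_T\CE\Phi=\Phi$ follow ``by direct computation using the KL conditions and $\Phi(I)=\Pi$,'' and points to Ng--Mandayam. Their standard route uses the unitary freedom of Kraus operators to diagonalize the KL matrix. The $E_j\Pi$ then become scaled partial isometries onto mutually orthogonal error subspaces, on which $\CE(\Pi)^{-1/2}$ acts by scalars. Your identity $W^\dagger(WW^\dagger)^{-1/2}|_QW=(W^\dagger W)^{1/2}$ does the same job without choosing a basis. It never re-mixes Kraus operators, and it handles blocks with different $\alpha^k$ simultaneously; the diagonalization route would have to be run block by block, using the orthogonality of the blocks' error images. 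It also makes transparent that the normalization replaces $\alpha^k$ by $(\alpha^k)^{1/2}$. This turns the KL constant $\Tr[(\alpha^k)^2]=c$ of the naive map $\Phi\CE^*$ into $\Tr\alpha^k=1$.

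There are two slips to fix. First, in the Step 3 expansion the right-hand Kraus factor must carry the same index as the left one: it should be $E_j^\dagger$, not an independently summed $E_m^\dagger$. With an independent $m$, the collapse in Step 4 would produce $\sum_{i,j}\alpha^k_{ij}$ instead of $\Tr\alpha^k$; your Step 4 result implicitly uses $m=j$. Second, with your definitions $W(|j\rangle\otimes|v\rangle)=E_j|v\rangle$ and $\Pi_kE_i^\dagger E_j\Pi_k=\alpha^k_{ij}\Pi_k$, one gets $W^\dagger W=\bigoplus_k\alpha^k\otimes\Pi_k$ with no transpose. This is harmless, since only $\Tr\alpha^k$ enters the final answer.
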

\begin{proof}
    Denote by $\CE^{-1/2}(\Pi)$ the inverse square root of $\CE(\Pi)$ on its support $Q$ and $0$ elsewhere. Since $\CE^{-1/2}(\Pi)$ is self-adjoint, $\CN$ is CP. Trace preservation on $Q$ and the recovery identity follows from direct computation using the KL conditions and $\Phi(I)=\Pi$ (see also ref.~\cite{Ng_2010}).
\end{proof}

The normalization implements a CPTP recovery on the support $Q$, and can be lifted to a full CPTP map $ \CR(X)=\CR_T(X)+\Tr[(I-Q)X]\rho_0$ for any auxiliary density operator $\rho_0$. 
A crucial aspect of the KL condition is that it enables isometric transport of frames, not just between a reference framed idempotent and its error image, but also between different idempotents. Following the same proof strategy as before, for $\CE_s=\CE_t$ the cross KL condition implies nonnegative block-scalar overlaps between normalized code frames.  Strict positivity of these input overlaps will additionally
be required to identify their full frame transport with a polar factor.
\begin{lemma}[Cross KL]\label{lem:crossKL}
    Let $\CE$ be a quantum channel and let $(\Phi_{x}, \Delta_x, \Lambda_x),$ $(\Phi_{y}, \Delta_y, \Lambda_y) \in \overline{\Gr}_A\lr{\CH}$ be a pair of normalized framed idempotents. If there exists $\bs{a}\in\R_{\geq 0}^K$ such that
    \begin{equation}
        (\CE\Delta_x)^*\CE\Delta_y = \mathfrak{m}_{\bs{a}},
    \end{equation}
    there exists $\bs{b}\in\R_{\geq 0}^K$ such that
    \begin{equation}
        \Lambda_x\Delta_y=\mathfrak{m}_{\bs{b}},\quad \bs{b}\in \R_{\geq 0}^K.
    \end{equation}
    Moreover, if $\epsilon \coloneqq\|\Phi_x-\Phi_y\|_{\rm cb}<1$, it holds that $b_k\geq \sqrt{1-\epsilon^2}>0\;\forall k=1,\hdots,K$.
\end{lemma}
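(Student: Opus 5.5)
The plan is to mirror the proof of the subspace KL lemma, but to work with the cross Gram map $\Lambda_x\CE^*\CE\Delta_y$. Start from the hypothesis $(\CE\Delta_x)^*\CE\Delta_y=\Delta_x^*\CE^*\CE\Delta_y=\mathfrak{m}_{\bs a}$. Write $\CE=\sum_i E_i\otimes\overline{E}_i$ in Kraus form. Use Stinespring to write the encoders as $\Delta_x(a)=\sum_\mu V_{x,\mu}^\dagger \pi(a) V_{x,\mu}$, and in the SF case reduce to block isometries $V_{x,k}$. The hypothesis then says that the CP map with Kraus operators $A^{kl}_{ij}=V_{x,k}^\dagger E_i^\dagger E_j V_{y,l}$ equals $\sum_k a_k p_k$.

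This is a Kraus-rank-one map on each block, or zero. By the same uniqueness-of-Kraus argument as before, we get
\begin{equation}
A^{kl}_{ij}=\delta_{kl}\,\alpha^k_{ij} U_k ,
\end{equation}
where $U_k$ is a fixed unitary on $K_k$. Using the $*$-automorphism freedom $g\in\Aut(A)$ of the frame, absorb $U_k$ so that $U_k=I_{K_k}$. Summing against the completeness relation $\sum_i E_i^\dagger E_i=I$ (the diagonal part), the trace $\sum_i\alpha^k_{ii}$ then gives
\begin{equation}
V_{x,k}^\dagger V_{y,l}=\delta_{kl}\, b_k I_{K_k},
\end{equation}
with $b_k=\sum_i\alpha^k_{ii}$. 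Hence $\Lambda_x\Delta_y=\Delta_x^*\Delta_y=\sum_k |b_k|^2 p_k$, which is of the form $\mathfrak{m}_{\bs b}$ with nonnegative entries. An alternative route, which avoids frame gauge issues, is to observe that $\Lambda_x\Delta_y$ is CP and commutes with the block structure. One then compares it to $\mathfrak{m}_{\bs a}$ through positivity of $\CE^*\CE\leq\|\CE^*\CE\|$, and reads off block-diagonality from the vanishing of off-diagonal Kraus blocks. I would present the Kraus version, as in the earlier lemma.

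For the lower bound, use $\Lambda_y\Delta_y=\id_A$ and $\Lambda_x\Delta_x=\id_A$. Write
\begin{equation}
\Lambda_x\Delta_y=\Lambda_x\Phi_x\Phi_y\Delta_y .
\end{equation}
Then compare $\Lambda_x\Delta_y\,\Lambda_y\Delta_x=\Lambda_x\Phi_y\Delta_x$ with $\Lambda_x\Phi_x\Delta_x=\id_A$. Since $\|\Lambda_x\|_{\rm cb},\|\Delta_x\|_{\rm cb}\leq 1$ for normalized frames (norms set by $\Delta(1_A)$, which is a projection in the SF case), this gives
\begin{equation}
\|\mathfrak{m}_{\bs b}\mathfrak{m}_{\bs b'}-\id_A\|_{\rm cb}\leq\epsilon ,
\end{equation}
where $\bs b'$ is the reverse overlap and $b'_k=b_k$ by adjointness. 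This yields $b_k^2\geq 1-\epsilon$ on each block.

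To sharpen this to $\sqrt{1-\epsilon^2}$, I would instead use a Pythagorean decomposition. Split $V_{y,k}=V_{x,k}V_{x,k}^\dagger V_{y,k}+(1-\Pi_x)V_{y,k}$, so that $1=|b_k|^2+\|(1-\Pi_x)V_{y,k}\|^2$ on unit vectors. Then bound the perpendicular part by $\epsilon$ using $\|(\Phi_y-\Phi_x)\Phi_y\|_{\rm cb}\leq\epsilon$, evaluated on a rank-one input $V_{y,k}\ket v\bra v V_{y,k}^\dagger$. This gives $\|(1-\Pi_x)V_{y,k}\ket v\|\leq\epsilon$, hence $|b_k|\geq\sqrt{1-\epsilon^2}$. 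The same estimate applies to $b_k$ as defined through the $\mathfrak{m}_{\bs b}$ convention, after fixing the phase to be positive via the gauge choice.

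The main obstacle is the bookkeeping of frames. The hypothesis fixes the frames only up to the $\Aut(A)$ ambiguity, so the claim that $\Lambda_x\Delta_y$ is exactly diagonal with nonnegative entries requires identifying the unitaries $U_k$ and showing that they are trivial, or that they were already trivialized by the given normalization. If the frames are fixed a priori, I would argue that $U_k$ must be scalar because $\alpha^k$ is nonzero whenever $a_k>0$. I would then make the phase nonnegative using the positivity of $\sum_i\alpha^k_{ii}$. The degenerate case $a_k=0$ is where nonnegativity rather than positivity of $\bs b$ enters.
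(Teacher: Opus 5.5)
\paragraph{The gap: the lower bound.} In your own notation, the block-$k$ entry of $\mathfrak{m}_{\bs b}=\Lambda_x\Delta_y$ is $|b_k|^2$, not $b_k$. No phase choice changes a modulus squared. So your Pythagorean estimate $\|(1-\Pi_x)V_{y,k}v\|\le\epsilon$, i.e.\ $|b_k|\ge\sqrt{1-\epsilon^2}$, only shows that the multiplier entry is $\ge 1-\epsilon^2$. That is strictly weaker than the claimed $\sqrt{1-\epsilon^2}$.

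Better bookkeeping does not fix this. Take $A=\C$, $\Delta_x(a)=a|\psi\rangle\langle\psi|$, $\Delta_y(a)=a|\phi\rangle\langle\phi|$ with $|\langle\psi|\phi\rangle|=\cos\theta$. The entry is $\cos^2\theta$, and the claim needs $\epsilon^2\ge 1-\cos^4\theta\approx 2\theta^2$. But applying $(\Phi_y-\Phi_x)\Phi_y$ once to $|\phi\rangle\langle\phi|$ only certifies $\epsilon\gtrsim\sin\theta$.

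The missing idea is to use the square. You have $\Lambda_x\Delta_y=\mathfrak{m}_{\bs b}$, and by adjointness of normalized frames $\Lambda_y\Delta_x=\mathfrak{m}_{\bs b}^*=\mathfrak{m}_{\bs b}$. Expanding $(\Phi_x-\Phi_y)^2=\Phi_x-\Phi_x\Phi_y-\Phi_y\Phi_x+\Phi_y$ then gives $(\Phi_x-\Phi_y)^2\Delta_y u=\Delta_y(\mathrm{id}_A-\mathfrak{m}_{\bs b}^2)u$. For nonzero $u$ in the $k$-th block, $\Delta_y u\neq 0$ is therefore an eigenvector with eigenvalue $1-b_k^2$, so $1-b_k^2\le\|(\Phi_x-\Phi_y)^2\|_{\mathrm{cb}}\le\epsilon^2$. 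This works for arbitrary normalized frames. It also avoids your appeal to $\|\Delta_x\|_{\mathrm{cb}},\|\Lambda_x\|_{\mathrm{cb}}\le 1$, which you justified only in the SF case.

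\paragraph{Secondary issues in the block-diagonality part.} This part is essentially the paper's Kraus-comparison argument, but you should drop the unitary $U_k$ and the gauge fixing.
\begin{itemize}
\item The map $a_k p_k$ has the single Kraus operator $\sqrt{a_k}\,1_{K_k}$, and any other Kraus family consists of scalar combinations of it. So each $V_{x,k}^\dagger E_i^\dagger E_j V_{y,l}$ is already $\delta_{kl}$ times a scalar multiple of $1_{K_k}$; there is no $U_k$ to remove.
\item Regauging by $g\in\mathrm{Aut}(A)$ would be illegitimate anyway. Both the hypothesis and the conclusion concern the given frames, and both change under $\Delta_y\mapsto\Delta_y g$.
\item The lemma is not restricted to SF idempotents. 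Take several Kraus operators $M_{k,i},N_{k,j}:K_k\to\mathcal{H}$ per block, as the paper does. The entries then come out as $b_k=\sum_{i,j}\bigl|\sum_\alpha c^k_{ij,\alpha\alpha}\bigr|^2\ge 0$, with no phase choice needed.
\end{itemize}
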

\begin{proof}
    This proof follows the same strategy as the previous one. By assumption, $\Delta_x, \Delta_y$ are CP, thus there exist Kraus operators $M_{k,i},\, N_{k,j}:\, K_k\to \CH$ such that 
    \begin{equation}
        \sum_{i,j,k, l, \alpha,\beta} M_{k,i}^{\dagger}E_{\alpha}^{\dagger}E_{\beta}N_{l,j} \otimes \overline{M_{k,i}^{\dagger}E_{\alpha}^{\dagger}E_{\beta}N_{l,j}}=\sum_{k} a_k p_k.
    \end{equation}
    Again, this expression implies that on the l.h.s., for $k\neq l$ we have $M_{k,i}^{\dagger}E_{\alpha}^{\dagger}E_{\beta}N_{l,j}=0$. For $k=l$, comparing Kraus ranks shows that there are $c_{ij,\alpha\beta}^k\in \C$ such that
    \begin{equation}
    M_{k,i}^{\dagger}E_{\alpha}^{\dagger}E_{\beta}N_{l,j} = \delta_{kl}c_{ij,\alpha\beta}^k 1_{K_k},\quad a_k=\sum_{i,j,\alpha,\beta}|c_{ij,\alpha\beta}^k|^2. \label{eq:crossKL_cond_proof}
    \end{equation}
If $a_k=0$, all the corresponding Kraus operators vanish and we take $c_{ij,\alpha\beta}^k=0$.
Summing over the diagonal indices $\alpha=\beta$ in eq.~\eqref{eq:crossKL_cond_proof} using $\sum_{\alpha}E^{\dagger}_{\alpha}E_{\alpha}=I$ then gives
    \begin{equation}
        M_{k,i}^{\dagger}N_{l,j}=\delta_{kl}\lrq{\sum_{\alpha}c_{ij,\alpha\alpha}^k}1_{K_k},
    \end{equation}
    and hence
    \begin{align}
        \Lambda_x\Delta_y&=\sum_{i,j,k,l} M_{k,i}^{\dagger}N_{l,j}\otimes \overline{M_{k,i}^{\dagger}N_{l,j}}\\
        &=\sum_k b_k p_k,\quad b_k=\sum_{i,j}\left\vert\sum_{\alpha}c_{ij,\alpha\alpha}^k \right\vert^2.
    \end{align}
    Finally, let $Y=\Delta_y u,\; u\in A$. We have
    \begin{align}
        (\Phi_x-\Phi_y)^2Y&=(\Phi_x - \Phi_x\Phi_y-\Phi_y\Phi_x+\Phi_y )\Delta_y u \nonumber \\ 
        &=\Delta_y (\id_A-\mathfrak{m}_{\bs{b}}^2)u. 
    \end{align}
    Set $u\in \Im p_k$. Then $(\Phi_x-\Phi_y)^2Y = (1-b_k^2)Y$. Taking norms yields
    \begin{equation}
        1-b_k^2 \leq \|(\Phi_x-\Phi_y)^2\|_{\rm cb} \leq \epsilon^2,
    \end{equation}
    which implies the final inequality.
\end{proof}

\begin{theorem}[Geometric KL transport]\label{thm:GeoKL}
    Let $\CE_s, \CE_t$ be a pair of quantum channels acting on $S_1(\CH)$ and consider a pair of framed idempotents $(\Phi_{x}, \Delta_x, \Lambda_x)$, $(\Phi_{y}, \Delta_y, \Lambda_y)$ such that $\Phi_{x}, \Phi_{y}$ restrict to orthogonal projections on $S_2(\CH)$ and have ranges in $S_1(\CH)$. Assume there exist $\bs{a}, \bs{b}\in \R_{>0}^K$ and $\gamma\in \R_{>0}$ such that
    \begin{align}
        (\CE_s\Phi_x)^*\CE_s\Phi_x = D_{x,\bs{a}},\quad (\CE_t\Phi_y)^*\CE_t\Phi_y = D_{y,\bs{b}}\label{eq:xy_KL_pair}
    \end{align} 
    and
    \begin{equation}
        (\CE_s\Phi_x)^*\CE_t\Phi_y = \gamma \Delta_x\Lambda_y,\quad  (\CE_t\Phi_y)^*\CE_s\Phi_x = \gamma \Delta_y\Lambda_x.\label{eq:xy_transport_cond}
    \end{equation} 
    
Let $\CW_{x}=\CE_sD_{x,\bs{a}}^{-\frac{1}{2}}, \CW_{y}=\CE_t D_{y,\bs{b}}^{-\frac{1}{2}}$. For $z=x,y$ and $Q_{z}=\CW_{z}\CW_{z}^*$, the map 
\begin{equation}\CW_{z}\cdot (\Phi_{z}, \Delta_z, \Lambda_z)=(Q_{z}, \CW_{z}\Delta_z, \Lambda_z \CW_{z}^*)\label{eq:W_map}
\end{equation}
defines isometries between $\operatorname{Ran}\Phi_z$ and $ \operatorname{Ran}Q_z$ such that the diagram

    \begin{equation}
\begin{tikzcd}[column sep=large, row sep=huge]
& A
  \arrow[dl, "\Delta_y"']
  \arrow[dr, "\Delta_x"]
&
\\
\operatorname{Ran}\Phi_y
  \arrow[rr, "{U_{x\leftarrow y}=\Delta_x\Lambda_y}"]
  \arrow[d, bend right=45, "\CW_y"']
&&
\operatorname{Ran}\Phi_x
  \arrow[d, bend left=45, "\CW_x"]
\\
\operatorname{Ran}Q_y
  \arrow[rr, "{J_{x\leftarrow y}}"']
  \arrow[u, bend right=45, "\CW_y^*"']
&&
\operatorname{Ran}Q_x
  \arrow[u, bend left=45, "\CW_x^*"]
\end{tikzcd}
\end{equation}
commutes, where, with $|A|=(A^*A)^{\frac{1}{2}}$ and inverse $|A|^{-1}$ defined to act as $0$ on the kernel of $A$,
\begin{equation}
    J_{x\leftarrow y} = \CW_x\Delta_x\Lambda_y\CW_y^*=Q_xQ_y|Q_xQ_y|^{-1}. \label{eq:J_geom_Jfr}
\end{equation}
\end{theorem}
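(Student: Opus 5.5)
The plan is to reduce everything to the two frame maps $T_x\coloneqq\CW_x\Delta_x$ and $T_y\coloneqq\CW_y\Delta_y$, viewed as maps from $A$ (with $\langle\cdot,\cdot\rangle_A$) into $S_2(\CH)$. I will show that both are isometries onto $\operatorname{Ran}Q_x$ and $\operatorname{Ran}Q_y$ respectively. Then $J_{x\leftarrow y}=T_xT_y^*$, and the polar formula becomes a short computation with diagonal multipliers $\mathfrak m$ on $A$.

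Throughout I take the frames to be normalized, $\Lambda_z=\Delta_z^*$ with respect to the HS and $\tau_A$ inner products. Since $\Phi_z$ restricts to an orthogonal projection on $S_2(\CH)$ and $\Lambda_z\Delta_z=\id_A$, each $\Delta_z$ is an HS-isometry with $\Delta_z\Delta_z^*=\Phi_z$. All operators are then bounded on $S_2(\CH)$, because the ranges are finite-dimensional and lie in $S_1\subset S_2$.

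\textbf{Step 1: functional calculus for $D_{z}$.} Since $D_{x,\bs a}=\Delta_x\mathfrak m_{\bs a}\Lambda_x$ and $\Lambda_x\Delta_x=\id_A$, I would define $D_{x,\bs a}^{-1/2}\coloneqq\Delta_x\mathfrak m_{\bs a}^{-1/2}\Lambda_x$, the pseudo-inverse square root supported on $\operatorname{Ran}\Phi_x$. This is well defined because $\bs a\in\R^K_{>0}$, and it satisfies $D^{-1/2}DD^{-1/2}=\Phi_x$ and $\Phi_xD^{-1/2}=D^{-1/2}\Phi_x=D^{-1/2}$. Using \eqref{eq:xy_KL_pair}, this gives
\begin{equation}
\CW_x^*\CW_x=D_{x,\bs a}^{-1/2}(\CE_s\Phi_x)^*\CE_s\Phi_xD_{x,\bs a}^{-1/2}=\Phi_x .
\end{equation}
Hence $\CW_x$ is a partial isometry with initial space $\operatorname{Ran}\Phi_x$, so $Q_x=\CW_x\CW_x^*$ is an orthogonal projection. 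It follows that $\CW_x:\operatorname{Ran}\Phi_x\to\operatorname{Ran}Q_x$ is unitary with inverse $\CW_x^*$, and the same holds for $y$. This proves the isometry claim for \eqref{eq:W_map}. In addition, $T_z^*T_z=\Lambda_z\Phi_z\Delta_z=\id_A$ and $T_zT_z^*=\CW_z\Phi_z\CW_z^*=Q_z$.

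\textbf{Step 2: commutativity.} Set $J\coloneqq\CW_x\Delta_x\Lambda_y\CW_y^*=T_xT_y^*$. Then $J\CW_y=\CW_x\Delta_x\Lambda_y\Phi_y=\CW_x U_{x\leftarrow y}$, using $\Lambda_y\Phi_y=\Lambda_y\Delta_y\Lambda_y=\Lambda_y$. Similarly $\CW_x^*J=\Phi_x\Delta_x\Lambda_y\CW_y^*=U_{x\leftarrow y}\CW_y^*$. Together with $\CW_z^*\CW_z=\Phi_z$, this gives commutativity of both the downward and upward squares. The top triangle, $U_{x\leftarrow y}\Delta_y=\Delta_x$, holds by the frame relations.

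\textbf{Step 3: polar formula.} Using \eqref{eq:xy_transport_cond},
\begin{equation}
\CW_x^*\CW_y=D_{x,\bs a}^{-1/2}\,\gamma\Delta_x\Lambda_y\,D_{y,\bs b}^{-1/2}=\gamma\,\Delta_x\mathfrak m_{\bs c}\Lambda_y,\qquad c_k=(a_kb_k)^{-1/2}>0,
\end{equation}
where the equality uses that the $\mathfrak m$'s commute. Hence
\begin{equation}
Q_xQ_y=\CW_x(\CW_x^*\CW_y)\CW_y^*=T_x(\gamma\mathfrak m_{\bs c})T_y^*.
\end{equation}
Since $T_y$ is an isometry, $|Q_xQ_y|^2=T_y\gamma^2\mathfrak m_{\bs c}^2T_y^*$, so $|Q_xQ_y|=T_y\gamma\mathfrak m_{\bs c}T_y^*$. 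Its kernel is exactly $(\operatorname{Ran}Q_y)^\perp$ because all $c_k>0$, which gives $|Q_xQ_y|^{-1}=T_y\gamma^{-1}\mathfrak m_{\bs c}^{-1}T_y^*$. Multiplying out,
\begin{equation}
Q_xQ_y|Q_xQ_y|^{-1}=T_x\mathfrak m_{\bs c}\mathfrak m_{\bs c}^{-1}T_y^*=T_xT_y^*=J_{x\leftarrow y},
\end{equation}
which is \eqref{eq:J_geom_Jfr}.

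\textbf{Main obstacle.} I expect the delicate step to be the analytic bookkeeping in Steps 1 and 3, not the algebra. One must check that adjoints of the CP maps, taken under trace duality, coincide with HS adjoints on the relevant finite-dimensional ranges. One must also check that $\CE_s$, which is defined on $S_1$, composes correctly with maps landing in $\operatorname{Ran}\Phi_x\subset S_1$. This is where the hypotheses that $\Phi_z$ are HS-orthogonal projections with ranges in $S_1$ enter. I would first restrict every operator to the finite-dimensional spaces $\operatorname{Ran}\Phi_z$ and $\operatorname{Ran}Q_z$, where these identifications are elementary.
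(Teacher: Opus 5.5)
Your algebra is correct and follows the paper's route. Both proofs rewrite $Q_xQ_y=\gamma\,\CW_x\Delta_x\mathfrak m_{\bs c}\Lambda_y\CW_y^*$ with $c_k=(a_kb_k)^{-1/2}$, identify $|Q_xQ_y|=\gamma\,\CW_y\Delta_y\mathfrak m_{\bs c}\Lambda_y\CW_y^*$ by spectral calculus, and cancel the block multipliers; $T_z=\CW_z\Delta_z$ is just a tidy packaging of this. The gap is your standing assumption $\Lambda_z=\Delta_z^*$, which the theorem does not make. It is stated for arbitrary framed idempotents, and the paper restricts to normalized frames only afterwards, when it combines the theorem with Lemma~\ref{lem:crossKL}.

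The restriction genuinely loses cases. Take two normalized frames satisfying the hypotheses and apply the same block rescaling $\Delta_z\mapsto\Delta_z\mathfrak m_{\bs e}$, $\Lambda_z\mapsto\mathfrak m_{\bs e}^{-1}\Lambda_z$, with $e_k>0$ not all equal to $1$. This preserves every hypothesis and complete positivity, but destroys normalization. Your Step~3 really uses normalization. In general $\Lambda_y\CW_y^*\neq T_y^*$ and $T_x^*T_x=\Delta_x^*\Delta_x\neq\id_A$. So $Q_xQ_y=\gamma T_x\mathfrak m_{\bs c}T_y^*$ and $|Q_xQ_y|^2=\gamma^2T_y\mathfrak m_{\bs c}^2T_y^*$ fail as written, and positivity of $\Delta_y\mathfrak m_{\bs c}\Lambda_y$, which you need to extract the square root, is no longer automatic. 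A symptom is that you never use the second identity in \eqref{eq:xy_transport_cond}. Under normalization it is just the adjoint of the first, but in general it is the real constraint $(\Delta_x\Lambda_y)^*=\Delta_y\Lambda_x$.

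You can close the gap in either of two ways.

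\emph{Argue as the paper does, without $T_y^*$.} The operators $D_{x,\bs a}^{-1/2}=\Delta_x\mathfrak m_{\bs a}^{-1/2}\Lambda_x$ and $D_{y,\bs b}^{-1/2}$ are functional-calculus powers of self-adjoint Gram operators, so they are self-adjoint without normalization. The second transport identity then gives $(\Delta_x\mathfrak m_{\bs c}\Lambda_y)^*=\Delta_y\mathfrak m_{\bs c}\Lambda_x$, hence $Q_yQ_xQ_y=\gamma^2\CW_y\Delta_y\mathfrak m_{\bs c}^2\Lambda_y\CW_y^*$. Positivity of $\Delta_y\mathfrak m_{\bs c}\Lambda_y=D_{y,\bs a}^{-1/2}D_{y,\bs b}^{-1/2}$ follows because $D_{y,\bs a}=U_{x\leftarrow y}^*D_{x,\bs a}U_{x\leftarrow y}\geq0$ commutes with $D_{y,\bs b}$. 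This is exactly the remark in the paper's proof.

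\emph{Reduce to your case.} Write $\Delta_z=W_zP_z$ with $W_z:A\to\operatorname{Ran}\Phi_z$ unitary and $P_z>0$; orthogonality of $\Phi_z$ gives $\Lambda_z=P_z^{-1}W_z^*$. Self-adjointness of $D_{x,\bs a}$ and $D_{y,\bs b}$ forces $P_x$ to commute with $\mathfrak m_{\bs a}$ and $P_y$ with $\mathfrak m_{\bs b}$. The constraint $(\Delta_x\Lambda_y)^*=\Delta_y\Lambda_x$ forces $P_x=P_y$. Replacing $(\Delta_z,\Lambda_z)$ by $(W_z,W_z^*)$ therefore leaves $D_{x,\bs a}$, $D_{y,\bs b}$, $U_{x\leftarrow y}$, $\CW_z$, $Q_z$ and $J_{x\leftarrow y}$ unchanged. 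Your argument then applies verbatim, since complete positivity of $W_z$ is never used in these identities.
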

\begin{proof}
Put $U_{x\leftarrow y}=\Delta_x\Lambda_y$. The conditions imply $U_{x\leftarrow y}^*=U_{y\leftarrow x}$ and further $U_{x\leftarrow y}^*U_{x\leftarrow y}=\Phi_y,\;U_{x\leftarrow y}U_{x\leftarrow y}^*=\Phi_x$.
Thus $U$ is a partial isometry. Eq.~\eqref{eq:xy_KL_pair} likewise yields
$\CW_z^*\CW_z=\Phi_z$ for $z=x,y$. The supported inverse powers of the code Gram operators are positive central block multipliers in the corresponding frames so $D_{\bs{a}}$, $\CW_z$, and the transported frame maps are CP.
By evaluating the defining equations, we can check that the r.h.s. of eq.~\eqref{eq:W_map} yields a valid framed idempotent and that $\CW_{z}$ defines partial isometries between $\operatorname{Ran}\Phi_z$ and $ \operatorname{Ran}Q_z$. Similarly, direct computation verifies that the lower square in the diagram commutes. The crucial non-immediate step is to show that eq.~\eqref{eq:J_geom_Jfr} holds, which states that the isometry of frames is equal to the geometric transport via projection $Q_xQ_y$ between different frames. 
We first compute
\begin{equation}
    (Q_xQ_y)^*Q_xQ_y = Q_yQ_xQ_y = \gamma^2 \CW_y D^{-1}_{y, \bs{a} \cdot \bs{b}} \CW_y^*,
\end{equation}
where $D^{-1}_{y, \bs{a} \cdot \bs{b}}=D^{-1}_{y, \bs{a}}D^{-1}_{y, \bs{b}}$. Spectral calculus then gives
\begin{equation}
 |Q_xQ_y|^{-1}=\gamma^{-1}\CW_y
 D_{y,(\bs a\cdot\bs b)^{1/2}}\CW_y^*.
\end{equation}
Here $D_{y,\bs a}=U^*D_{x,\bs a}U$ is positive on $\Im\Phi_y$, commutes with $D_{y,\bs b}$, and all inverses vanish on the complementary supports. %
Furthermore, we compute
    \begin{equation}
        Q_xQ_y=\gamma \CW_xD_{x, \bs a}^{-\frac{1}{2}}\Delta_x \Lambda_y D_{y, \bs b}^{-\frac{1}{2}}\CW_y^*=\gamma \CW_x\Delta_x \Lambda_y D_{y, \bs a\cdot \bs{b}}^{-\frac{1}{2}}\CW_y^*,
    \end{equation}
    and, combining these expressions, we obtain
    \begin{equation}
Q_xQ_y|Q_xQ_y|^{-1}=\CW_x\Delta_x\Lambda_y
 D_{y,\bs a\cdot\bs b}^{-\frac12}
 D_{y,(\bs a\cdot\bs b)^{1/2}}\CW_y^*
 =\CW_x\Delta_x\Lambda_y\CW_y^*.
    \end{equation}
\end{proof}
The key observation of this theorem is that the transport condition in eq.~\eqref{eq:xy_transport_cond} ensures that the isometric transport through the error idempotent, after applying the inverse partial isometry, faithfully implements the change of frame $U_{x\leftarrow y}=\Delta_x\Lambda_y$. For $\CE_s=\CE_t$ on normalized frames, this theorem and lemma~\ref{lem:crossKL}
together imply, provided the resulting input overlaps $\Lambda_x\Delta_y$
are strictly positive on every block, that the error isometries intertwine geometric transport between points $x$ and $y$ with geometric transport on their error images. In quantum error correction terms, the general theorem can be interpreted as a Knill-Laflamme condition for code switching. 

To study continuous fault-tolerant protocols, we next construct smooth manifolds from allowed control operations and impose an infinitesimal compatibility condition ensuring that their induced frame transport is horizontal.

\subsection{Smooth correctable spaces and flat connections}
The previous discussion motivates the definition of a correctable subgrassmannian using the conditions \eqref{eq:KL}. In the following we will show, using these conditions, that a subgrassmannian can be constructed that has a naturally flat connection. For notational convenience, we will assume $\Lambda_x=\Delta_x^*$ for every framed idempotent $(\Phi_x, \Delta_x, \Lambda_x)\in\overline{\Gr}_A(\CH)$  in the following.

Let $\FE=\lrc{\CE_1,\hdots,\CE_N}$ be a set of quantum channels acting on $S_1(\CH)$, which we regard as the collection of all accessible channels for the quantum system at hand. We define correctable subsets and neighbourhoods on $\Gr_A(\CH)$.
\begin{definition}[$S\subseteq \FE$-correctable subsets]\label{def:ScorrSub}
    Let $S$ be a set of quantum channels acting on $S_1(\CH)$. A family $W\subset\overline{\Gr}_A(\CH)$ of normalized framed idempotents $(\Phi_{x}, \Delta_x, \Lambda_x=\Delta_x^*)\in \overline{\Gr}_A(\CH)$ is jointly $S$-correctable if, for every pair $x,y$ and every $\CE_s, \CE_t \in S$,
    \begin{equation}
        (\CE_s\Delta_x)^*(\CE_t\Delta_y)=\mathfrak{m}_{\bs{b}_{xy, st}},\quad \bs{b}_{xy, st}\in \R_{>0}^K.\label{eq:xyKL_frames}
    \end{equation}
    We write $\bs{a}_{x,s}=\bs{b}_{xx, ss}$.
\end{definition}
\begin{definition}[$S$-correctable framed neighbourhood]\label{def:ScorrNeigh}
Consider $M\subset \Gr_A(\CH)$ with its relative $\rm cb$-topology. An $S$-correctable framed neighbourhood is a relatively open $U\subset M$ for which there is a continuous section of normalized frames $\overline{s}(\Phi_x)=(\Delta_{\Phi_x},\Lambda_{\Phi_x}=\Delta^*_{\Phi_x} )$ over $U$ such that $(\Phi_x,\overline{s}(\Phi_x))_{\Phi_x \in U}$ is jointly $S$-correctable. If $M$ has a smooth structure, $\overline{s}$ is also required to be smooth.
\end{definition}

Using Lemma~\ref{lem:crossKL}, we show that changes between continuous local KL frames are locally constant on overlaps of open subsets. Choices of frames (indexed by $\alpha, \beta$ in the following) and code points (indexed by $x, y$) are ``glued together'' through the KL transport rules on any $M\subset\Gr_A(\CH)$ admitting such an open cover. This implies that the normalized frame bundle restricted to $M$ is topologically flat.

\begin{theorem}[Flat transport from local KL]\label{thm:ctg_topological}
Let $M\subset\Gr_A(\CH)$, with its relative $\rm cb$ topology, admit an open cover by $S_\alpha$-correctable framed neighbourhoods $U_\alpha$,  for non-empty sets of quantum channels $\emptyset \neq S_\alpha\subseteq\FE$.
Then the local frame transitions are locally constant, and
\begin{equation}
 \mathcal T^\alpha_{y\leftarrow x}
   =\Delta_y^\alpha\Lambda_x^\alpha,
 \qquad \Phi_x,\Phi_y\in U_\alpha,
\end{equation}
defines isometric transport between code ranges along continuous paths in $M$, depending only on their endpoint-fixed homotopy classes. 
In particular,
a base frame at $\Phi_0\in M$ defines the monodromy representation
\begin{equation}
 \rho:\pi_1(M,\Phi_0)\longrightarrow\operatorname{Aut}(A),
 \qquad \rho([\gamma])=\Lambda_0\mathcal T_\gamma\Delta_0.
\end{equation}
\end{theorem}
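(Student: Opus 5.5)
The plan is to build the transport from three local facts on each $U_\alpha$: the frames are block-diagonal relative to one another, $\mathcal T^\alpha$ composes like a groupoid, and $\mathcal T^\alpha$ is continuous. First, for $\Phi_x,\Phi_y\in U_\alpha$, the joint $S_\alpha$-correctability condition \eqref{eq:xyKL_frames} holds with a common channel $\CE_s=\CE_t\in S_\alpha$. Lemma~\ref{lem:crossKL} then gives $\Lambda^\alpha_x\Delta^\alpha_y=\mathfrak m_{\bs b}$ with $\bs b\in\R^K_{\geq0}$. By continuity of the section $\overline s$ and the cb estimate of Lemma~\ref{lem:crossKL}, $b_k\geq\sqrt{1-\epsilon^2}>0$ whenever $\|\Phi_x-\Phi_y\|_{\rm cb}<1$.

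Next I will show that $\bs b=\bs 1$ locally, so that $\mathcal T^\alpha$ is isometric and composes correctly. Since the frames are normalized, $\Lambda_y^\alpha\Delta_x^\alpha=(\Lambda_x^\alpha\Delta_y^\alpha)^*=\mathfrak m_{\bs b}$ as well. Hence $(\mathcal T^\alpha_{y\leftarrow x})^*\mathcal T^\alpha_{y\leftarrow x}=\Delta_x\mathfrak m_{\bs b}^2\Lambda_x$. To force $b_k=1$, I would use the identity $(\Phi_x-\Phi_y)^2\Delta_y u=\Delta_y(1-\mathfrak m_{\bs b}^2)u$ from the proof of Lemma~\ref{lem:crossKL}, together with the fact that the function $y\mapsto b_k(x,y)$ is continuous, equals $1$ at $y=x$, and is block-scalar. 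I expect the cleanest route is to observe that $\mathfrak m_{\bs b}=\Lambda_x\Phi_y\Delta_x$ is a positive contraction and to compare it with $\Lambda_x\Delta_z\Lambda_z\Delta_y$ for a third point $z$. Proving that the $b_k$ equal $1$, rather than being merely positive, is the main obstacle. If this cannot be established, I will instead define the transport as the polar part $\Delta_y\mathfrak m_{\bs b}^{-1}\Lambda_x$. That operator is an isometry $\operatorname{Ran}\Phi_x\to\operatorname{Ran}\Phi_y$ and, by Theorem~\ref{thm:GeoKL}, coincides with $\Phi_y\Phi_x|\Phi_y\Phi_x|^{-1}$. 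Either way, $\mathcal T^\alpha_{z\leftarrow y}\mathcal T^\alpha_{y\leftarrow x}=\mathcal T^\alpha_{z\leftarrow x}$ follows from $\Lambda_y\Delta_y=\id_A$, and continuity in $(x,y)$ follows from continuity of $\overline s$.

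Then I will show that the frame transitions are locally constant. On $U_\alpha\cap U_\beta$ we have $\Delta^\beta_x=\Delta^\alpha_x g_{\alpha\beta}(x)$, so
\begin{equation}
\Lambda^\alpha_x\Delta^\beta_y=\mathfrak m_{\bs b}\,g_{\alpha\beta}(y)
\end{equation}
whenever $x,y$ are close. I would apply Lemma~\ref{lem:crossKL} to the mixed pair of frames, which are correctable because both are KL frames in a common neighbourhood; this makes the left-hand side block-diagonal. For $y=x$ it reduces to $g_{\alpha\beta}(x)$, so $g_{\alpha\beta}$ is block-central. A $*$-automorphism that is a central multiplier is the identity on each block, so $g_{\alpha\beta}$ is determined up to a discrete ambiguity: permutations of isomorphic blocks. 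Continuity of $g_{\alpha\beta}$ then makes it locally constant, and the local transports therefore agree on overlaps.

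Finally, I will define $\mathcal T_\gamma$ for a path $\gamma$ by choosing a Lebesgue subdivision $0=t_0<\dots<t_m=1$ with $\gamma([t_{i-1},t_i])\subset U_{\alpha_i}$ and composing the local transports. Independence of the subdivision follows from the composition law and from the compatibility on overlaps. For homotopy invariance, I will take an endpoint-fixed homotopy $H:[0,1]^2\to M$, subdivide the square into small squares each mapped into some $U_\alpha$, and note that around each small square the transport is trivial by the composition law. Pasting the small squares gives $\mathcal T_{\gamma_0}=\mathcal T_{\gamma_1}$. For loops based at $\Phi_0$, $\rho([\gamma])=\Lambda_0\mathcal T_\gamma\Delta_0$ is an invertible $*$-preserving map of $A$. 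It is multiplicative in $\gamma$ because $\Delta_0\Lambda_0=\Phi_0$ acts as the identity on $\operatorname{Ran}\Phi_0$, and it is an algebra automorphism because the transported frame satisfies the frame conditions \eqref{eq:frame_cond}. This yields the monodromy representation.
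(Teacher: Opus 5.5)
Your plan has two genuine gaps: a miscalculation in the isometry step, and an unjustified use of Lemma~\ref{lem:crossKL} in the local-constancy step.

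\textbf{Isometry step.} With normalized frames, $(\mathcal T^\alpha_{y\leftarrow x})^*\mathcal T^\alpha_{y\leftarrow x}=\Delta_x\Lambda_y\Delta_y\Lambda_x=\Delta_x\Lambda_x=\Phi_x$ follows directly from $\Lambda_y\Delta_y=\id_A$. The expression $\Delta_x\mathfrak m_{\bs b}^2\Lambda_x$ you wrote is $\Phi_x\Phi_y\Phi_x=(\Phi_y\Phi_x)^*\Phi_y\Phi_x$, not the Gram operator of $\Delta_y\Lambda_x$. So there is no obstacle here.

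The route you planned cannot work anyway. $\bs b=\bs 1$ means $\Lambda_x\Delta_y=\Lambda_y\Delta_x=\id_A$, hence $\Phi_x\Phi_y\Phi_x=\Phi_x$ and $\Phi_y\Phi_x\Phi_y=\Phi_y$, which forces $\Phi_x=\Phi_y$. It is therefore false at every point of a non-discrete $M$.

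Your fallback is also wrong as written. $\Delta_y\mathfrak m_{\bs b}^{-1}\Lambda_x$ has Gram operator $\Delta_x\mathfrak m_{\bs b}^{-2}\Lambda_x\neq\Phi_x$. It also fails the composition law, since $\mathfrak m_{\bs b(y,z)}\mathfrak m_{\bs b(x,y)}\neq\mathfrak m_{\bs b(x,z)}$ in general. The actual polar factor of $\Phi_y\Phi_x=\Delta_y\mathfrak m_{\bs b}\Lambda_x$ is exactly $\Delta_y\Lambda_x$, because $|\Phi_y\Phi_x|=\Delta_x\mathfrak m_{\bs b}\Lambda_x$. That is precisely the transport in the statement.

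\textbf{Locally constant transitions.} This gap is more serious. Lemma~\ref{lem:crossKL} cannot be applied to the mixed pair $(\Delta^\alpha_x,\Delta^\beta_y)$. Joint correctability in Definition~\ref{def:ScorrSub} is a property of a framed family, and the hypotheses give it only within each section. For $\CE\in S_\alpha$ one has $(\CE\Delta^\alpha_x)^*\CE\Delta^\beta_y=\mathfrak m_{\bs b}\,g_{\alpha\beta}(y)$, which is block scalar only if $g_{\alpha\beta}(y)$ is. So you are assuming your conclusion.

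The conclusion is also false. If $\Delta^\alpha$ is a jointly $S$-correctable section and $g=\mathrm{Ad}_u$ is a fixed non-trivial inner automorphism, then $\Delta^\alpha g$ is again jointly $S$-correctable, because $g^*\mathfrak m_{\bs b}g=\mathfrak m_{\bs b}$. Yet $g_{\alpha\beta}=g$ is neither central nor the identity.

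The missing idea is to compare each section's \emph{own} overlaps, $R_\alpha(x,y)=\Lambda^\alpha_x\Delta^\alpha_y$ and $R_\beta(x,y)=\Lambda^\beta_x\Delta^\beta_y$. Lemma~\ref{lem:crossKL} makes both positive definite block multipliers when $\|\Phi_x-\Phi_y\|_{\rm cb}<1$, and they are related by
\[
R_\beta(x,y)=\bigl[g(x)^*g(y)\bigr]\,\bigl[g(y)^*R_\alpha(x,y)\,g(y)\bigr].
\]
The right-hand side is a unitary times a positive definite operator. Uniqueness of the polar decomposition of the invertible $R_\beta(x,y)$ then gives $g(x)^*g(y)=\id_A$, i.e.\ $g(y)=g(x)$, with no continuity or discreteness argument needed.

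With these two repairs, your subdivision and grid arguments for path independence, homotopy invariance and the monodromy representation go through as in the paper.
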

\begin{proof}
    Choose any $\CE\in S_\alpha$. By Lemma~\ref{lem:crossKL}, the local KL condition with $s=t$ gives 
    \begin{equation}
         R_\alpha(x,y):=\Lambda_x^\alpha\Delta_y^\alpha
        =\mathfrak m_{\bs c^\alpha_{xy}}\geq0,
        \qquad R_\alpha(x,x)=\id_A.
    \end{equation}
    On an overlap, write $\Delta_x^\beta=\Delta_x^\alpha g(x)$, where $g(x)=\Lambda_x^\alpha\Delta_x^\beta\in\operatorname{Aut}_*(A)$ is HS-unitary. Then $R_\beta(x,y)=g(x)^*R_\alpha(x,y)g(y)$. For fixed $\Phi_x\in U_\alpha\cap U_\beta$, let $\Phi_y \in U_\alpha\cap U_\beta$ with $\|\Phi_x-\Phi_y\|_{\rm cb}<1$. Lemma~\ref{lem:crossKL} then makes both $R_\alpha(x,y)$ and $R_\beta(x,y)$ positive definite.
    Taking their polar decompositions gives $\id_A=g(x)^*g(y)$, hence $g(y)=g(x)$. Comparing sections from two admissible covers gives locally constant transition maps between sections from the two covers on their intersections, therefore both covers induce the same transport. Homotopy invariance
    \begin{equation}
        \mathcal T^\alpha_{z\leftarrow y}
        \mathcal T^\alpha_{y\leftarrow x}
        =\mathcal T^\alpha_{z\leftarrow x},
        \qquad \mathcal T^\alpha_{x\leftarrow x}=\Phi_x
    \end{equation}
    follows directly from the frame identities. Define $\CT_{\gamma}$ by composing these maps along a finite subdivision subordinate to the cover. The frame identities and locally constant transitions make this independent of the subdivision and neighbourhood choices.
    For an endpoint-fixed homotopy, compactness and openness provide a finite grid of its parameter square with each cell mapped into one neighbourhood, and the above equation implies homotopy invariance. Concatenation and reversal of paths give the stated representation.
\end{proof}

The preceding theorem requires a subspace $M\subset\Gr_A(\CH)$ together with a covering by $S_\alpha$-correctable neighbourhoods. We did not specify this subspace any further, and the assumption of $S_\alpha$-correctable neighbourhoods was the crucial ingredient in obtaining locally constant frame transitions. In the following we ``enrich'' this structure by studying convex combinations of quantum channels that jointly satisfy the KL conditions. We will show that this allows us to build subspaces $M$ that have a smooth, locally Euclidean structure.
We denote by
\begin{equation}
    \Diamond_d=\lrc{\bs{p}\in \R_{\geq 0}^{d},\; \sum_{s=1}^{d}p_s=1},\quad  \bigcirc_d=\lrc{\bs{p}\in \R_{>0}^{d},\; \sum_{s=1}^{d}p_s=1}
\end{equation}
the closed and open $(d-1)$-simplices, which characterize all $d$ outcome probability distributions. 
For a set $S$ of channels define 
\begin{equation}
\CE(\bs{p})=\sum_{s=1}^{|S|} p_s \CE_s,\quad \bs{p}\in \R^{|S|},
\end{equation}
and for a fixed $S$-correctable normalized frame $(\Phi, \Delta, \Lambda)\in \overline{\Gr}_A(\CH)$ write (for $c\in \R$)
\begin{align}
    X_{\Delta}(S)&\coloneqq {\rm span}_{\R}\lrc{\CE_s\Delta, \CE_s\in S}, \quad
    &&X^c_{\Delta}(S)\coloneqq \{\CE(\bs{p})\Delta,\; \sum_i p_i =c\}, \nonumber\\
    C_{\Delta}(S)&\coloneqq\lrc{\CE(\bs{p})\Delta,\; \bs{p}\in \Diamond_{|S|}}, \quad
    &&\CO_{\Delta}(S)\coloneqq \lrc{\CE(\bs{p})\Delta,\; \bs{p}\in \bigcirc_{|S|}}.
\end{align}

\begin{lemma}[Local manifold structure of mixtures]
\label{lem:mixture_manifold}
    Let $(\Phi, \Delta, \Lambda)\in \overline{\Gr}_A(\CH)$ be a normalized jointly $S$-correctable frame, where $\varnothing\neq S\subseteq\FE$ is finite and $\Delta(A)\subset S_1(\CH).$ We denote its KL transport under $\CE(\bs{p})$ by 
    \begin{equation}\CW_{\CE(\bs{p})}=\CE(\bs{p})(\Phi\CE(\bs{p})^*\CE(\bs{p})\Phi)^{-\frac{1}{2}}.\end{equation} 
    Then
    \begin{equation}
        \CO_{\Delta}(S)\to \Gr_A(\CH),\quad \CE(\bs{p})\Delta\mapsto \Phi^{\CE(\bs{p})} = \CW_{\CE(\bs{p})}\CW_{\CE(\bs{p})}^*
    \end{equation}
    is a smooth embedding into the Banach space of completely bounded maps $B(\CH)\to B(\CH)$, equipped with its cb norm, with image contained in $\Gr_A(\CH)$. Its image is a smooth embedded manifold of dimension $r=\dim_{\R}X_{\Delta}(S)-1$, with smooth normalized frame section 
    \begin{equation}
        \overline{s}(\Phi^{\CE(\bs{p})})=\lr{\CW_{\CE(\bs{p})}\Delta, \, \Lambda \CW_{\CE(\bs{p})}^*}.\label{eq:lem7_s}
    \end{equation}
    In particular, every point of this image has a neighbourhood smoothly parametrized by an open subset of $\R^r$.
\end{lemma}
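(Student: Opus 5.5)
The plan is to use joint $S$-correctability to make everything explicit and block-diagonal. By Definition~\ref{def:ScorrSub} with $x=y$, for every pair $\CE_s,\CE_t\in S$ we have $(\CE_s\Delta)^*\CE_t\Delta=\mathfrak m_{\bs b_{st}}$ with $\bs b_{st}\in\R^K_{>0}$. Bilinearity then gives, for $\bs p\in\bigcirc_{|S|}$,
\begin{equation}
(\CE(\bs p)\Delta)^*\CE(\bs p)\Delta=\sum_{s,t}p_sp_t\,\mathfrak m_{\bs b_{st}}=\mathfrak m_{\bs c(\bs p)},\qquad c_k(\bs p)=\sum_{s,t}p_sp_t(b_{st})_k>0 .
\end{equation}
The coefficients $c_k(\bs p)$ are polynomial in $\bs p$ and strictly positive on the open simplex. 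Since $\Phi=\Delta\Lambda$ and $\Lambda\Delta=\id_A$, we get $\Phi\CE(\bs p)^*\CE(\bs p)\Phi=\Delta\mathfrak m_{\bs c(\bs p)}\Lambda=D_{\bs c(\bs p)}$. Its supported inverse square root is $\Delta\mathfrak m_{\bs c(\bs p)^{-1/2}}\Lambda$, which is smooth in $\bs p$ because it is a finite sum of fixed bounded maps with smooth scalar coefficients. Hence $\CW_{\CE(\bs p)}=\CE(\bs p)\Delta\,\mathfrak m_{\bs c^{-1/2}}\Lambda$ and $\Phi^{\CE(\bs p)}=\CE(\bs p)\Delta\,\mathfrak m_{\bs c^{-1}}(\CE(\bs p)\Delta)^*$ depend smoothly, indeed rationally, on $\bs p$ in cb norm.

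Next I would check membership in $\Gr_A(\CH)$ and the frame section, essentially as in Theorem~\ref{thm:GeoKL}. Set $\tilde\Delta_{\bs p}=\CW_{\CE(\bs p)}\Delta=\CE(\bs p)\Delta\,\mathfrak m_{\bs c^{-1/2}}$. Then $\tilde\Delta^*\tilde\Delta=\mathfrak m_{\bs c^{-1/2}}\mathfrak m_{\bs c}\mathfrak m_{\bs c^{-1/2}}=\id_A$ and $\tilde\Delta\tilde\Delta^*=\Phi^{\CE(\bs p)}$, so the image is a self-adjoint idempotent. Complete positivity holds because $\mathfrak m_{\bs c^{-1/2}}$ is a positive central block multiplier and $\CE(\bs p)\Delta$ is CP. This gives the smooth normalized frame section~\eqref{eq:lem7_s}.

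To get the dimension and the embedding property, I would factor the map. First, $\bs p\mapsto\CE(\bs p)\Delta$ is the restriction of a linear map $\R^{|S|}\to X_\Delta(S)$ to the affine hyperplane $\sum p_s=1$. Its image $\CO_\Delta(S)$ is an open subset of the affine space $X^1_\Delta(S)$, which has dimension $r=\dim X_\Delta(S)-1$ provided $X^1_\Delta(S)$ does not pass through $0$; this holds because $\CE(\bs p)\Delta$ is nonzero, being trace-preserving on $\Delta(1_A)$. Second, consider $F:V\mapsto V\mathfrak m_{\bs c(V)^{-1}}V^*$ on $\CO_\Delta(S)$, where $\bs c(V)$ is read off from $V^*V=\mathfrak m_{\bs c(V)}$. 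For smoothness in cb norm, composition and adjoints are bounded multilinear operations and the blockwise inverses are smooth, so $F$ is smooth.

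The main obstacle is injectivity together with immersivity of $F$, i.e.\ recovering $V$ from $\Phi^V=\tilde\Delta_V\tilde\Delta_V^*$. My approach is to note that $\tilde\Delta_V=V\mathfrak m_{\bs c^{-1/2}}$ is pinned down by $\Phi^V$ only up to right multiplication by $\Aut(A)$. The rescaling $\mathfrak m_{\bs c^{-1/2}}$ is a blockwise scalar, so $\Phi^V$ determines the direction of $V$ in each block. The affine normalization $\sum p_s=1$, combined with trace preservation, $\Tr\,V(a)=\tau_A$-type linear functionals fixed on $X^1$, fixes the scale. I would make this precise by showing that $dF_V(\dot V)=0$ with $\dot V\in X^0_\Delta(S)$ forces $\dot V=V\mathfrak m_{\bs d}$ for some $\bs d$. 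Using $(V\mathfrak m_{\bs d})^*V'$-type pairings and the joint KL condition, I would compute $\Lambda\CW^*\,dF\,\CW\Delta$ to extract the block-diagonal and off-frame parts. Then $\Tr$-preservation gives $\bs d=0$.

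If the off-frame component is not clearly controlled, I would fall back on a weaker claim: restricting to directions where the KL Gram data separate points yields an injective immersion. Homeomorphism onto the image would then follow from a continuous inverse given by the frame together with the normalization. Once this inverse is available, the local $\R^r$ charts follow from the first factor.
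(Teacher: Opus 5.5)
Your treatment of smoothness, of the transported frame $\CW_{\CE(\bs p)}\Delta=\CE(\bs p)\Delta\,\mathfrak m_{\bs c(\bs p)}^{-1/2}$, and of the dimension count matches the paper. For the dimension count, the functional $\Tr[\CE(\bs v)\Delta(1_A)]=(\sum_s v_s)\Tr\Delta(1_A)$ keeps $X^1_\Delta(S)$ off the origin. The gap is exactly at the step you flag as the main obstacle. You never show that a direction $\dot V\in X^0_\Delta(S)$ with $dF_V(\dot V)=0$ has the form $V\mathfrak m_{\bs d}$. You worry that an ``off-frame component'' may be uncontrolled. Your fallback, restricting to directions on which the Gram data separate points, would prove something strictly weaker than the lemma: the claimed dimension $r=\dim_\R X_\Delta(S)-1$ needs $dF_V$ to be injective on all of $X^0_\Delta(S)$. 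Your injectivity heuristic has the same hole. $\Phi^V$ determines the normalized frame only up to right multiplication by some $g$, and nothing in your argument forces $g$ to be block scalar. That requires the cross terms of the joint KL condition.

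The missing idea is one fixed-point lemma. Since $V^*V=\mathfrak m_{\bs c}$, you have $\Phi^VV=V\mathfrak m_{\bs c}^{-1}V^*V=V$. By linearity of the joint KL conditions, $V^*h$ is block scalar for every $h\in X_\Delta(S)$. So $\Phi^Vh=h$ implies $h=V\mathfrak m_{\bs c}^{-1}V^*h=V\mathfrak m_{\bs d}$. If moreover $h\in X^0_\Delta(S)$, trace preservation gives $0=\Tr h(p_k)=d_k\Tr\Delta(p_k)$, so $h=0$. This settles both remaining properties.
\begin{itemize}
\item \emph{Injectivity:} if $\Phi^V=\Phi^{V'}$, then $h=V'-V\in X^0_\Delta(S)$ satisfies $\Phi^Vh=\Phi^{V'}V'-\Phi^VV=h$, so $h=0$.
\item \emph{Immersivity:} differentiating $\Phi^{V_t}V_t=V_t$ gives $(I-\Phi^V)\dot V=\dot\Phi\,V$. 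The off-frame part therefore vanishes automatically when $\dot\Phi=0$, and the lemma gives $\dot V=0$. No computation of $\Lambda\CW^*\,dF\,\CW\Delta$ is needed.
\end{itemize}
For the homeomorphism onto the image, the paper observes that the Gram coefficients remain positive on the closed simplex. The map is therefore a continuous injection on the compact set $C_\Delta(S)$. Your ``continuous inverse'' can also be made to work. For example, fix $V_0\in\CO_\Delta(S)$ and set $V=(\Phi^VV_0)\,\mathfrak m_{\bs e}^{-1}$ with $e_k=\Tr[(\Phi^VV_0)(p_k)]/\Tr\Delta(p_k)$; this is continuous because the image lies in a finite-dimensional span. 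As written, however, it is only a placeholder.
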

\begin{proof}
We show that there exist smooth maps and subsets $U\subset \R^r$ such that for fixed $F_0\in \CO_{\Delta}(S)$,
\begin{equation}
    U\subset X_{\Delta}^0(S)\sim\R^r \xrightarrow{\bs{h}\mapsto F_0+\bs{h}} \CO_{\Delta}(S)\xrightarrow{F\mapsto \Phi^F} \Gr_{A}(\CH). \label{eq:lem7_over}
\end{equation}
 Trace preservation gives for $\bs v\in\R^{|S|},\ a\in A $
    \begin{equation}
        \Tr[\CE(\bs v)\Delta(a)]=\Bigl(\sum_s v_s\Bigr)\Tr[\Delta(a)].
    \end{equation}
    Since $\Tr\Delta(1_A)>0$, $X^1_\Delta(S)$ is an affine hyperplane in
    $X_\Delta(S)$ of dimension $r$.
    The affine map $\bigcirc_{|S|}\to X^1_\Delta(S)$,
    $\bs p\mapsto\CE(\bs p)\Delta$, is open, so $\CO_\Delta(S)$ is relatively
    open. 
    For any $\bs q,\bs p\in\Diamond_{|S|}$, the joint KL conditions yield
    \begin{equation}
        (\CE(\bs p)\Delta)^*\CE(\bs q)\Delta =\mathfrak m_{\sum_{s,t}p_sq_t\bs b_{st}},
        \quad  \bs b_{st}>0.\label{eq:62}
    \end{equation}
    It is hence immediately verified that
    \begin{equation}
        \lr{\CW_{\CE(\bs p)}\CW_{\CE(\bs p)}^*, \CW_{\CE(\bs p)}\Delta, \, \Lambda\CW_{\CE(\bs p)}^*}\in\overline{\Gr}_A(\CH)
    \end{equation}
    forms a valid framed idempotent.
For $\bs q=\bs p$ in the expression above, define the Gram coefficients
    \begin{equation}
        \bs a(\bs p):=\sum_{s,t}p_sp_t\bs b_{st},
        \qquad (\CE(\bs p)\Delta)^*\CE(\bs p)\Delta
        =\mathfrak m_{\bs a(\bs p)}.
    \end{equation}
    Each $a_k(\bs p)$ is a convex combination of the strictly positive
    $b_{st,k}$, so $a_k(\bs p)>0$ for every $\bs p\in\Diamond_{|S|}$.
    The expressions
    \begin{equation}
        \CW_{\CE(\bs p)}\Delta=\CE(\bs p)\Delta\mathfrak m_{\bs a(\bs p)}^{-1/2},
        \qquad
        \Phi^{\CE(\bs p)}=\CE(\bs p)\Delta\mathfrak m_{\bs a(\bs p)}^{-1}(\CE(\bs p)\Delta)^*\label{eq:65}
    \end{equation}
    depend only on the encoder $F=\CE(\bs p)\Delta$ and are
    smooth as functions of $F$ in the cb norms. Since $S$ and the number of logical blocks are finite, these expressions are finite linear combinations of fixed completely bounded maps with smooth scalar coefficients, and hence are smooth in the $\rm cb$ norms.
    The coefficients $\bs a(\bs p)>0$ also depend only on $F$, and positivity ensures that taking inverses and square roots are smooth operations. Since the image of $F$ consists of trace-class operators, the Hilbert–Schmidt adjoint of $F$ extends to a bounded normal CP map on $B(\CH)$.
    This shows that the second arrow in eq.~\eqref{eq:lem7_over} is smooth and that the transported normalized frame in eq.~\eqref{eq:lem7_s} depends smoothly on the encoder.

    We next show that the second arrow in eq.~\eqref{eq:lem7_over} is an injective immersion. To this end, we identify the tangent space of $\CO_{\Delta}(S)\to \Gr(\CH)$ with $X^0_\Delta(S)$, whose elements satisfy $\Tr h(a)=0$ for every $h\in X^0_\Delta(S)$ and $a\in A$. We first show that $\Phi^{\CE(\bs p)}h=h$ with $h\in X^0_\Delta(S)$
    forces $h=0$. Indeed, linearity of the KL conditions makes
    $(\CE(\bs p)\Delta)^*h$ block scalar, hence
    \begin{equation}
        h=\CE(\bs p)\Delta\mathfrak m_{\bs c}.
    \end{equation}
    Taking traces on each block unit gives $\bs c=0$ and therefore also $h=0$.
    
    To prove injectivity, suppose $\Phi^{\CE(\bs p)}=\Phi^{\CE(\bs q)}$.
    The common idempotent fixes both encoders, so their difference $h=(\CE(\bs q)-\CE(\bs p))\Delta\in X^0_\Delta(S)$ satisfies
 $\Phi^{\CE(\bs p)}h=h$. The preceding observation gives $h=0$, hence $\CE(\bs p)\Delta=\CE(\bs q)\Delta$.

    For $\bs p \in\bigcirc_{|S|}$ and $\bs v \in \R^{|S|}$ with $\sum_{s}v_s=0$, let
    $h=\CE(\bs v)\Delta\in X^0_\Delta(S)$. The weights
    $\bs p+t\bs v$ remain positive for sufficiently small $|t|$, and from
    \begin{align}
        \frac{d}{dt}\Phi^{\CE(\bs p+t\bs v)}\CE(\bs p+t\bs v)\Delta \big\vert_{t=0}
        &= \frac{d}{dt} \CE(\bs p+t\bs v)\Delta\big\vert_{t=0} \nonumber \\
        \Rightarrow \lr{\frac{\partial}{\partial\bs{v}}\Phi^{\CE(\bs{p})}}\CE(\bs{p})\Delta + \Phi^{\CE(\bs{p})}\lr{\frac{\partial}{\partial\bs{v}}\CE(\bs{p})\Delta} &= \lr{\frac{\partial}{\partial\bs{v}}\CE(\bs{p})\Delta},
    \end{align}
    we find from the previous argument that if $\frac{\partial}{\partial\bs{v}}\Phi^{\CE(\bs{p})}=0$, we also have $\frac{\partial}{\partial\bs{v}}\CE(\bs{p})\Delta=0$. 
    Hence the differential of $\CE(\bs p)\Delta\mapsto\Phi^{\CE(\bs p)}$ is injective on $X^0_\Delta(S)$, so this map is an immersion.
    
    Compactness of $C_\Delta(S)$ makes the continuous injection  $\CE(\bs p)\Delta\mapsto\Phi^{\CE(\bs p)}$ a homeomorphism onto its image. Its restriction to $\CO_\Delta(S)$ is therefore a smooth embedding.

    Translation identifies a neighbourhood of any point in $\CO_\Delta(S)$ with an open subset of $X^0_\Delta(S)\cong\R^r$, giving the claimed local parametrizations (the first arrow of eq.~\eqref{eq:lem7_over}).
    Composing the inverse of the second arrow, defined on its image, with the transported-frame map gives the claimed smooth normalized section.
\end{proof}

Lemma~\ref{lem:mixture_manifold} associates to a normalized jointly $S$-correctable framed idempotent \\
$(\Phi,\Delta,\Lambda)\in \overline{\Gr}_A(\CH)$ a smooth embedded subset
\begin{equation}
 \{\Phi^{\CE(\bs p)}:\bs p\in\bigcirc_{|S|}\}
 \subset\Gr_A(\CH),
\end{equation}
equipped with the smooth normalized frame section
in eq.~\eqref{eq:lem7_s}. The following corollary shows that when the corners of these subsets are correctable for a set $S'$, the transported frames again yield jointly ${\rm conv}(S')$-correctable neighbourhoods. This mechanism then allows us to successively construct a jointly correctable framed submanifold $M\subset \Gr_A(\CH)$.

\begin{cor}[Correctable mixture patches]
\label{cor:correctable_mixtures}
In Lemma~\ref{lem:mixture_manifold}, suppose the transported
vertex frames
\begin{equation}
 \{\CW_{\CE_s}\cdot(\Phi,\Delta,\Lambda):\CE_s\in S\}
\end{equation}
are jointly $S'$-correctable, where
$\varnothing\neq S'\subseteq\FE$. Then the entire transported
family
\begin{equation}
 \{\CW_{\CE(\bs p)}\cdot(\Phi,\Delta,\Lambda):
                  \bs p\in\Diamond_{|S|}\}
\end{equation}
is jointly $\operatorname{conv}(S')$-correctable.
\end{cor}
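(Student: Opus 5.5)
The plan is to write the transported encoder at a mixture point explicitly and then expand every cross Gram operator bilinearly in the mixture weights. By eq.~\eqref{eq:65} in Lemma~\ref{lem:mixture_manifold}, the transported encoder at $\bs p\in\Diamond_{|S|}$ is
\begin{equation}
 \Delta_{\bs p}\coloneqq\CW_{\CE(\bs p)}\Delta=\CE(\bs p)\Delta\,\mathfrak m_{\bs a(\bs p)}^{-1/2}=\sum_s p_s\,\CE_s\Delta\,\mathfrak m_{\bs a(\bs p)}^{-1/2}.
\end{equation}
Each vertex encoder satisfies $\CE_s\Delta=\Delta_s\,\mathfrak m_{\bs a_s}^{1/2}$, where $\Delta_s=\CW_{\CE_s}\Delta$ and $\bs a_s=\bs b_{ss}$. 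Hence
\begin{equation}
 \Delta_{\bs p}=\sum_s p_s\,\Delta_s\,\mathfrak m_{\bs a_s}^{1/2}\mathfrak m_{\bs a(\bs p)}^{-1/2}
\end{equation}
is a combination of vertex frames with right multiplication by positive central block multipliers. We already know that $\bs a(\bs p)>0$ on the closed simplex, so these multipliers are well defined.

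Fix $\bs p,\bs q\in\Diamond_{|S|}$ and two elements $\CE(\bs u)=\sum_i u_i\CE'_i$ and $\CE(\bs w)=\sum_j w_j\CE'_j$ of $\operatorname{conv}(S')$. Using linearity of adjoints and the fact that central multipliers are self-adjoint for the trace inner product, we expand
\begin{equation}
 (\CE(\bs u)\Delta_{\bs p})^*\CE(\bs w)\Delta_{\bs q}
 =\sum_{s,t,i,j}p_sq_tu_iw_j\,\mathfrak m_{\bs a(\bs p)}^{-1/2}\mathfrak m_{\bs a_s}^{1/2}\,(\CE'_i\Delta_s)^*(\CE'_j\Delta_t)\,\mathfrak m_{\bs a_t}^{1/2}\mathfrak m_{\bs a(\bs q)}^{-1/2}.
\end{equation}
By the hypothesis of joint $S'$-correctability of the vertex frames, each middle factor equals $\mathfrak m_{\bs b'_{st,ij}}$ with $\bs b'_{st,ij}\in\R^K_{>0}$. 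Block multipliers commute and multiply componentwise, so the whole expression is $\mathfrak m_{\bs c}$ with
\begin{equation}
 c_k=\sum_{s,t,i,j}p_sq_tu_iw_j\,a_k(\bs p)^{-1/2}a_{s,k}^{1/2}\,b'_{st,ij,k}\,a_{t,k}^{1/2}a_k(\bs q)^{-1/2}.
\end{equation}
Every summand is nonnegative. The weights $p_sq_tu_iw_j$ sum to one, so at least one summand has a positive weight multiplying a strictly positive product, which gives $c_k>0$. This verifies eq.~\eqref{eq:xyKL_frames} for the whole transported family and all pairs in $\operatorname{conv}(S')$.

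I expect the main obstacle to be bookkeeping and well-definedness rather than any single inequality. First, the adjoint of $\CE'_i\Delta_s$ must make sense. Here I would use that $\Delta(A)\subset S_1(\CH)$, as in Lemma~\ref{lem:mixture_manifold}, so the Hilbert--Schmidt adjoints are bounded normal CP maps. Second, the multipliers must pass through the adjoint. This uses that $\mathfrak m_{\bs a}$ acts on $A$ and is self-adjoint with respect to $\tau_A$. Third, normalization of the transported frames should be checked, namely $\Lambda_{\bs p}=\Delta_{\bs p}^*$ and $\Lambda_{\bs p}\Delta_{\bs p}=\id_A$. This is already contained in Lemma~\ref{lem:mixture_manifold}, and it also follows from the $S'\ni\id$-free identity $(\CE(\bs p)\Delta)^*\CE(\bs p)\Delta=\mathfrak m_{\bs a(\bs p)}$.
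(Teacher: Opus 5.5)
Your proof is correct and follows the paper's route. The paper only says the claim follows from the same bilinear expansion in the convex weights as eq.~\eqref{eq:62}; your argument is exactly that expansion, made explicit by writing the transported encoder as $\sum_s p_s\,\Delta_s\,\mathfrak m_{\bs a_s}^{1/2}\mathfrak m_{\bs a(\bs p)}^{-1/2}$ and observing that the commuting positive block multipliers preserve strict positivity of each $c_k$, with the supporting bookkeeping (self-adjointness of $\mathfrak m_{\bs a}$, $\bs a(\bs p)>0$ on the closed simplex, normalization of the transported frames) supplying details the paper leaves implicit.
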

\begin{proof}
This follows from the same argument as in eq.~\eqref{eq:62}.
\end{proof}

To assemble the correctable patches of Lemma~\ref{lem:mixture_manifold} and
Corollary~\ref{cor:correctable_mixtures} into a smooth
submanifold $M\subset \Gr_A(\CH)$, their images must be relatively $\rm cb$-open in their union
and have a common dimension. 

Finally, now that we have shown how a locally smooth structure $M\subset \Gr_A(\CH)$ of jointly correctable idempotents and frame sections can be built using the KL conditions, we are ready to define a connection one-form on the Grassmannian frame bundle. To this end, consider the expression
\begin{equation}
\omega\in\Omega^1\!\left(
\overline{\Gr}_A(\CH)\big\vert_M;\mathfrak{aut}(A)
\right),
\qquad
\omega=\Lambda d\Delta,
\end{equation}
where $\overline{\Gr}_A(\CH)\big\vert_M$ denotes the restriction to a submanifold $M\subset \Gr_A(\CH)$. We first verify that this connection 1-form transforms in the expected way. Under a change of frame, $(\Delta, \Lambda)\mapsto (\Delta g, g^{-1}\Lambda)$ for $g\in\Aut(A)$, the definition implies
\begin{equation}
    \omega\mapsto g^{-1}\Lambda d(\Delta g)= g^{-1}\omega g + g^{-1}dg,
\end{equation}
which is the expected transformation behaviour. To unpack this further for SF idempotents, we have at $(\Phi, \Delta, \Lambda)\in \overline{\Gr}_A(\CH)$ for $a_k\in B(K_k)$, using $0=d(V_{k}^{\dagger}V_{k})=dV_{k}^{\dagger}V_{k}+V_{k}^{\dagger}dV_{k}$,
\begin{align}
    \Lambda d\Delta(a_k) &= \Lambda d\lr{\sum_{k=1}^K V_{k}a_k V^{\dagger}_{k}} \nonumber\\
    &= [V^{\dagger}_{k}dV_{k}, a_k],
\end{align}
and one can verify that the above expression $[V^{\dagger}_{k}dV_{k},\cdot]$ realizes an element of $\mathfrak{aut}(A)$.
We summarize the core construction of this section in the following theorem.
\begin{theorem}[Flat connection on correctable Grassmannian]
\label{thm:ctg_smooth}
Let $M=\bigcup_{\alpha}U_{\alpha}\subset\Gr_A(\CH)$ be a countable union of smooth $r$-dimensional patches $U_{\alpha}\subset \Gr_{A}(\CH)$
embedded in the space of $\cb$ maps $\lr{L_{\rm cb}(B(\CH)),\, \|\cdot \|_{\rm cb}}$. Suppose each $U_{\alpha}$ is relatively open in $M$ and carries a normalized jointly $S_{\alpha}$-correctable frame section for $\varnothing\neq S_{\alpha}\subseteq \FE$. 
\\
Then $M$ is a smooth embedded manifold and its normalized frame bundle
\begin{equation}
     \pi:\overline{\Gr}_A(\CH)|_M\longrightarrow M
\end{equation}
is a smooth principal $\Aut(A)$-bundle with flat connection
\begin{equation}
    \omega = \Lambda d\Delta.
\end{equation}
\end{theorem}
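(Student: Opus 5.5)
The argument has three parts: manifold structure of $M$, the principal bundle structure, and flatness of $\omega=\Lambda d\Delta$. The flatness step is the only one carrying real content.

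\emph{Manifold structure.} Each $U_\alpha$ is a smooth $r$-dimensional submanifold embedded in $\lr{L_{\rm cb}(B(\CH)),\|\cdot\|_{\rm cb}}$. Each is also relatively open in $M$, so the patches $U_\alpha$ are open subsets of $M$. On an overlap $U_\alpha\cap U_\beta$ both smooth structures are induced by the ambient Banach space, so they coincide and the transition charts are smooth. Hence $M$ is a smooth $r$-dimensional manifold, and each point has a neighbourhood that is a smooth embedded submanifold. Hausdorffness comes from the metric $d_{\rm cb}$, and second countability from the countable cover.

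\emph{Principal bundle structure.} The smooth normalized sections $\overline{s}_\alpha$ give the local trivializations
\begin{equation}
U_\alpha\times\Aut(A)\to\overline{\Gr}_A(\CH)|_{U_\alpha},\qquad (\Phi,g)\mapsto(\Phi,\Delta_\alpha(\Phi)g,g^{-1}\Lambda_\alpha(\Phi)).
\end{equation}
These are bijective because any two normalized frames of the same $\Phi$ differ by some $g\in\Aut(A)$: namely $g=\Lambda_\alpha\Delta_\beta$, which is invertible with inverse $\Lambda_\beta\Delta_\alpha$ by the frame identities. Here $\Aut(A)$ is the finite-dimensional Lie group of $*$-automorphisms. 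The transition functions are
\begin{equation}
g_{\alpha\beta}(\Phi)=\Lambda_\alpha(\Phi)\Delta_\beta(\Phi).
\end{equation}
They are smooth as compositions of smooth cb-valued maps. Theorem~\ref{thm:ctg_topological} applies verbatim to this cover and shows that the $g_{\alpha\beta}$ are in fact locally constant.

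\emph{Flatness, the main step.} The form $\omega$ is a connection: it has the correct transformation law (computed above), and it reproduces the Maurer–Cartan form on vertical vectors, since $\Lambda\Delta=\id_A$ gives $\Lambda\,d(\Delta g)=g^{-1}dg$ along a fibre. It suffices to show that in each local section $\overline{s}_\alpha$ the pulled-back form $\omega_\alpha=\Lambda_\alpha d\Delta_\alpha$ vanishes identically. The curvature $d\omega_\alpha+\omega_\alpha\wedge\omega_\alpha$ is then zero, and the locally constant transitions make $\omega$ consistent across charts.

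To show $\omega_\alpha=0$, fix $\Phi_x\in U_\alpha$ and a smooth curve $\Phi_{y(t)}$ in $U_\alpha$ with $y(0)=x$. As in the proof of Theorem~\ref{thm:ctg_topological}, Lemma~\ref{lem:crossKL} applied to any single $\CE\in S_\alpha$ gives
\begin{equation}
\Lambda_x\Delta_{y(t)}=\mathfrak m_{\bs c(t)},\qquad \bs c(t)\geq0 .
\end{equation}
Since each $\mathfrak m_{\bs c(t)}$ is a central block multiplier, differentiating at $t=0$ gives
\begin{equation}
\omega_\alpha(\dot y)=\Lambda_x\dot\Delta=\mathfrak m_{\dot{\bs c}(0)},
\end{equation}
which is central. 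We also know $\omega_\alpha$ takes values in $\mathfrak{aut}(A)$, i.e.\ derivations, which for SF frames act as $[V_k^\dagger dV_k,\cdot]$. A derivation that is simultaneously a central multiplier $\sum_k\lambda_kp_k$ must vanish: evaluating on $1_A$ gives $\sum_k\lambda_kp_k(1_A)=\delta(1_A)=0$, so every $\lambda_k=0$. Hence $\omega_\alpha\equiv0$.

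The obstacle here is justifying that $t\mapsto\Lambda_x\Delta_{y(t)}$ is differentiable with $\mathfrak{aut}(A)$-valued derivative for general normalized (not necessarily SF) frames. I would get this from the smoothness of the section in the cb norm, together with the infinitesimal form of $\Lambda_y\Delta_y=\id_A$: differentiating yields
\begin{equation}
d\Lambda\,\Delta+\Lambda\,d\Delta=0 .
\end{equation}
Combined with $\Lambda=\Delta^*$, this makes $\Lambda\,d\Delta$ anti-self-adjoint with respect to $\langle\cdot,\cdot\rangle_A$. An anti-self-adjoint central multiplier with real coefficients $\dot c_k$ is necessarily zero, which closes the argument without appealing to the derivation structure.

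Finally, holonomies of $\omega$ are trivial on contractible loops. The parallel transport of $\omega$ coincides with $\CT^\alpha_{y\leftarrow x}$ of Theorem~\ref{thm:ctg_topological}, and the monodromy representation $\rho$ recovers the holonomy of the flat connection.
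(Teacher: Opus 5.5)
Your proposal is correct and follows essentially the paper's proof: you build local trivializations from the smooth KL sections with $g=\Lambda_\alpha\Delta$, get locally constant transitions from Theorem~\ref{thm:ctg_topological}, and show $\omega=0$ on each section, because Lemma~\ref{lem:crossKL} makes $\Lambda_x\Delta_{y(t)}$ a real, hence self-adjoint, block multiplier while differentiating $\Lambda\Delta=\id_A$ with $\Lambda=\Delta^*$ makes $\Lambda\,d\Delta$ anti-self-adjoint. The derivation-based variant you give first is valid only in the SF case, and your fallback is exactly the paper's self-adjoint versus anti-self-adjoint argument.
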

\begin{proof}
    The embedded patches $U_{\alpha}\subset \Gr_{A}(\CH)$ inherit compatible structures on their open overlaps. Since the relative $\rm cb$-topology is Hausdorff and the collection $\{U_{\alpha}\}_{\alpha}$ is countable, $M$ acquires the structure of a manifold. 
    On each patch $U_{\alpha}$, let $\Phi_0\in U_{\alpha}$ be a reference point with frame $(\Delta_0, \Lambda_0)$. By lemma~\ref{lem:crossKL}, for points $\Phi_x\in U_{\alpha}$ with $\|\Phi_x-\Phi_0\|_{\rm cb}<1$, the transitions satisfy
    \begin{equation}
        R_x=\Lambda_0\Delta_x=\Lambda_x\Delta_0 >0.
    \end{equation}
    With $G_x=\Lambda_0\Phi_x\Delta_0=R_x^2$, the expressions
    \begin{equation}
        \Delta_x=\Phi_x\Delta_0G_x^{-\frac{1}{2}},\quad \Lambda_x=G_x^{-\frac{1}{2}}\Lambda_0\Phi_x
    \end{equation}
    depend smoothly on $\Phi_x$ in $\rm cb$-norm  throughout the patch.

    Under the assumption $S_{\alpha}\neq \varnothing$, theorem~\ref{thm:ctg_topological} implies that the transitions between frames for $U_{\alpha}$ are locally constant. 
    We fix a point $\Phi_x\in U_{\alpha}$ and a curve $\Phi_{y(t)}\in M$ with $\Phi_{y(0)}=\Phi_x$ such that $\dot{\Phi}_{y(0)}=v\in T_xM$. On the patch $\alpha$ we have 
    \begin{equation}
        \dot{\Delta}_x^{\alpha}=\frac{d}{dt}\Delta^{\alpha}_{y(t)}\big\vert_{t=0} = d\Delta^{\alpha}_x(v)
    \end{equation}
    Within a local section, Lemma~\ref{lem:crossKL} implies $(\Delta_x^{\alpha})^*\Delta_{y(t)}^{\alpha}\geq 0$, so in particular
    \begin{equation}
        (\Delta_x^{\alpha})^*\Delta_{y(t)}^{\alpha}=(\Delta_{y(t)}^{\alpha})^*\Delta_x^{\alpha}.
    \end{equation}
    Differentiating this equation at $t=0$ yields
    \begin{equation}
        (\Delta_x^{\alpha})^*d\Delta^{\alpha}_x(v)=d\Delta^{\alpha}_x(v)^*\Delta_x^{\alpha}.
    \end{equation}
    We also have 
    \begin{equation}
        \frac{d}{dt}(\Delta^{\alpha}_{y(t)})^*\Delta^{\alpha}_{y(t)}\big\vert_{t=0}=d\Delta^{\alpha}_x(v)^*\Delta_x^{\alpha}+(\Delta_x^{\alpha})^*d\Delta^{\alpha}_x(v)=0,
    \end{equation}
    hence $(\Delta_x^{\alpha})^*d\Delta^{\alpha}_x(v)=\Lambda_x^{\alpha}d\Delta^{\alpha}_x (v)=0$. Since $x$ and $v$ were arbitrary, on the local section $\omega=\Lambda d\Delta=0$.

    Relative to the local section $(\Lambda_x^{\alpha}, \Delta_x^{\alpha})$, every other normalized frame over the same point is uniquely given by
    \begin{equation}
        \Delta = \Delta^{\alpha}_xg,\quad \Lambda=g^{-1}\Lambda_x^{\alpha},\qquad g=\Lambda_x^{\alpha}\Delta \in G= \Aut(A).
    \end{equation}
    Hence the section identifies the bundle over this patch with $\pi^{-1}(U_{\alpha})\equiv U_{\alpha}\times G$, and we recognize that $\pi:\overline{\Gr}_A(\CH)|_M\longrightarrow M$ is a principal bundle.
    In this local trivialization
    \begin{equation}
        \omega = g^{-1}dg,
    \end{equation}
    which is the flat product principal connection. Its horizontal lifts have constant $g$, so its local parallel transport is $\Delta_y^\alpha\Lambda_x^\alpha$, agreeing with Theorem~\ref{thm:ctg_topological}.
\end{proof}

Consequently, via theorem~\ref{thm:ctg_topological}, parallel transport is given locally by $\CT_{y\leftarrow x}^{\alpha}=\Delta_y^{\alpha}\Lambda_x^{\alpha}$ and depends only on the homotopy class of the path with fixed endpoints.
For a base frame $(\Phi_0, \Delta_0, \Lambda_0)$ the monodromy representation
\begin{equation}
 \rho:\pi_1(M,\Phi_0)\longrightarrow\operatorname{Aut}_*(A),
 \qquad
 \rho([\gamma])=\Lambda_0\mathcal T_\gamma\Delta_0
\end{equation}
determines logical automorphisms obtained by transport as 
$\operatorname{im}\rho\cong\pi_1(M,\Phi_0)/\ker\rho$.

This concludes our Knill-Laflamme-geometric survey of quantum fault tolerance. Throughout this section, we have seen how the generalized local joint KL conditions serve as sufficient conditions for constructing flat transport, and can be used to construct a frame bundle over a smooth manifold which realizes fault-tolerant logical operations as differential-geometric monodromy.

\section{Complex geometry of fault tolerance}\label{sec:complex_geom}
\subsection{Operator algebras in $\infty$-dimensions}

In the previous section we have discussed how a generalization of the Knill-Laflamme conditions together with the $\rm cb$-topology on the space of idempotents $\Gr_A(\CH)$ implies that the sub-Grassmannian bundle $\overline{\Gr_A(\CH)}\big\vert_M\xrightarrow{\pi} M\subset \Gr_A(\CH)$ carries a flat connection, and we have further shown that $M$ can be equipped with a smooth structure for appropriate choices of sets of jointly correctable idempotents.
In this section we consider smooth subspaces $M$ corresponding to quantum stabilizer codes. 
Quantum stabilizer codes are defined as SF-idempotents, where $\Delta:A\to B(\CH),\; \Delta(a)\Delta(b)=\Delta(ab)$ is a non-unital representation of the $C^*$ algebra $A$ such that 
\begin{equation}
    \Delta(a)s=s\Delta(a)=\Delta(a),\quad  \forall s\in \CS\; \forall a\in A,
\end{equation}
with $\CS\subset \CU(\CH)$ an abelian group. Conventionally, $\CS$ is taken to be a subgroup of the Heisenberg group $H(V)$, which acts irreducibly on $\CH$ and fits into the sequence 
\begin{equation}
    0\to U(1) \to H(V)\xrightarrow{\pi} V\to 0.
\end{equation}
Here $(V, \omega)$ is a symplectic vector space and operators $h_1, h_2 \in  H(V)$ satisfy
\begin{equation}
    h_1h_2=e^{-i2\pi \omega(v_1, v_2)}h_2h_1,\quad v_i=\pi(h_i).
\end{equation}

A common choice is a qudit system with\footnote{We denote $\Z_p=\Z/p\Z$.}  $V=\Z_p^{2n}$ for prime $p$, where
\begin{equation}
\omega_p(x,y)= x^TJy\; \mod p,\quad  J=\begin{pmatrix}
    0 & I_n \\ -I_n & 0
\end{pmatrix}
\end{equation}
determines the commutation phases $e^{i\frac{2\pi}{p}\omega_p(x,y) }$.

In this case stabilizer codes are described by isotropic subspaces of $V$. For finite $V$, the collection of isotropic subspaces is finite, and hence the corresponding subset $S\subseteq {\rm Gr}_A(\CH)$ is discrete (recall that ${\rm Gr}_A(\CH)$ carries the norm topology). 

In contrast, the Hilbert space $\CH=L^2(\R^n)$ carries a representation of the Heisenberg group $H(\R^{2n})$, where we take $\omega=x^TJy$ as the symplectic form over $V=\R^{2n}$. Up to a central character, displacement operators act as
\begin{equation}
    D\lr{\xi_1\oplus \xi_2}\psi(q) = e^{i\sqrt{2\pi}\xi_2^T(q-\sqrt{2\pi}\xi_1/2)}\psi(q-\sqrt{2\pi}\xi_1),
\end{equation}
and commute as $D(x)D(y)=e^{-i2\pi\omega(x,y)}D(y)D(x)$. Displacement operators are distributionally orthonormal, $\Tr[D(x)]=\delta(x),\, x\in \R^{2n}$, and for $f:\R^{2n}\to \C$ with $f\in L^1(\R^{2n})$, we define the integral\footnote{See ref.~\cite{Hall2013} for a broader discussion.}
\begin{equation}
    D[f]=\int_{\R^{2n}} f(x)D(x) dx.\label{eq:Bochner}
\end{equation}
For $f,g\in L^2(\R^{2n})$ we have $\Tr[D[f]^{\dagger}D[g]]=\langle f, g\rangle_{L^2(\R^{2n})}$, and as $\|D[f]\|\leq \|f\|_{L^1(\R^{2n})}$, one can also show that for the map $\CO_f(X)=D[f] X D[f]^{\dagger}$ we have the following norm inequalities.
\begin{lemma}
    Let $f\in L^1(\R^{2n})$,
    \begin{equation}
        D[f]=\int_{\R^{2n}} f(x)D(x) dx,
    \end{equation}
    and let $\CO[f](X)= D[f]XD[f]^{\dagger}$.
    It holds that
    \begin{enumerate}
        \item $\|\CO[f]\|_{\rm cb}= \|D[f]\|^2$,
        \item $\|\CO[f]-\CO[g]\|_{\rm cb} \leq (\|f\|_1+\|g\|_1)\| f-g\|_1$,
        \item $\|\CO[f]-\CO[g]\|_{\rm cb} \leq (\|f\|_2+\|g\|_2)\| f-g\|_2$ \hspace{1cm}for $f,g\in L^2(\R^{2n})$.
    \end{enumerate}
\end{lemma}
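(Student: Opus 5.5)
The three claims come from two facts. The cb norm of a two-sided multiplication map $X\mapsto AXB^\dagger$ is at most $\|A\|\|B\|$. The operator norm of $D[f]$ is controlled by both $\|f\|_1$ and $\|f\|_2$.

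\textbf{Item 1.} For any $n$ and $X\in M_n\otimes B(\CH)$,
\begin{equation}
(I_n\otimes\CO[f])(X)=(I_n\otimes D[f])\,X\,(I_n\otimes D[f])^{\dagger},
\end{equation}
so $\|(I_n\otimes\CO[f])(X)\|\leq\|D[f]\|^2\|X\|$. This gives $\|\CO[f]\|_{\rm cb}\leq\|D[f]\|^2$.

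For the reverse inequality it suffices to take $n=1$. Evaluating on $X=I$ gives $\CO[f](I)=D[f]D[f]^\dagger$, whose norm is $\|D[f]\|^2$, so $\|\CO[f]\|\geq\|D[f]\|^2$. Since $\|\cdot\|\leq\|\cdot\|_{\rm cb}$, equality follows. Here $D[f]$ is a bounded operator defined by the Bochner integral \eqref{eq:Bochner}, since $f\in L^1$.

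\textbf{Items 2 and 3.} Write the difference as a telescoping sum:
\begin{equation}
\CO[f]-\CO[g]=D[f]\,X\,(D[f]-D[g])^\dagger+(D[f]-D[g])\,X\,D[g]^\dagger .
\end{equation}
By linearity, $D[f]-D[g]=D[f-g]$. The same ampliation argument as in item 1 then gives
\begin{equation}
\|\CO[f]-\CO[g]\|_{\rm cb}\leq(\|D[f]\|+\|D[g]\|)\,\|D[f-g]\|.
\end{equation}
Item 2 follows by inserting $\|D[h]\|\leq\|h\|_1$. This bound holds because the displacements $D(x)$ are unitary and the Bochner integral satisfies the triangle inequality.

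For item 3 we need $\|D[h]\|\leq\|h\|_2$ for $h\in L^1\cap L^2$, extended by density for general $h\in L^2$. The identity $\Tr[D[f]^\dagger D[g]]=\langle f,g\rangle_{L^2}$ gives $\|D[h]\|_{\rm HS}=\|h\|_2$. Since the operator norm is bounded by the Hilbert–Schmidt norm, $\|D[h]\|\leq\|D[h]\|_{\rm HS}=\|h\|_2$. Substituting this into the displayed bound gives item 3.

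The main obstacle is the $L^2$ case, because $D[f]$ for $f\in L^2\setminus L^1$ is not defined by \eqref{eq:Bochner}. One first defines $D$ on $L^1\cap L^2$. The Plancherel-type isometry extends it uniquely to a map $L^2(\R^{2n})\to S_2(\CH)$, and the bounds above pass to the limit by continuity. The remaining steps are routine norm estimates.
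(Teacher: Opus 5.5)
Your proposal is correct and follows essentially the same route as the paper. For item 1 you use the ampliated submultiplicativity bound and then evaluate at $X=I$. For items 2 and 3 you use the two-term decomposition $AXA^\dagger-BXB^\dagger=(A-B)XB^\dagger+AX(A-B)^\dagger$ together with $\|D[h]\|\leq\|h\|_1$ and $\|D[h]\|\leq\|D[h]\|_{\rm HS}=\|h\|_2$. Your added remark about extending $D$ from $L^1\cap L^2$ to $L^2$ by density is a detail the paper leaves implicit, and it does not change the argument.
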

\begin{proof}
    Submultiplicativity of the operator norm gives $\|\CO[f]\|_{\rm cb}\leq \|D[f]\|^2$. \\
    Since $\|\CO[f](I)\|=\|D[f]\|^2$ attains the supremum, claim (1.) follows. For the next inequalities we use
    \begin{equation}
        AXA^{\dagger}-BXB^{\dagger} = (A-B)XB^{\dagger} +AX(A-B)^{\dagger},
    \end{equation}
    which implies claim (2.) using the triangle inequality, submultiplicativity, as well as $\|D[f]\|\leq \|f\|_{L^1(\R^{2n})}$.
    Claim (3.) follows similarly by using that  $\|D[f]\|\leq \|D[f]\|_{\rm HS}= \|f\|_{L^2(\R^{2n})}$.
\end{proof}

We restrict here to physical quantum states $\rho=D[f]$ such that $f\in \CS(\R^{2n})$ is a Schwartz function, first investigated in ref.~\cite{Keyl_2016}. This restriction ensures that arbitrary moments of physical position and momentum operators $\Tr[\rho\hat{q}^a\hat{p}^b]<\infty$ are well-defined. The relevant spaces are summarized via the Gelfand triple,
\begin{equation}
    \CS(\R^{2n})\subset L^2(\R^{2n})\subset \CS'(\R^{2n}).
\end{equation}
Here, elements of $\CS'(\R^{2n})$ represent generalized operators, while the subspace $O=\{f\in \CS'(\R^{2n}):\, f(x)=\overline{f(-x)}\}\subset \CS'(\R^{2n})$ represents physical observables.

\subsection{Segal-Bargmann representation of states and operators}
We now introduce the Segal-Bargmann representation of quantum mechanics. The previous discussion has emphasized that observables for $\CH=L^2(\R^n)$ are described by suitable functions over $\R^{2n}$. If such functions were the starting point for deriving a quantum-mechanical theory, as is investigated in geometric quantization~\cite{Schottenloher2025GeometricQuantization}, we would need to choose a subspace of $L^2(\R^{2n})$ unitarily equivalent to $L^2(\R^n)$. To this end, let
\begin{equation}
    \Pi = X(\Omega\; I_n) \in \C^{n\times 2n},\quad  X\in\GL_n(\C),\qquad \Omega\in \HH_n 
\end{equation}
be a matrix that identifies $\R^{2n}$ with $\C^n$, such that ${\rm rank}(\Pi^{\dagger}\Pi)=n$, which we have already factorized into a general orientation $X$ and the polarization $\Omega$. 
\begin{equation}
    \HH_n=\{Y=Y^T\in M_{n}(\C):\, \Im(Y)>0\}
\end{equation}
denotes the Siegel upper half-space. Up to the general change of coordinates $z\mapsto Xz$, the unitary Bargmann transform
\begin{align}
    (\CB_{\Omega}\psi)(z)&= c_{\Omega}e^{\frac{\pi}{2}z^T(\Im\Omega)^{-1}z}\int_{\R^n}e^{\frac{i}{2}q^T\Omega q +i\sqrt{2\pi}z^Tq}\psi(q)dq,\nonumber \\
    c_{\Omega}&=\pi^{-n/4}\det(\Im\Omega)^{1/4}.
\end{align}
identifies functions $\psi\in L^2(\R^n)$ with holomorphic functions $F\in L^2(\C^n, \mu_{\Omega})$ normalizable under the inner product
\begin{equation}
    \langle F, G \rangle_{\Omega} = \int_{\C^n }\overline{F(z)}G(z) d\mu_{\Omega}(z),\quad d\mu_{\Omega}(z)=\frac{e^{-\pi H_{\Omega}(z, z)}}{|\det(\Im\Omega) |}dz,
\end{equation}
with Hermitian form $H_{\Omega}(x,y)=x^{\dagger}(\Im\Omega)^{-1}y$.
The overall change of frame $z\mapsto Xz$ then corresponds to the choice $ (\CB_{\Pi}\psi)(z)=(\CB_{\Omega}\psi)(X^{-1}z)$ and replaces $\Im\Omega$ with $Q=X(\Im \Omega)X^{\dagger}$ in $\mu_{\Omega}$.

Under the above map, displacement operators $D(z)$ with $z=(z_1, z_2)\in \R^{2n}$ become 
\begin{align}
    T(v)&=\CB_{\Omega}D(z)\CB_{\Omega}^{-1},\quad v=\Omega z_1+z_2, \nonumber\\ 
    T(v)F(z)&=e^{-\frac{\pi}{2}H_{\Omega}(v,v)-\pi H_{\Omega}(v,z)}F(z+v).
\end{align}
The Bargmann transform $\CB_{X, \Omega}$ preserves the commutation relations of displacements. Using the expression derived above, we can directly verify that
\begin{equation}
    T(x)T(y)=e^{i2\pi \Im H_{\Omega}(x,y)}T(y)T(x).
\end{equation}

Symplectic changes of frame on $\R^{2n}$ via $S=\lr{\begin{smallmatrix}
        a & b \\ c & d
    \end{smallmatrix}}\in \Sp_{2n}(\R)$ preserve and act transitively on $\HH_n$ with
\begin{equation}
    (c\Omega+d)^{-T}(\Omega\; I_n)S^T=(S.\Omega\, I_n),\quad
    S.\Omega=(a\Omega+b)(c\Omega+d)^{-1}, \label{eq:modular_trans}
\end{equation}
together with the normalized coordinate transformation
$S.z=(c\Omega+d)^{-T}z$. According to this action of $\Sp_{2n}(\R)$,
$\HH_n$ labels choices for symplectically equivalent representations
of the Heisenberg group via $T(z)$.
We use the convention $\Pi\mapsto\Pi S^T$, corresponding to the real
column transformation $v\mapsto S^Tv$. Equation~\eqref{eq:modular_trans}
then gives the Segal-Bargmann frame change
\begin{equation}
 (\Omega,X)\longmapsto\bigl(S.\Omega,\,X(c\Omega+d)^T\bigr).
\end{equation}
For example, the Segal-Bargmann vacuum $F(z)=1$ has preimage
$\CB_{\Omega}^{-1}1=c_{\Omega}e^{-\frac{i}{2}q^T\overline{\Omega}q}
=\phi_{\Omega}(q)$. A metaplectic operator $U_{S^T}$ implementing
$S^T$, with $U_{S^T}D(v)U_{S^T}^{\dagger}=D(S^Tv)$, acts as
\begin{equation}
 U_{S^T}\phi_{\Omega}
 =e^{i\chi(S^T,\Omega)}\phi_{S^{-1}.\Omega},
\end{equation}
where $\chi(S^T,\Omega)$ is the metaplectic phase
(see, e.g., ref.~\cite{Hall2013}).

\subsection{Tempered distribution spaces}

The action of operators $A$ on holomorphic functions can be written as 
\begin{equation}
    \int_{\C^n} K_A(z, \overline{w}) F(w) \mu_{\Omega}(w)
\end{equation}
for $K_A(z, \overline{w})$ holomorphic in the first and antiholomorphic in the second variable. For trace-class $A$, the trace is expressed as
\begin{equation}
    \Tr[A]=\int_{\C^n}K_A(z,\overline{z}) d\mu_{\Omega}(z)
\end{equation}
and membership in the Schwartz or dual Schwartz operator space is summarized by the following conditions~\cite{Toft2016}
\begin{align}
    C_A(z, \overline{w})&= e^{-\frac{\pi}{2}[H(z, z)+H(w,w)]}K_A(z, \overline{w}),\nonumber\\
    A\in \CS(\R^{2n}) &\Leftrightarrow |C_A(z, \overline{w})|=O((1+\|z\|+\|w\|)^{-N})\;\; \forall N\geq 1,\nonumber\\
    A\in \CS'(\R^{2n}) &\Leftrightarrow \exists N:\;\; |C_A(z, \overline{w})|=O((1+\|z\|+\|w\|)^{N}).\label{eq:dual_schwartz}
\end{align}

In the following we define a Hilbert space containing the Segal-Bargmann image of $\CS'(\R^{2n})$. This is done by defining an inner product relative to a Schwartz kernel, or, similarly, by considering a regularization $X_{\beta}=R_{\beta}XR_{\beta}$, $R_{\beta}=(1+\beta)^{-\hat{N}/2}$ and the normalized inner product $\langle X, Y\rangle_{\beta} = N_{\beta}\Tr[X_{\beta}^{\dagger}Y_{\beta}]$. 
To define the kernel space, we denote by $\CO(\C^n)$ the space of holomorphic functions $f: \C^n\to \C$. The following construction can be understood as an adaptation of the Gelfand-Naimark-Segal construction~\cite{GelfandNaimark1943,Segal1947}.

\begin{definition}\label{def:K_norm}
      Let $K\in \CO(\C^{2n})$ and let $H$ be a positive definite Hermitian form on $\C^n$. The $\beta>0$ regularized inner product is 
      \begin{align}
    \langle K,L\rangle_{H,\beta}
    &=\left(\frac{1+\beta}{\pi}\right)^{2n}
    \int_{\C^n\times\C^n}
    \overline{K(z,\overline w)}L(z,\overline w) \nonumber\\
    &\qquad\times e^{-\pi(1+\beta)[H(z,z)+H(w,w)]}\,dz\,dw,
      \end{align}
      and induces a norm $\|K\|_{H, \beta}=\sqrt{\langle K, K\rangle_{H,\beta}}$.
The associated Hilbert space is the space of entire functions
\begin{equation}
 \mathscr K_{H,\beta}
 =\{K\in \CO(\C^{2n}):\ \|K\|_{H,\beta}<\infty\},
\end{equation}
which we associate with the set of Bargmann kernels $K(v,\overline{w})$.
\end{definition}
We can verify that this definition indeed produces a Hilbert space, in particular that it is complete.
\begin{lemma}\label{lem:kernel_space_complete}
The space $\mathscr K_{H,\beta}$ is complete.
\end{lemma}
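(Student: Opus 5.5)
The plan is to identify $\mathscr K_{H,\beta}$ with a weighted Bergman space of entire functions on $\C^{2n}$ and prove completeness through the standard local sup-norm estimate for holomorphic functions. Write $u=(z,\overline w)\in\C^{2n}$ and define the weight
\begin{equation}
\rho(u)=\left(\tfrac{1+\beta}{\pi}\right)^{2n} e^{-\pi(1+\beta)[H(z,z)+H(w,w)]}.
\end{equation}
The map $w\mapsto\overline w$ preserves Lebesgue measure and $H(w,w)$, so $\|K\|_{H,\beta}$ is the $L^2(\C^{2n},\rho\,du)$ norm of the entire function $K$. The weight $\rho$ is continuous and strictly positive. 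It is therefore bounded below by a positive constant on every compact set, and $\mathscr K_{H,\beta}$ is the subspace of holomorphic elements of $L^2(\C^{2n},\rho\,du)$. Since that $L^2$ space is complete, it will suffice to show that this subspace is closed.

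The key estimate is local uniform control by the weighted norm. Fix a compact set $C\subset\C^{2n}$ and a radius $r>0$. For $u_0\in C$, the mean value property of holomorphic functions on the polydisc (or ball) $B(u_0,r)$, combined with Cauchy--Schwarz, gives
\begin{equation}
|K(u_0)|\le \frac{1}{\mathrm{vol}\,B(u_0,r)}\int_{B(u_0,r)}|K|\,du \le \frac{\|K\|_{H,\beta}}{\sqrt{\mathrm{vol}\,B(u_0,r)\,\inf_{C_r}\rho}},
\end{equation}
where $C_r$ is the closed $r$-neighbourhood of $C$. This shows that $\sup_C|K|\le c_C\|K\|_{H,\beta}$ for a constant $c_C$ depending only on $C$.

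Now let $(K_m)$ be a Cauchy sequence in $\mathscr K_{H,\beta}$. By the estimate above it is uniformly Cauchy on every compact set, so it converges locally uniformly to some $K$. By Weierstrass' theorem, applied in several variables through the iterated Cauchy integral formula, $K$ is entire. Separately, completeness of $L^2(\rho\,du)$ yields a limit $L$ with $\|K_m-L\|_{H,\beta}\to 0$, and a subsequence converges to $L$ almost everywhere. Hence $L=K$ almost everywhere, so $\|K\|_{H,\beta}<\infty$ and $K_m\to K$ in norm. One could instead use Fatou's lemma to get $\|K-K_m\|\le\liminf_j\|K_j-K_m\|$, which avoids passing through $L^2$.

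The only step requiring care is the pointwise estimate, specifically checking that holomorphy in the mixed variables $(z,\overline w)$ is treated correctly. This is handled by the change of variables at the start, after which the argument is the classical completeness proof for Bergman and Fock spaces. I expect no real obstacle beyond this.
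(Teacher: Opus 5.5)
Your proposal is correct and follows essentially the same route as the paper: substituting $\eta=\overline w$ to get a weighted $L^2$ norm of an entire function on $\C^{2n}$, then the mean-value plus Cauchy--Schwarz estimate on balls around a compact set to get locally uniform convergence to an entire limit. The only difference is that the paper concludes norm convergence with the Fatou argument you list as your alternative, rather than by identifying the limit with the $L^2(\rho)$ limit along an a.e.-convergent subsequence.
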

\begin{proof}
Substituting $\eta=\overline w$ and writing $x=(z,\eta)$ gives
\begin{align}
\|K\|_{H,\beta}^2&=\int_{\C^{2n}}|K(x)|^2\rho(x)\,dx, \nonumber\\
\rho(z,\eta)&=\left(\frac{1+\beta}{\pi}\right)^{2n}
e^{-\pi(1+\beta)[H(z,z)+H(\overline\eta,\overline\eta)]}.
\end{align}
Let $B(x,1)$ be the Euclidean unit ball centered at $x$ and $v$ its volume.
Since the real and imaginary parts of $K$ are harmonic, the mean-value
property followed by Cauchy--Schwarz gives
\begin{equation}
|K(x)|^2
=\left|\frac1v\int_{B(x,1)}K(u)\,du\right|^2
\leq\frac1v\int_{B(x,1)}|K(u)|^2\,du.
\end{equation}
For a compact set $E\subset\C^{2n}$, the set
$E_1=\{u:\operatorname{dist}(u,E)\leq1\}$ is compact, so the continuous,
strictly positive weight satisfies $m_E:=\min_{E_1}\rho>0$.
For every $x\in E$, we have $B(x,1)\subset E_1$, and therefore
\begin{equation}
|K(x)|^2
\leq\frac1{v m_E}\int_{B(x,1)}|K(u)|^2\rho(u)\,du
\leq\frac1{v m_E}\|K\|_{H,\beta}^2.
\end{equation}
Taking the supremum over $E$ and applying this estimate to differences gives
\begin{equation}
\sup_E|K_j-K_\ell|
\leq(v m_E)^{-1/2}\|K_j-K_\ell\|_{H,\beta}.
\end{equation}
Thus a Cauchy sequence $(K_j)$ for our norm is uniformly Cauchy on every compact set and converges locally uniformly to an entire function $K$.

It remains to prove convergence in norm.
Given $\varepsilon>0$, choose $N$ such that
$\|K_j-K_\ell\|_{H,\beta}\leq\varepsilon$ for $j,\ell\geq N$.
For each fixed $j\geq N$, Fatou's lemma applied to the nonnegative
functions $|K_j-K_\ell|^2\rho$ yields
\begin{align}
\int_{\C^{2n}}|K_j-K|^2\rho\,dV
&=\int_{\C^{2n}}\liminf_{\ell\to\infty}|K_j-K_\ell|^2\rho\,dV \nonumber \\
&\leq\liminf_{\ell\to\infty}\|K_j-K_\ell\|_{H,\beta}^2
\leq\varepsilon^2.
\end{align}
In particular, $K_j-K$ has finite weighted norm. Since $K_j$ also has
finite norm, the triangle inequality gives $K\in\mathscr K_{H,\beta}$.
The same estimate proves $\|K_j-K\|_{H,\beta}\to0$.
\end{proof}

We may now also define the space of finite-dimensional subspaces of this space, analogous to the code idempotents we considered earlier. For simplicity, here we restrict to SF-idempotents.
\begin{definition}\label{def:GKPGR}
    For $K\geq1$, let $\operatorname{Gr}_K(\mathscr K_{H,\beta})$ be the metric space of complex $K$-dimensional subspaces of $\mathscr K_{H,\beta}$. Each such subspace $V$ is closed and has a bounded orthogonal projection $\Phi_V$, and the projection distance on $\operatorname{Gr}_K(\mathscr K_{H,\beta})$ is given by
    \begin{equation}\label{eq:gaussian_grassmann_gap}
    d_{H,\beta}(V,W)
    =\|\Phi_V-\Phi_W\|_{B(\mathscr K_{H,\beta})}
    \eqqcolon \sup_{\|X\|_{H,\beta}=1}
    \|(\Phi_V-\Phi_W)X\|_{H,\beta}.
\end{equation}
\end{definition}

For the following discussion, the space $\operatorname{Gr}_K(\mathscr K_{H,\beta})$ takes the place of the Grassmannian of code-idempotents $\Gr_A(\CH)$ considered in section~\ref{sec:KL_Geometry}. 

\subsection{Gottesman-Kitaev-Preskill codes}
We are finally ready to define Gottesman-Kitaev-Preskill (GKP) codes~\cite{Conrad_2022, GKP}. These are SF quantum stabilizer codes for $A=\C[\CG]$, where $\CG=\CU H(\Z^n/D\Z^n)$ is the standard unitary representation of the corresponding Heisenberg group $H(\Z^n/D\Z^n)$ and $D={\rm diag}(d_1,\hdots, d_n),\; d_i|d_{i+1}$ the type of the code.

For the following discussion we fix the Bargmann polarization
$\Omega_B=iI_n$, with $\Pi_B=(iI_n\;I_n)$ and Hermitian form
$H_B(z_B,w_B)=z_B^{\dagger}w_B$.
A real GKP lattice $S^TL_D$, where $L_D=\Z^n\oplus D\Z^n$ and
$S=\left(\begin{smallmatrix}a&b\\c&d\end{smallmatrix}\right)
\in\Sp_{2n}(\R)$, has complex image
\begin{align}
 \Lambda_B&=\Pi_BS^TL_D\nonumber
 =X(\Omega\Z^n+D\Z^n),\quad  X=(ic+d)^T\\
 \Omega&=X^{-1}(ia^T+b^T)=S.\Omega_B\in\HH_n.
\end{align}
Thus $\Pi_BS^T=X(\Omega\;I_n)$, as in eq.~\eqref{eq:modular_trans}, describes
the same lattice using polarization $\Omega$ on the reference real
phase space, followed by the change of coordinates $X$.
Since $S$ is symplectic, we have that
\begin{equation}
     X\Im(\Omega)X^{\dagger}=I_n,\qquad
 H_B(Xu,Xv)=H_{\Omega}(u,v).
\end{equation}
In coordinates $u=X^{-1}z_B$ the lattice is
$\Omega\Z^n+D\Z^n$ and the ambient Hermitian form becomes $H_{\Omega}$.
We use these normalized coordinates for the subsequent discussion and
return to $z_B$ when comparing different codes.

The GKP code is defined via stabilizers 
\begin{equation}
    S(\lambda)=\chi_s(\lambda)T(\lambda),\quad  \lambda\in \Lambda=\Omega\Z^n\oplus D\Z^n,
\end{equation}
with $\chi(\lambda)$ a semicharacter. For the stabilizer group to be abelian, it is required that $\Im H_{\Omega}(\Lambda,\Lambda)\in \Z$, i.e., that the torus $X_{\Pi}=\C^n/\Lambda$ has the structure of a polarized abelian variety (see ref.~\cite{mayrand2026complex} for a detailed account of this perspective). 

Let $\Lambda^{\perp}$ be the dual of $\Lambda$ under $\Im H_{\Omega}$. The torus $X_{\Omega D} = \C^n/(\Omega \Z^n + D\Z^n)$ is a $(\det D)^2$-fold cover of $X^{\perp}_{\Omega D} = \C^n/(\Omega D^{-1}\Z^n + \Z^n)$, and $s\in X^{\perp}_{\Omega D}$ labels a syndrome sector.

Using this new toolbox, we can determine the form of $\Delta$ for these codes. Let $a\in A$. %
The conditions $S(\lambda)\Delta(a)=\Delta(a)S(\lambda)=\Delta(a)$ imply %
that 
\begin{equation}
    K_{\Delta(a)}(z, \overline{w}) = G_{\Omega, s}(z)\overline{G_{\Omega, s}(w)} \sum_{\alpha, \beta\in D^{-1}\Z^n / \Z^n} M_{\alpha\beta}(a)\vartheta_{\alpha}(z-s, \Omega)\overline{\vartheta_{\beta}(w-s,\Omega)},\label{eq:KDelta}
\end{equation}
with $G_{\Omega, s}(z)=e^{\frac{\pi}{2}z^T(\Im \Omega)^{-1}z -i2\pi (z-s/2)^T(\Im\Omega)^{-1}\Im s}$, and $\vartheta_{\alpha}(z, \Omega)=\vartheta_{\alpha,0}(z,\Omega)$, where 
\begin{equation}
    \vartheta_{\alpha,\beta}(z, \Omega)=\sum_{m \in \Z^n} e^{i\pi (m+\alpha)^T\Omega (m+\alpha) +i2\pi (m+\alpha)^T(z+\beta) }
\end{equation}
are the classical theta functions. These identify a holomorphic line bundle $L_s$ on the torus $X_{\Omega, D}=\C^n/(\Omega\Z^n + D\Z^n)$ with $\dim H^0(X_{\Omega, D}, L)=\det D$. $H^0(X_{\Omega, D}, L_s)$ has an inner product
\begin{equation}
    \langle f, g \rangle_{\Omega, s}=\int_{X_{\Omega, D}} \overline{f(z)}g(z) e^{-2\pi (\Im z - \Im s)^T (\Im \Omega)^{-1}(\Im z- \Im s)} dz
\end{equation}
and provides representations $\overline{T}(c)$ for the finite Heisenberg group $H(\Z^n/D\Z^n)$,
\begin{equation}
    \overline{T}(\Omega \alpha +\beta)f(z)=e^{i\pi \alpha^T\Omega\alpha+i2\pi\alpha^T(z-s+\beta)}f(z+\Omega\alpha+\beta)
\end{equation}
given by the image of $c=(\alpha, \beta)\in (D^{-1}\Z^n\oplus \Z^n)/(\Z^n\oplus D\Z^n)$.

$K_{\Delta(a)}$ satisfies the tempered distribution growth condition of eq.~\eqref{eq:dual_schwartz}.
\begin{lemma}\label{lem:finite_K_GKP}
    For the GKP code with $\Omega\in \HH_n$ and frame map $\Delta$, we have $\forall a\in A$,
    \begin{equation}
        |e^{-\frac{\pi}{2}[H_{\Omega}(z, z)+H_{\Omega}(w,w)]}K_{\Delta(a)}(z, \overline{w})|\leq \frac{\|M(a)\|_{\ell^1}}{\det(\Im\Omega)}(1+\|\Im\Omega\|)^{2n}.\label{eq:bound_KGKP}
    \end{equation}
    In particular, for $\beta>0$, $\|K_{\Delta(a)}\|_{H_{\Omega}, \beta}<\infty$.
\end{lemma}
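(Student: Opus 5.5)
The bound will follow from the factorized form of the kernel in eq.~\eqref{eq:KDelta}. Multiplying $K_{\Delta(a)}$ by $e^{-\frac{\pi}{2}[H_\Omega(z,z)+H_\Omega(w,w)]}$ splits it into a finite sum of products of one function of $z$ and one of $w$. Each factor has the form
\begin{equation}
 f_{\alpha}(z)=e^{-\frac{\pi}{2}H_\Omega(z,z)}G_{\Omega,s}(z)\vartheta_\alpha(z-s,\Omega).
\end{equation}
I will show that $|f_\alpha|$ is bounded uniformly in $z$. The triangle inequality over $\alpha,\beta$ then gives $\sum_{\alpha\beta}|M_{\alpha\beta}(a)|\sup|f_\alpha|\sup|f_\beta| = \|M(a)\|_{\ell^1}(\sup|f|)^2$.

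For the pointwise bound, I write $z=x+iy$ and combine the exponents. Using $H_\Omega(z,z)=z^\dagger(\Im\Omega)^{-1}z$ and $\mathrm{Re}\,(z^T(\Im\Omega)^{-1}z)=x^T(\Im\Omega)^{-1}x-y^T(\Im\Omega)^{-1}y$, the Gaussian prefactors combine to $e^{-\pi y^T(\Im\Omega)^{-1}y}$, up to the $s$-dependent linear term in $G_{\Omega,s}$. Next I complete the square inside the theta series. A single term has modulus
\begin{equation}
 \exp\bigl(-\pi(m+\alpha)^T\Im\Omega(m+\alpha)-2\pi(m+\alpha)^T\Im(z-s)\bigr).
\end{equation}
Setting $u=(\Im\Omega)^{-1}\Im(z-s)$, this equals $e^{\pi u^T\Im\Omega\, u}\,e^{-\pi(m+\alpha+u)^T\Im\Omega(m+\alpha+u)}$. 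The first factor cancels exactly against the Gaussian prefactor; I expect the shift by $s$ built into $G_{\Omega,s}$ to make this cancellation exact, which is the point of that normalization. What remains is $\sum_{m\in\Z^n}e^{-\pi(m+c)^T\Im\Omega(m+c)}$ with $c=\alpha+u$ real, a periodized Gaussian. The $w$-factor is handled identically.

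The main task is to bound this periodized Gaussian uniformly in $c$ by the stated quantity $\det(\Im\Omega)^{-1/2}(1+\|\Im\Omega\|)^{n}$. Two routes are available:
\begin{itemize}
\item[(i)] Poisson summation gives $\det(\Im\Omega)^{-1/2}\sum_k e^{-\pi k^T(\Im\Omega)^{-1}k}e^{2\pi i k^Tc}$, and its modulus is at most $\det(\Im\Omega)^{-1/2}\sum_k e^{-\pi k^T(\Im\Omega)^{-1}k}$.
\item[(ii)] Lattice-point counting compares the sum with the integral over unit cubes.
\end{itemize}
I would try (i). The dual sum is bounded by $\prod_i\sum_{k_i}e^{-\pi k_i^2/\|\Im\Omega\|}$, since $(\Im\Omega)^{-1}\ge\|\Im\Omega\|^{-1}I$. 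Each one-dimensional factor is at most $1+\sqrt{\|\Im\Omega\|}\le 1+\|\Im\Omega\|$, using the integral comparison $\sum_k e^{-\pi k^2/t}\le 1+\sqrt t$ and, for the last step, the elementary inequality $\sqrt t\le 1+t$. Squaring, because the $z$ and $w$ factors each contribute one bound, yields eq.~\eqref{eq:bound_KGKP}. Keeping track of the constants (the $\det$ power and the exponent $2n$) is the fiddly part, but any polynomial bound of this shape suffices.

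Finiteness of the regularized norm follows directly. The weight in Definition~\ref{def:K_norm} with $H=H_\Omega$ is $e^{-\pi(1+\beta)[H(z,z)+H(w,w)]}$, which equals $e^{-\pi[H(z,z)+H(w,w)]}$ times $e^{-\pi\beta[H(z,z)+H(w,w)]}$. Hence $|K|^2$ times the weight is at most $C^2 e^{-\pi\beta[H(z,z)+H(w,w)]}$, where $C$ is the constant in eq.~\eqref{eq:bound_KGKP}. This is integrable over $\C^{2n}$ because $H_\Omega$ is positive definite, so $\|K_{\Delta(a)}\|_{H_\Omega,\beta}<\infty$.
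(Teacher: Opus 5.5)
Your proof follows the paper's route. The Gaussian prefactor cancels the completed square of the theta series. Poisson summation turns the periodized Gaussian into $\det(\Im\Omega)^{-1/2}\sum_k e^{-\pi k^T(\Im\Omega)^{-1}k}$. The estimate $(\Im\Omega)^{-1}\ge\|\Im\Omega\|^{-1}I$ then reduces everything to one-dimensional sums. Keeping $s$ is harmless, and your expected exact cancellation does hold: with $\tau=\Im s$ and $z=x+iy$, the prefactor has modulus $e^{-\pi(y-\tau)^T(\Im\Omega)^{-1}(y-\tau)}$, which is exactly $e^{-\pi u^T\Im\Omega\,u}$ for your $u$. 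The paper simply suppresses $s$. Your finiteness argument for $\|K_{\Delta(a)}\|_{H_\Omega,\beta}$ is also correct.

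One step fails as written. From $\sum_{k\in\Z}e^{-\pi k^2/t}\le 1+\sqrt t$ you cannot conclude $\le 1+t$ when $t=\|\Im\Omega\|<1$, because then $\sqrt t>t$. The inequality $\sqrt t\le 1+t$ only gives $2+t$. For small $\Im\Omega$ your chain therefore yields $(1+\sqrt{\|\Im\Omega\|})^{2n}$ or $(2+\|\Im\Omega\|)^{2n}$, not the stated constant.

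The paper's geometric-series estimate fixes this uniformly in $t$. Use $k^2\ge k$ for $k\ge 1$ and $e^x-1\ge x$:
\[
\sum_{k\in\Z}e^{-\pi k^2/t}\le 1+2\sum_{k\ge1}e^{-\pi k/t}=1+\frac{2}{e^{\pi/t}-1}\le 1+\frac{2t}{\pi}\le 1+t
\]
for all $t>0$. With this replacement each factor is bounded by $\det(\Im\Omega)^{-1/2}(1+\|\Im\Omega\|)^n$, and the stated inequality follows exactly.
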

\begin{proof}
    Using $\Omega = X+iY$, $z=x+iy$, we have \begin{equation}|e^{-\frac{\pi}{2}H_{\Omega}(z,z)}G_{\Omega}(z)\vartheta_{\alpha}(z, \Omega)| = e^{-\pi y^TY^{-1}y}|\vartheta_{\alpha}(z, \Omega)|.\end{equation}
Following the derivation of ref.~\cite[eq. (3)]{Deconinck2004},  Poisson summation yields
\begin{align}
    e^{-\pi y^TY^{-1}y}|\vartheta_{\alpha}(z, \Omega)|&\leq \det(Y)^{-\frac{1}{2}}\sum_{k\in \Z^n} e^{-\pi k^TY^{-1}k} \nonumber\\ 
    &\leq \det(Y)^{-\frac{1}{2}}\sum_{k\in \Z^n} e^{-\pi k^Tk/\|Y\|}\nonumber \\
    &\leq \det(Y)^{-\frac{1}{2}} (1+\|Y\|)^n,
\end{align}
where the last inequality follows from the geometric series, and $\|K_{\Delta}(a)\|_{\beta}<\infty$ follows from def.~\ref{def:K_norm}.
\end{proof}

Writing the kernel in eq.~\eqref{eq:KDelta} as $K_{\Omega,s;a}$, the change to physical coordinates gives
\begin{equation}
 K^B_{\Omega,X,s;a}(z_B,\overline{w_B})
 =K_{\Omega,s;a}\bigl(X^{-1}z_B,\overline{X^{-1}w_B}\bigr).
\end{equation}
With $H_B(Xz,Xz)=H_{\Omega}(z,z)$ and the Jacobian of
$(z,w)\mapsto(Xz,Xw)$ in def.~\ref{def:K_norm},  one has
 $\|K^B_{\Omega,X,s;a}\|_{H_B,\beta}
 =|\det X|^2\|K_{\Omega,s;a}\|_{H_{\Omega},\beta}$,
so Lemma~\ref{lem:finite_K_GKP} also gives
$K^B_{\Omega,X,s;a}\in\mathscr K_{H_B,\beta}$.

In the following two lemmas we show that, for fixed $D$, the spaces of logical GKP operators $V_p$ are continuous in the parameters $p\in \HH_n\times\C^n\times U(n)$ and that the map $p\mapsto V_p$ is indeed locally a smooth embedding that descends to a global embedding on the appropriate quotient space. For clarity, in each proof we first fix the rotation and then allow it to vary.

Let $D$ be fixed and $d=\det D$. We set $\mathcal U_0\subset\HH_n\times\C^n\times U(n)$ to be an open
neighbourhood whose rotation component lies in one coordinate chart $R:W\subset\R^{n^2}\to U(n)$. $R(\eta)$ can be chosen to have the form $e^{i\sum_{i=1}^{n^2}\eta_i {\mathfrak{r}_i}}$ with $\mathfrak{r}_i$ a basis of Hermitian matrices.

Write $p=(\Omega,s,R(\eta))$ and choose a smooth frame $X_0(\Omega,s)$ satisfying
$X_0\Im\Omega X_0^{\dagger}=I_n$, such as
$X_0=(\Im\Omega)^{-1/2}$. Setting
$X(p)=R(\eta)X_0(\Omega,s)$ gives
\begin{equation}\label{eq:gkp_physical_frame}
 X(p)\Im\Omega X(p)^{\dagger}=I_n.
\end{equation}
The physical syndrome is given by the local lift
$s_B=X(p)s$, and, for $\alpha\in D^{-1}\Z^n/\Z^n$, we set
\begin{align}
 f_{\alpha,p}(u)&=G_{\Omega,s}(u)\vartheta_\alpha(u-s,\Omega), \quad &&f^B_{\alpha,p}(z)=f_{\alpha,p}(X(p)^{-1}z)\nonumber\\
 K_{p,\alpha\gamma}(u,\overline v)
 &=f_{\alpha,p}(u)\overline{f_{\gamma,p}(v)},\quad &&K^B_{p,\alpha\gamma}(z,\overline w)
 =f^B_{\alpha,p}(z)\overline{f^B_{\gamma,p}(w)}.
 \label{eq:gkp_physical_kernels}
\end{align}
Here $z,w$ are physical coordinates. For a logical operator with
coefficient matrix $M(a)$ in this basis, we write
\begin{equation}\label{eq:gkp_physical_kernel_span}
 K^B_{p,a}=\sum_{\alpha,\gamma}M_{\alpha\gamma}(a)K^B_{p,\alpha\gamma},\quad V_p=\operatorname{span}_{\C}\{K^B_{p,\alpha\gamma}\}
 \subset\mathscr K_{H_B,\beta}.
\end{equation}
The theta functions have distinct cosets of Fourier frequencies, so the
$f_{\alpha,p}$ and their product kernels are linearly independent.
Since $X(p)$ is invertible, we have $\dim V_p=d^2$.

The parameters specifying a GKP code with fixed $D$ are $p=(\Omega,s,\eta)\in \HH_n\times\C^n\times \R^{n^2}$, for which we use the norm
\begin{equation}\label{eq:gkp_parameter_norm}
 \|p\|=\|\Omega\|_F+\|s\|_2+\|\eta\|_2.
\end{equation}

\begin{lemma}[Local Lipschitz continuity of GKP kernels]\label{thm:gkp_kernel_lipschitz}
Fix $p_0\in\mathcal U_0$ and $\beta>0$.
Assume that $M(a)$ in eq.~\eqref{eq:gkp_physical_kernel_span}
is independent of $p$.
There exist a neighbourhood $U\subset\mathcal U_0$ of $p_0$
and a constant $C_{U,\beta}>0$ such that, for every
$p,p'\in U$ and $a\in A$,
\[
 \|K^B_{p,a}-K^B_{p',a}\|_{H_B,\beta}
 \le
 C_{U,\beta}\|M(a)\|_{\ell^1}\|p-p'\|
\]
where $\|M(a)\|_{\ell^1}=\sum_{\alpha,\gamma}
|M_{\alpha\gamma}(a)|$.
The kernels and the orthogonal projections onto $V_p$ depend smoothly
on $p$. In particular, there is a constant $\widetilde C_{U,\beta}>0$ such that
\begin{equation}
 d_{H_B,\beta}(V_p,V_{p'})\leq\widetilde C_{U,\beta}\|p-p'\|.\label{eq:V_upperbound}
\end{equation}
\end{lemma}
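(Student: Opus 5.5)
By linearity in $M(a)$ and the triangle inequality, it suffices to prove the Lipschitz bound for each basis kernel $K^B_{p,\alpha\gamma}$. The general statement then follows with the factor $\|M(a)\|_{\ell^1}$. The basis kernels are products $f^B_{\alpha,p}(z)\overline{f^B_{\gamma,p}(w)}$, and the weight in Definition~\ref{def:K_norm} factorizes over $z$ and $w$. Hence $\|K^B_{p,\alpha\gamma}\|_{H_B,\beta}=\|f^B_{\alpha,p}\|_{\beta}\|f^B_{\gamma,p}\|_{\beta}$, where
\[
\|f\|_\beta^2=\left(\tfrac{1+\beta}{\pi}\right)^n\int_{\C^n}|f(z)|^2e^{-\pi(1+\beta)H_B(z,z)}dz .
\]
Writing $f\bar g-f'\bar g'=(f-f')\bar g+f'(\overline{g-g'})$ reduces everything to two one-variable claims: a locally uniform bound on $\|f^B_{\alpha,p}\|_\beta$, and a Lipschitz estimate $\|f^B_{\alpha,p}-f^B_{\alpha,p'}\|_\beta\le C\|p-p'\|$ for $p,p'$ in a small convex neighbourhood $U$ of $p_0$.

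To obtain the Lipschitz estimate, I would use the mean value theorem along the segment from $p'$ to $p$. This gives
\[
|f^B_{\alpha,p}(z)-f^B_{\alpha,p'}(z)|\le\|p-p'\|\sup_{q\in U}\|\partial_q f^B_{\alpha,q}(z)\|.
\]
The task is then to show that the parameter derivatives satisfy a pointwise bound of the form $|\partial_q f^B_{\alpha,q}(z)|\le C(1+\|z\|)^{N}e^{\frac{\pi}{2}H_B(z,z)}$, uniformly for $q\in U$. Once this holds, the surplus Gaussian factor $e^{-\pi\beta H_B(z,z)}$ makes the weighted integral finite, and the bound follows.

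The main obstacle is to obtain this uniform derivative bound. Differentiating with respect to $\eta$, and with respect to $\Omega$ through $X_0=(\Im\Omega)^{-1/2}$, enters through the argument $u=X(q)^{-1}z$. By the chain rule this produces $\nabla_u f\cdot \partial_q(X^{-1})z$, which costs one factor $\|z\|$. Differentiating with respect to $\Omega$ and $s$ inside $G_{\Omega,s}$ and $\vartheta_\alpha$ brings down polynomial factors: $(m+\alpha)^T\partial\Omega(m+\alpha)$ inside the theta series, and linear or quadratic terms in $u$ and $s$ from $G$.

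I would handle these factors by repeating the Poisson-summation estimate of Lemma~\ref{lem:finite_K_GKP} with the extra polynomial weights. Alternatively, I would apply Cauchy estimates in $\Omega$ and $z$ on small polydiscs. The bound $e^{-\pi y^TY^{-1}y}|\vartheta_\alpha|\le C$ holds uniformly on a slightly enlarged compact set with $\Im\Omega\ge cI$. Cauchy's estimate then converts this into a bound on the derivatives with at most polynomial growth. The mismatch between the Gaussian factor $e^{\frac{\pi}{2}H_B(z,z)}$ and the exponent of $G$ is absorbed by the relation $X(q)\Im\Omega X(q)^\dagger=I_n$. I expect this bookkeeping, keeping every constant uniform over $U$, to be the delicate part of the argument.

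Smoothness follows in the same way: iterating the differentiation gives norm-convergent difference quotients, so $p\mapsto K^B_{p,\alpha\gamma}$ is smooth into $\mathscr K_{H_B,\beta}$. For the projections, I would form the Gram matrix $G(p)_{(\alpha\gamma),(\alpha'\gamma')}=\langle K^B_{p,\alpha\gamma},K^B_{p,\alpha'\gamma'}\rangle_{H_B,\beta}$. This matrix is smooth in $p$ and invertible at $p_0$ by the linear independence noted above, so after shrinking $U$ we may assume $\|G(p)^{-1}\|\le c$ on $U$. The projection is then $\Phi_{V_p}=\sum K^B_{p,i}\,(G(p)^{-1})_{ij}\langle K^B_{p,j},\cdot\rangle$, which is smooth in $p$. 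Bounding its difference at $p$ and $p'$ term by term with the Lipschitz estimate yields eq.~\eqref{eq:V_upperbound}.
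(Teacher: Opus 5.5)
Your proposal is correct and follows essentially the paper's route: a uniform pointwise derivative bound $|\partial_q f^B_{\alpha,q}(z)|\le C(1+\|z\|)^N e^{\frac{\pi}{2}\|z\|^2}$ obtained from polynomially weighted Gaussian lattice sums after completing the square (the paper's $S_j(u)\le C_j(1+\|u\|_2^j)$). This bound is converted to physical coordinates via $X\Im\Omega X^\dagger=I_n$, integrated along a segment, and transferred to $T_p(T_p^*T_p)^{-1}T_p^*$; your factorization of the kernel norm into one-variable norms and your Gram-matrix Lipschitz argument are only cosmetic variants. One small caveat: the Cauchy-estimate alternative with a fixed polydisc radius $r$ in $\Omega$ gives an extra factor $e^{Cr\|z\|^2}$ rather than polynomial growth. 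That factor is harmless only once $r$ is taken small relative to $\beta$ (or shrunk with $\|z\|$), so the lattice-sum route is the one to carry out.
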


\begin{proof}
A fixed rotation acts unitarily on $\mathscr K_{H_B,\beta}$ by rotating
both kernel arguments, preserving the kernel norm and the projection distance.
We may therefore first write $X=X_0$ and fix the rotation. Choose $U$ with compact closure in $\mathcal U_0$ and convex coordinate image, and constants $c,R_s>0$
such that, throughout $U$,
\begin{equation}
     Y:=\Im\Omega\ge cI_n,\qquad
 \|s\|_2\le R_s.
\end{equation}

The frame $X$, its inverse and their derivatives are bounded on this
closure.
For the segment $p_t=p+t(p'-p)$, Minkowski's inequality gives
\begin{equation}
 \|K^B_{p',a}-K^B_{p,a}\|_{H_B,\beta}
 \leq
 \int_0^1\|\dot K^B_{p_t,a}\|_{H_B,\beta}\,dt.
 \label{eq:Minkowski}
\end{equation}
It therefore suffices to bound $\|\dot K^B_{p_t,a}\|_{H_B,\beta}$.

All constants below may depend on this neighbourhood and the fixed characteristic set, but not on its points, the characteristic, or $a$. For a characteristic $\alpha$, we set
\begin{equation}
 f_{\alpha,p}(z)
 =G_{\Omega,s}(z)\vartheta_\alpha(z-s,\Omega),
 \qquad
 E_\Omega(z)=
 \exp\!\left(\frac{\pi}{2}z^\dagger Y^{-1}z\right).
\end{equation}
We first obtain uniform bounds for these functions and their real directional derivatives.
Let
\begin{equation}
 A=Y^{-1},\qquad r_m=m+\alpha,\qquad
 u=A\Im(z-s)\in\R^n.
\end{equation}
Completing the square gives the identity
\begin{equation}
 \left|
 G_{\Omega,s}(z)
 e^{i\pi r_m^T\Omega r_m+i2\pi r_m^T(z-s)}
 \right|
 =
 E_\Omega(z)
 e^{-\pi(r_m+u)^TY(r_m+u)}
\end{equation}
and for each nonnegative integer $j$, we define
\begin{equation}
 S_j(u)=
 \sum_{m\in\Z^n}\|r_m\|_2^j
 e^{-\pi(r_m+u)^TY(r_m+u)}.
\end{equation}

For each $0\leq j < \infty$, the function
\begin{equation}
 v\longmapsto
 \sum_{m\in\Z^n}\|m+v\|_2^j
 e^{-\pi c\|m+v\|_2^2}
\end{equation}
is uniformly bounded, and one can show that its defining series converges. 
Using $Y\geq cI_n$ and
\begin{equation}
 \|r_m\|_2^j
 \le 2^{j-1}(\|r_m+u\|_2^j+\|u\|_2^j),
\end{equation}
together with a bound on $x^je^{-\lambda x}\leq C_{j,\lambda}\,\forall x>0$,
one can show that, for every $j\geq 0$, one has
\begin{equation}
 S_j(u)\le C_j(1+\|u\|_2^j).\label{eq:Sbounds}
\end{equation}

By assumption, we have $\|u\|_2\le c^{-1}(\|z\|_2+R_s)$, so that we obtain the bound
\begin{equation}
 |f_{\alpha,p}(z)|\le C E_\Omega(z).
\end{equation}

Let $H=H^T$ be a direction in the $\Omega$ variable
and write $B=\Im H$, $\tau=\Im s$.
Since the derivative of $A$ in this direction is
$-ABA$, we have
\begin{equation}
 \ell_\Omega(z;H)
 :=\partial_\Omega\log G_{\Omega,s}(z)[H]
 =
 -\frac{\pi}{2}z^TABAz
 +i2\pi(z-s/2)^TABA\tau.
\end{equation}
Thus
\begin{equation}
 |\ell_\Omega(z;H)|
 \le C\|H\|_F(1+\|z\|_2^2).
\end{equation}
We also need to bound the derivatives
\begin{align*}
 \partial_\Omega f_{\alpha,p}(z)[H]
 =
 G_{\Omega,s}(z)
 \sum_{m\in\Z^n}
 \bigl(\ell_\Omega(z;H)+i\pi r_m^THr_m\bigr)
 e^{i\pi r_m^T\Omega r_m+i2\pi r_m^T(z-s)},
\end{align*}
and using the preceding bounds we obtain
\begin{equation}
 |\partial_\Omega f_{\alpha,p}(z)[H]|
 \le
 C\|H\|_F(1+\|z\|_2^2)E_\Omega(z).
\end{equation}

Similarly, for a direction $\sigma\in\C^n$ in the
$s$ variable, put $\nu=\Im\sigma$. Then
\begin{equation}
 \ell_s(z;\sigma)
 :=\partial_s\log G_{\Omega,s}(z)[\sigma]
 =
 i\pi\sigma^TA\tau
 -i2\pi(z-s/2)^TA\nu,
\end{equation}
so that
\begin{equation}
 |\ell_s(z;\sigma)|
 \le C\|\sigma\|_2(1+\|z\|_2).
\end{equation}
Since differentiating the theta series contributes
$-i2\pi r_m^T\sigma$, we obtain
\begin{equation}
 |\partial_s f_{\alpha,p}(z)[\sigma]|
 \le
 C\|\sigma\|_2(1+\|z\|_2)E_\Omega(z).
\end{equation}
For the derivative in $z$ (which comes from varying the frame $X(p)$), we have
\begin{equation}
 \partial_z\log G_{\Omega,s}(z)[v]
 =\pi v^TAz-2\pi i\,v^TA\tau,
\end{equation}
where $v\in\C^n$. Differentiating a theta summand contributes $2\pi i\,r_m^Tv$.
The bounds on $S_0$ and $S_1$ therefore give
\begin{equation}\label{eq:gkp_spatial_derivative_bound}
 |\partial_zf_{\alpha,p}(z)[v]|
 \le C\|v\|_2(1+\|z\|_2)E_\Omega(z).
\end{equation}

We set $C_p=X(p)^{-1}$. For a real parameter direction $h=(H,\sigma,0)$,
the chain rule gives
\begin{equation}
 df^B_{\alpha,p}(z)[h]
 =\partial_pf_{\alpha,p}(C_pz)[h]
  +\partial_zf_{\alpha,p}(C_pz)[(dC_p[h])z].
\end{equation}
By eq.~\eqref{eq:gkp_physical_frame},
\[
 E_\Omega(C_pz)=\exp\!\left(\frac\pi2\|z\|_2^2\right)
 =:E_B(z).
\]
Using the local bounds on $C_p$ and $dC_p$, we obtain
\begin{align}
 |f^B_{\alpha,p}(z)|&\le CE_B(z),\quad
 |\mathrm df^B_{\alpha,p}(z)[h]|
 \le C\|h\|(1+\|z\|_2^2)E_B(z).
 \label{eq:gkp_physical_derivative_bound}
\end{align}
Since $M(a)$ is constant, the product rule yields
\begin{align*}
 |\mathrm dK^B_{p,a}(z,\overline w)[h]|
 \le C\|M(a)\|_{\ell^1}\|h\|(1+\|z\|_2^2+\|w\|_2^2)E_B(z)E_B(w).
\end{align*}
Multiplying the squared bound by the weight in def.~\ref{def:K_norm} gives
\begin{align*}
 &|\mathrm dK^B_{p,a}(z,\overline w)[h]|^2
 e^{-\pi(1+\beta)(\|z\|_2^2+\|w\|_2^2)}\\
 &\quad\le C\|M(a)\|_{\ell^1}^2\|h\|^2
 (1+\|z\|_2^2+\|w\|_2^2)^2
 e^{-\pi\beta(\|z\|_2^2+\|w\|_2^2)}.
\end{align*}
The right-hand side is integrable for every $\beta>0$, uniformly on
$U$. Hence
\begin{equation}
 \|\mathrm dK^B_{p,a}[h]\|_{H_B,\beta}
 \le C_{U,\beta}\|M(a)\|_{\ell^1}\|h\|,
\end{equation}
and eq.~\eqref{eq:Minkowski} proves the first bound: the kernels $K^B_{p,a}$ are Lipschitz continuous in $p$. 

Following the same reasoning, using the bound in eq.~\eqref{eq:Sbounds}, it can be shown that, for any multiindex $\nu$, we have
\begin{equation}
    |\partial_p^{\nu} K_{p,\alpha\gamma}^B(z, \overline{w})|\leq C_{\nu}(1+\|z\|+\|w\|)^{2|\nu|} e^{\frac{\pi}{2}(\|z\|^2+\|w\|^2)}.
\end{equation}
This then implies that $K_{p,\alpha\gamma}^B(z, \overline{w})$ are smooth functions of $p$.

To obtain the corresponding bound for the subspaces $V_p$, we consider their orthogonal projections.
Let $T_p:\C^{d^2}\to\mathscr H$ have these kernels as columns. These operators play a role similar to that of $\CE_p\Delta$ in sec.~\ref{sec:KL_Geometry}.
Their linear independence makes $T_p^*T_p$ positive definite, so
\begin{equation}\label{eq:gkp_physical_projection}
 \Phi(p)=T_p(T_p^*T_p)^{-1}T_p^*
\end{equation}
is smooth in operator norm and idempotent, with $ \Phi(p)^2= \Phi(p)$. Its derivative is
uniformly bounded on $\overline{U}$. Integration along the segment from $p$ to $p'$ then proves the upper bound in eq.~\eqref{eq:V_upperbound}.
Finally, choose $X=R(\eta)X_0(\Omega,s)$ and allow $\eta$ to vary. Then $C_p=X^{-1}=X_0(\Omega,s)^{-1}R(\eta)^{\dagger}$.
The normalized functions $f_{\alpha,p}$ are independent of $\eta$, so
\begin{equation}
 \partial_\eta f^B_{\alpha,p}(z)[\xi]
 =\partial_z f_{\alpha,p}(C_pz)[(\partial_\eta C_p[\xi])z].
\end{equation}
On $\overline U$, we have $\|\partial_\eta C_p[\xi]\|\le C\|\xi\|_2$.
Since $X\Im\Omega X^{\dagger}=I_n$,
eq.~\eqref{eq:gkp_spatial_derivative_bound} gives the same Gaussian
derivative bound for every direction $h=(H,\sigma,\xi)$.
The same estimates for higher and mixed derivatives imply smoothness, and integration on a smaller convex neighbourhood gives the kernel bound, so
eq.~\eqref{eq:gkp_physical_projection} applied to the rotated kernel
columns proves the main statement.
\end{proof}

This lemma shows that the kernels $K_{p,a}^B$ are Lipschitz continuous and, furthermore, smooth functions of the parameter $p\in U$. This then also applies to the distribution spaces $V_p$. In the following, we further show that these subspaces can be continuously identified with their parameters $p$.

\begin{lemma}[Local lower bound for GKP kernel subspaces]\label{lem:gkp_kernel_lower}
Let $V_p$ be as in eq.~\eqref{eq:gkp_physical_kernel_span},
and fix $p_0\in\mathcal U_0$ and $\beta>0$.
There are a neighbourhood $U\subset\mathcal U_0$ of $p_0$ with convex
coordinate image and a constant $c_{U,\beta}>0$
such that, for all
$p,p'\in U$,
\begin{equation}\label{eq:gkp_grassmann_lower}
 d_{H_B,\beta}(V_p,V_{p'})\geq c_{U,\beta} \|p-p'\|.
\end{equation}
\end{lemma}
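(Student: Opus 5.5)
The plan is to deduce the lower bound from the smoothness already established in Lemma~\ref{thm:gkp_kernel_lipschitz}, together with an \emph{injectivity of the differential} argument. I will view $p\mapsto\Phi(p)=T_p(T_p^*T_p)^{-1}T_p^*$ as a smooth map from a finite-dimensional parameter domain into $B(\mathscr K_{H_B,\beta})$. If its differential $\mathrm d\Phi_{p_0}$ is injective, then by continuity there is a neighbourhood $U$ with compact closure and convex coordinate image, and a constant $c>0$, such that $\|\mathrm d\Phi_p[h]\|\geq 2c\|h\|$ for all $p\in U$. The estimate uses compactness of the unit sphere in parameter space. Lemma~\ref{thm:gkp_kernel_lipschitz} also gives a uniform bound $C_2$ on second derivatives over $\overline U$. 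Taylor's formula along the segment from $p$ to $p'$ then yields
\begin{equation*}
\|\Phi(p')-\Phi(p)\|\geq 2c\|p'-p\|-C_2\|p'-p\|^2 .
\end{equation*}
Shrinking $U$ so that its diameter is at most $c/C_2$ gives eq.~\eqref{eq:gkp_grassmann_lower} with $c_{U,\beta}=c$. Everything therefore reduces to showing that $\mathrm d\Phi_{p}$ is injective.

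For the injectivity, I will first rewrite the condition $\mathrm d\Phi_p[h]=0$. Differentiating $\Phi T=T$ and $\Phi^2=\Phi$ shows that $\mathrm d\Phi_p[h]=0$ forces $(1-\Phi(p))\,\mathrm dT_p[h]=0$. In other words, every $\mathrm dK^B_{p,\alpha\gamma}[h]$ lies in $V_p$. Expanding this gives
\begin{equation*}
\mathrm dK^B_{p,\alpha\gamma}[h]=\mathrm df^B_\alpha[h]\otimes\overline{f^B_\gamma}+f^B_\alpha\otimes\overline{\mathrm df^B_\gamma[h]} .
\end{equation*}
I then fix $w$ at points where the $\overline{f^B_\gamma(w)}$ separate the functions, using their linear independence. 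This should give $\mathrm df^B_{\alpha,p}[h]\in W_p:=\operatorname{span}\{f^B_{\beta,p}\}$ modulo a multiple of $f^B_\alpha$, so in any case $\mathrm df^B_{\alpha,p}[h]\in W_p$. Now $W_p$ is exactly the space of holomorphic functions satisfying the quasi-periodicity of the stabilizers,
\begin{equation*}
f(z+\lambda)=e_p(\lambda,z)f(z),\qquad \lambda\in\Lambda_B(p)=X(p)(\Omega\Z^n+D\Z^n),
\end{equation*}
with automorphy factor $e_p$ determined by $(\Omega,s,X)$. Differentiating this identity in $p$ along $h$, and using that $\mathrm df[h]$ satisfies the same identity, leaves
\begin{equation*}
\partial_zf(z+\lambda)[\mathrm d\lambda[h]]=\mathrm d\log e_p(\lambda,z)[h]\,f(z+\lambda).
\end{equation*}
The right-hand side is $f$ times a polynomial of degree at most one in $z$. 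The left-hand side is $f$ times a quasi-periodic logarithmic derivative, which is not polynomial unless $\mathrm d\lambda=0$. I will show this by comparing growth along lattice translates, or by evaluating on zeros of $f_\alpha$, which exist for theta functions.

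From $\mathrm d\lambda[h]=0$ for all basis vectors I get two equations, $\mathrm dX\,D=0$ and $\mathrm dX\,\Omega+X\,\mathrm d\Omega=0$. Hence $\mathrm dX=0$, and then $\mathrm d\Omega=0$. Since $X=R(\eta)X_0(\Omega,s)$ with $X_0$ invertible, $\mathrm dX=0$ with $\mathrm d\Omega=0$ gives $\mathrm dR(\eta)[\xi]=0$, hence $\xi=0$, because $R$ is a chart. Finally, with the lattice and frame fixed, $\mathrm d\log e_p=0$ forces $\mathrm ds=0$, since the syndrome enters $e_p$ through the character term $(\Im\Omega)^{-1}\Im s$ and the shift $z-s$. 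A cleaner route for this last step may be to use the characteristic-shift action of $\overline T$ directly. Altogether $h=0$.

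I expect the main obstacle to be the step showing that a nonzero lattice variation $\mathrm d\lambda$ produces a derivative outside $W_p$. This requires care, because the $s$-direction and the $\Omega$-direction interact: at first order a variation of $s$ resembles a translation, which is exactly a $\partial_z$ term. I would first try to settle it by growth. Elements of $W_p$ satisfy $|f(z)|\le C e^{\frac\pi2\|z\|^2}$ with Gaussian factor exactly matched to $\Lambda_B(p)$. By contrast, an uncancelled $\partial_zf[\mathrm d\lambda]$ term gains an unbounded linear factor along lattice orbits, which contradicts the quasi-periodicity bound of Lemma~\ref{lem:finite_K_GKP}. If the syndrome direction needs separate treatment, I will handle it by explicitly differentiating $G_{\Omega,s}$ as in the proof of Lemma~\ref{thm:gkp_kernel_lipschitz}.
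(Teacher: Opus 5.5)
Your proposal is correct. Its outer structure matches the paper's: reduce everything to injectivity of $\mathrm d\Phi$, then use compactness of the unit sphere and an estimate along segments in a small convex neighbourhood. The paper bounds $\|\mathrm d\Phi(p)-\mathrm d\Phi(p_0)\|\le m/2$ where you use a second-derivative bound, and the two are equivalent.

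The injectivity argument is genuinely different. The paper stays in normalized coordinates, uses only $\dot K^B_{p,00}$, and carries the moving-frame term $-(A_Xu)^T\nabla f$. It exploits only the real periods $D\Z^n$: second differences and the Fourier coefficients of $\Theta$ force $A_X=0$. The remaining periodic polynomial then gives $\operatorname{Im}H=\operatorname{Im}\sigma=0$, and a separate Fourier-coefficient comparison is needed for $\operatorname{Re}H=\operatorname{Re}\sigma=0$.

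You instead extract $\dot f^B_{\alpha}\in W_p$ from the tensor structure of $V_p$ and use the full automorphy of $W_p$ in physical coordinates. This turns $\mathrm d\Phi[h]=0$ into rigidity of the stabilizer data. Vanishing of $\dot\lambda$ on a lattice basis kills $\dot X$ and all of $\dot\Omega$ at once (real and imaginary parts), hence $\xi$, and the semicharacter then kills $\dot s$. This is the infinitesimal version of the recovery $\Lambda_p=\{\lambda: T(\lambda)|_{V_p}\in U(1)\}$ used in Theorem~\ref{thm:gkp_local_embedding}. It is more conceptual and coordinate-free. The paper's route is more explicit but needs two separate mechanisms for the real and imaginary parts.

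The step you flag as the main obstacle closes in two lines, but not by growth alone. Both sides of $\partial_\mu f(w)=(c+\pi\mu^\dagger w)f(w)$, with $\mu=\dot\lambda_j$, acquire linear factors along lattice orbits. Translate by $\lambda'\in\Lambda_B(p)$ and use $f(w+\lambda')=e_p(\lambda',w)f(w)$. The left side then shifts by $\pi\lambda'^\dagger\mu$, which is conjugate-linear in $\lambda'$, while the right side shifts by $\pi\mu^\dagger\lambda'$, which is complex-linear. Hence $\operatorname{Im}(\mu^\dagger\lambda')=0$ on a lattice spanning $\C^n$ over $\R$, and evaluating at $i\mu$ forces $\mu=0$. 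The zero-set variant you mention is only immediate for $n=1$.

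Two smaller corrections:
\begin{itemize}
\item The syndrome enters the automorphy factor only through a unitary character of the form $e^{2\pi i\operatorname{Im}H_B(s_B,\lambda)}$, not through the shift $z-s$. Its derivative vanishing on the lattice is what gives $\dot s=0$.
\item If $X_0$ depended on $s$, you would need to take the $\dot s=0$ step before concluding $\xi=0$ from $\dot X=0$.
\end{itemize}
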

\begin{proof}
As above, fixed rotations preserve the distance, so we first write $X=X_0$ and restrict all parameter differences and differentials to a fixed rotation. We retain $\dot X$ in the kernel calculation to handle joint variations at the end. By Lemma~\ref{thm:gkp_kernel_lipschitz}, the orthogonal projection $\Phi(p)$ in eq.~\eqref{eq:gkp_physical_projection} is smooth in the
fixed space $\mathscr H=\mathscr K_{H_B,\beta}$.

The key point of this statement is to show that $\mathrm d\Phi(p)$ is injective. To this end, let
$h=(H,\sigma,0)$, $H=H^T$, be a real tangent direction, and denote differentiation along $p+th$ at $t=0$ by a dot, i.e., $\dot{\Phi}=d\Phi(p)[h]$. Our goal is to show that $\dot \Phi=0$ must also imply $h=0$.

If $\dot \Phi=0$, differentiating
$\Phi(p)K^B_{p,00}=K^B_{p,00}$ yields
\begin{equation}
 (I-\Phi(p))\dot K^B_{p,00}=\dot \Phi\,K^B_{p,00}=0,
\end{equation}
so that $\dot K^B_{p,00}\in {\rm span}\{K^B_{p,\alpha\gamma}\}$. Using $X=X(p)$ in the
coordinate substitution, we define
\begin{equation}
 L(u,\overline v):=\dot K^B_{p,00}(Xu,\overline{Xv}),
 \qquad A_X=X^{-1}\dot X,
\end{equation}
where we have $L\in\operatorname{span}\{K_{p,\alpha\gamma}\}$.
Differentiating at fixed physical argument $z_0=Xu$ gives
\begin{equation}
 \dot f^B_{\alpha,p}(Xu)
 =\dot f_{\alpha,p}(u)-(A_Xu)^T\nabla_u f_{\alpha,p}(u),
 \label{eq:gkp_moving_frame_derivative}
\end{equation}
where the dot on the right differentiates parameters at fixed $u$.
Write $G=G_{\Omega,s}$, $\Theta(u)=\vartheta_0(u-s,\Omega)$ and set
\begin{equation}
 B=(\Im\Omega)^{-1},\quad \tau=\Im s,\quad
 \nu=\Im\sigma,\quad \dot B=-B(\Im H)B.
\end{equation}
As in the preceding proof, for fixed $u$ we have
\begin{align}
 \ell(u):=\frac{\dot G(u)}{G(u)}
 =\frac{\pi}{2}u^T\dot Bu-2\pi i(u-s/2)^T(\dot B\tau+B\nu)
 +i\pi\sigma^TB\tau
\end{align}
and if instead we held $z_0=Xu$ fixed, we would obtain
\begin{equation}
  \frac{d}{dt}\log G_{p_t}(u_t) =\ell(u)-(A_Xu)^T\nabla_u\log G(u)\eqqcolon P(u),
\end{equation}
where $P(u)$ is a polynomial of degree at most two.
Let $\mathcal R(u)=P(u)\Theta(u)+\dot\Theta(u)
-(A_Xu)^T\nabla\Theta(u)$, so that at fixed $z=Xu$, we have $\dot f^B_{0,p}(Xu)= G(u)\CR(u)$. Substituting this into eq.~\eqref{eq:gkp_moving_frame_derivative} yields
\begin{equation}
 \frac{L(u,\overline v)}{G(u)\overline{G(v)}}
 =\mathcal R(u)\overline{\Theta(v)}
  +\Theta(u)\overline{\mathcal R(v)}.
\end{equation}
Since $L$ is in the normalized kernel span, this quotient is
$D\Z^n$-periodic in $u$. The functions $\Theta$, $\dot\Theta$ and
$\nabla\Theta$ have the same periods. Thus, for $t\in D\Z^n$,
\begin{equation}\label{eq:gkp_frame_period_difference}
 \bigl[\underbrace{(P(u+t)-P(u))}_{
    \Delta_tP(u)
 }\Theta(u)
       -(A_Xt)^T\nabla\Theta(u)\bigr]\overline{\Theta(v)}=0.
\end{equation}
Taking another difference by $t'\in D\Z^n$ gives
$(\Delta_{t'}\Delta_tP(u))\Theta(u)\overline{\Theta(v)}=0$.
The theta function is non-zero, and $D\Z^n$ spans $\R^n$, so the quadratic part of $P$ vanishes. Consequently $P(u+t)-P(u)=c^Tt$ with constant $c\in \C^n$, and eq.~\eqref{eq:gkp_frame_period_difference} implies
\begin{equation}
 (A_Xt)^T\nabla\Theta(u)=c^Tt\Theta(u).
\end{equation}
Since the coefficients of $\Theta(u)$ are nonzero, we find that $c^T =0$ and hence also $A_X=\dot{X}=0$. Comparing those at $r=e_j$ gives $A_Xt=0$.

Thus $L=\dot K_{p,00}$, and $P=\ell$ is a $D\Z^n$-periodic
polynomial, hence constant. Its quadratic and linear coefficients
give $\dot B=0$ and $B\nu=0$, respectively.
Thus $\Im H=\Im\sigma=0$.

To show that the assumption also implies that $\Re H=\Re\sigma=0$, note that $\ell=i\pi\sigma^TB\tau$ is purely
imaginary, so that $\dot{(G(u)\overline{G(v)})}=0$. 
The quotient $\dot K_{p,00}/(G(u)\overline{G(v)})$ has only integer Fourier frequencies, and therefore we have $\dot K_{p,00}=aK_{p,00}$ for some $a\in\C$. %
Since the constant Fourier coefficient of $\Theta(u)\overline{\Theta(v)}$ is $c_0(p)\overline{c_0(p)}=1$, we moreover have $a=0$. 
Consider the $(r,0)$ element of $K_{p,00}/(G(u)\overline{G(v)})$ with basis $e^{i\pi r^T\Omega r-2\pi i r^Ts} \ne0$. Since we have established that $\dot K_{p,00} =0$ and $\dot{(G(u)\overline{G(v)})}=0$, we must have
\begin{equation}
\dot{c}_r(p) = i\pi(r^THr-2r^T\sigma)c_r(p)=0,\qquad r\in\Z^n.
\end{equation}
Taking $r=\pm e_j$ gives $H_{jj}=\sigma_j=0$, and $r=e_i+e_j$
gives $H_{ij}=0$ for $i\ne j$. Thus $h=0$.

We are almost done. Finally, define the norm on operator-valued $1$-forms
\begin{equation}
    \|d\Phi(p)\|_{\rm op}\coloneqq\sup_{\substack{\|h\|=1\\ \|L\|_{H_B, \beta}=1}} \|d\Phi(p)[h]L\|_{H_B, \beta}.
\end{equation}
By finite dimensionality of the tangent space containing $h$, there is a lower bound 
\begin{equation}
 m:=\min_{\substack{\|h\|=1\\h=(H,\sigma,0)}}\|\mathrm d\Phi(p_0)[h]\|_{\mathrm{op}}>0.
\end{equation}
In the previous lemma~\ref{thm:gkp_kernel_lipschitz} we have shown that $p\to \Phi(p)$ is a smooth map. Consequently, $p\to d\Phi(p)$ is also continuous relative to the norm above.
We can thus shrink the convex neighbourhood $U$ so that
\begin{equation}
   \sup_{p\in U}\|\mathrm d\Phi(p)-\mathrm d\Phi(p_0)\|_{\rm op}\leq m/2.
\end{equation}

For $h=p'-p$, integration along the segment from $p$ to $p'$ and the reverse triangle inequality give
\begin{align}
 \|\Phi(p')-\Phi(p)\|_{\mathrm{op}}
 &=\left\| d\Phi(p_0)[h]+\int_0^1(d\Phi(p+th)-d\Phi(p_0))[h]\,dt\right\|_{\mathrm{op}}
 \nonumber\\
 &\geq\|\mathrm d\Phi(p_0)[h]\|_{\mathrm{op}}-\frac m2\|h\|
 \nonumber\\
 &\geq\frac m2\|p'-p\|,
\end{align}
which yields the claimed bound with $c=m/2$.

Finally, choose $X=R(\eta)X_0(\Omega,s)$, take a joint direction
$h=(H,\sigma,\xi)$ and assume $\mathrm d\Phi(p)[h]=0$.
The kernel calculation above uses the frame variation only through
$A_X=X^{-1}\dot X$, so it applies to this choice and gives
$\dot X=0$ and $H=\sigma=0$. Hence
\begin{equation}
 0=\dot X=\mathrm dR(\eta)[\xi]X_0(\Omega,s).
\end{equation}
Since $X_0(\Omega,s)$ is invertible and $R$ is a coordinate parametrization, $\xi=0$. The full differential is therefore injective. Smoothness from the preceding lemma and the same minimum and integration argument, now over all unit directions $h=(H,\sigma,\xi)$, give the stated bound after possibly shrinking $U\subset\mathcal U_0$ further.
\end{proof}

The previous proofs exploited the explicit analytic structure we have for the kernels $K_{p,a}$ and showed that there is a well-defined continuous map between the operator-algebra code spaces for the GKP codes and the parameters defining their stabilizer groups. In particular, this means that we can identify the topology of the space of GKP operator algebras with that of its defining parameters, which we now review.

\subsubsection*{GKP parameters revisited}
We define the group of polarization-preserving transformations as~\cite{Birkenhake_2004,conrad2024latticesgatescurvesgkp}
\begin{equation}
    \Sp_D(\R)=\{S\in \GL_{2n}(\R),\, SJ_DS^T=J_D\},\quad  J_D=(J_2\otimes D),
\end{equation}
with subgroup $\Gamma_D = \Sp_D(\Z)$ and write $\T^{2n}=\R^{2n}/\Z^{2n}$ for the real torus.
The parameter space for all GKP codes with fixed type $D$ is~\cite{Burchards_2025,conrad2024latticesgatescurvesgkp, Birkenhake_2004,mayrand2026complex}
\begin{equation}
    \CM_D=\Gamma_D\setminus \lr{\Sp_D(\R)\times \T^{2n}},\quad \gamma\cdot(\mathsf{S}, \theta)=(\gamma\mathsf S, \gamma\theta+\ell_{\gamma}/2),
\end{equation}
where $\ell_{\gamma}=\diag\lr{\gamma \lr{\begin{smallmatrix}0 & D \\ 0 & 0\end{smallmatrix}}\gamma^T}$ tracks the correct transformation of the reference semicharacter $\chi_{\theta}(k)=(-1)^{m^TDn}e^{i2\pi \theta^Tk}$ for $k=(m, n)\in \Z^{2n}$. The quotient $\CM_D$ captures the space of all symplectically integral lattices of type $D$, which carries a transitive action of $\Sp_D(\R)$, and contains the syndrome spaces via shifts in $\T^{2n}$.

To recover our holomorphic coordinates, let $\sigma_D=I_n\oplus D$, so that for $\mathrm{S}\in\Sp_D(\R) $
\begin{equation}
    S=\sigma_D^{-1}\mathrm{S}\sigma_D=\lr{\begin{smallmatrix}
        a & b \\ c & d
    \end{smallmatrix}}\in \Sp_{2n}(\R).
\end{equation}
and we have, in coordinates where $\Omega_B=iI_n$, the identification
\begin{equation}
    (iI, I )S^T\sigma_D = X(\Omega, D),\qquad \Omega = S.(iI_n),\;X=(ci+d)^T.
\end{equation}
To extract the rotation coordinate, we fix the reference matrices \begin{equation}Y_0=D^{-\frac{1}{2}}(\Im\Omega)D^{-\frac{1}{2}},\quad  X_0=Y_0^{-\frac{1}{2}}D^{-\frac{1}{2}},\end{equation}
so that $R=XX_0^{-1}\in U(n)$. Similarly, the syndrome is contained in the torus 
\begin{equation}s\in X^{\perp}_{\Omega, X}=\C^n/(\Omega D^{-1}\Z^n +\Z^n)\sim \T^{2n}.\end{equation}
Physical rotations $U\in U(n)$ act by $(\Omega, s, R)\mapsto (\Omega, s, UR)$. Forgetting the orientation label $R$, and then the syndrome parameter, identifies GKP codes with the orbifold quotients~\cite{hain2014lectures,conrad2024latticesgatescurvesgkp}
\begin{equation}
    \CJ_D = \CM_D/\!/U(n)= \Gamma_D\mathord{\backslash\!\backslash}\HH_n\times \T^{2n} \longrightarrow \CA_D = \Gamma_D\mathord{\backslash\!\backslash}\HH_n.
\end{equation}
Each point in $\CJ_D$ retains information about its stabilizer: writing $\gamma_D=\sigma_{D}^{-1}\gamma \sigma_D$,
\begin{equation}
    \Gamma_{p}=\lrc{\gamma\in \Gamma_D: \gamma_D.\Omega=\Omega,\; \gamma\theta + \ell_{\gamma}/2=\theta\; \mod \Z^{2n} }, \label{eq:orbifold points}
\end{equation}
and for every $\gamma\in\Gamma_{p}$, writing $\gamma_D=\lr{\begin{smallmatrix}a_{\gamma}&b_{\gamma}\\ c_{\gamma} & d_{\gamma}\end{smallmatrix}}$, one has
\begin{equation}
    U_{\gamma}=X(c_{\gamma}\Omega+d_{\gamma})^{-T}X^{-1} \in U(n),\label{eq:unitary_auto}
\end{equation}
which is a unitary automorphism of the GKP lattice $U_{\gamma}\Lambda=\Lambda$ and semicharacter $\chi_s(U_{\gamma}\lambda)=\chi_s(\lambda)$, i.e. preserves the associated stabilizer group.
Unitarity can be verified by setting $Y=\Im\Omega$, $A=(c_{\gamma}\Omega+d_{\gamma})^{-T}$ and using the fact that we have chosen a frame satisfying $XYX^{\dagger}=I$.

\begin{theorem}[Global smooth embedding]\label{thm:gkp_local_embedding}
Fix a polarization type $D$ with $d=\det D$, and let $\beta>0$.
For $p\in\CM_D$, let $V_p$ be the physical kernel space in
eq.~\eqref{eq:gkp_physical_kernel_span}. The map
\begin{equation}
 F:\CM_D\longrightarrow\operatorname{Gr}_{d^2}(\mathscr K_{H_B,\beta}),
 \qquad p\longmapsto V_p
\end{equation}
is a smooth embedding.
Every $p_0\in\CM_D$ has a coordinate neighbourhood $U$ on which $F$
is bi-Lipschitz.
\end{theorem}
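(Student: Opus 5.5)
The plan is to reduce the global statement to the two local estimates already established, Lemma~\ref{thm:gkp_kernel_lipschitz} and Lemma~\ref{lem:gkp_kernel_lower}, and then supply the three genuinely global ingredients: that $F$ is well defined on the quotient $\CM_D$, that $F$ is globally injective, and that $F$ is a homeomorphism onto its image.

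\emph{Well-definedness.} $\Gamma_D$ acts on $\Sp_D(\R)\times\T^{2n}$ by left multiplication in the first factor. This action is free and properly discontinuous, so $\CM_D$ is a smooth manifold of real dimension $\dim\Sp_D(\R)+2n = n(2n+1)+2n$. This matches the count for the coordinates $(\Omega,s,R)\in\HH_n\times\C^n\times U(n)$, namely $n(n+1)+2n+n^2$.

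I would first check that the assignment $(\mathsf S,\theta)\mapsto(\Omega,s,R)$ described above ($\Omega=S.(iI_n)$, $X=(ci+d)^T$, $R=XX_0^{-1}$, $s$ from $\theta$) is a local diffeomorphism. Next, I would check that $V_p$ depends only on the $\Gamma_D$-orbit. A $\gamma\in\Gamma_D$ changes the basis of the physical lattice $\Lambda_B$ but not the lattice itself, and the shift $\ell_\gamma/2$ is exactly what keeps the semicharacter fixed. Hence the stabilizer group $\{\chi_s(\lambda)T(\lambda)\}$ is unchanged. The span $V_p$ is the space of kernels $K$ with $S(\lambda)K=KS(\lambda)=K$ (cf.\ eq.~\eqref{eq:KDelta}), so it is unchanged as well; only the theta basis $f^B_{\alpha,p}$ is permuted and rescaled.

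\emph{Local structure.} With $F$ well defined, each point has a chart $U$ in which the two lemmas give
\[
c_{U,\beta}\|p-p'\|\le d_{H_B,\beta}(V_p,V_{p'})\le \widetilde C_{U,\beta}\|p-p'\|.
\]
Together with the smoothness of $\Phi(p)$ and the injectivity of $d\Phi$ shown in the proof of Lemma~\ref{lem:gkp_kernel_lower}, this says $F$ is a bi-Lipschitz immersion locally. One caveat: the coordinate $U$ should be chosen small enough to inject into $\CM_D$, which freeness permits.

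\emph{Global injectivity.} I would show that $V_p$ determines the stabilizer group. The displacements $T(v)$ act on kernels by left multiplication, so I would consider the set of $v\in\C^n$, with phase $c$, such that $cT(v)K=K$ for all $K\in V_p$. Using the explicit quasi-periodicity of $f^B_{\alpha,p}$ and the linear independence of the theta functions, this set is exactly $\{(\lambda,\chi_s(\lambda)^{-1}):\lambda\in\Lambda_B\}$. Hence $V_p=V_{p'}$ forces equal physical lattices and equal semicharacters. Equal lattices mean the real symplectic bases differ by some $\gamma\in\Gamma_D$, and equal semicharacters fix $\theta$ modulo the $\ell_\gamma$ rule. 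So $p=p'$ in $\CM_D$.

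\emph{Homeomorphism onto the image.} This is the step I expect to be the main obstacle. I need that $V_{p_k}\to V_p$ forces $p_k\to p$ in $\CM_D$; local bi-Lipschitz does not exclude $p_k$ escaping to infinity or approaching $p$ from another sheet. First I would prove a compactness statement by a Mahler-type argument. If the lattices $\Lambda_{B,k}$ degenerate, they contain vectors $\lambda_k\to0$, or the dual side $\Lambda^\perp$ does. Then the kernels in $V_{p_k}$ are invariant up to phase under $T(\lambda_k)$. Passing to the limit of projections in $B(\mathscr K_{H_B,\beta})$, and using that $T(v)$ is strongly continuous on $\mathscr K_{H_B,\beta}$ after regularization, $V_p$ would be invariant under a nontrivial one-parameter family of displacements. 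That is impossible for a finite-dimensional space of theta kernels. Hence the $p_k$ stay in a compact subset of $\CM_D$ (the torus and $U(n)$ factors are already compact). Any limit point $q$ then has $V_q=V_p$ by continuity, so $q=p$ by injectivity. This gives $p_k\to p$.

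A smooth injective immersion that is a homeomorphism onto its image is a smooth embedding. The bi-Lipschitz property on coordinate neighbourhoods is exactly the local two-sided estimate above.
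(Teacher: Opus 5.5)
Your outline is the paper's proof. $F$ descends to $\CM_D$ by rebasing invariance. Lemmas~\ref{thm:gkp_kernel_lipschitz} and~\ref{lem:gkp_kernel_lower} give the immersion and the local bi-Lipschitz bounds. Injectivity comes from recovering the lattice and semicharacter as the displacements that act by a phase on $V_p$. Continuity of the inverse comes from excluding short lattice vectors, Mahler compactness, and the cluster-point/injectivity argument.

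For the recovery step, the paper avoids explicit quasi-periodicity computations. Every $T(\mu)$ with $\mu\in\Lambda_p^\perp$ preserves $V_p$, so a displacement acting as a scalar on $V_p$ must commute there with all of them. This forces $\Im H_B(\lambda,\Lambda_p^\perp)\subset\Z$, hence $\lambda\in(\Lambda_p^\perp)^\perp=\Lambda_p$. That is a cleaner way to make your sketch rigorous.

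One step, however, fails as you justify it. $T(v)$ is not strongly continuous on $\mathscr K_{H_B,\beta}$; for $v\neq 0$ and $\beta>0$ it is not even bounded there. Set $\alpha=\pi(1+\beta)$ and let $W_\alpha(v)$ be the unitary Weyl operator for the weight $e^{-\alpha\|z\|^2}$. Then $T(v)=e^{\pi\beta\|v\|^2/2}M_{\phi_v}W_\alpha(v)$, where $M_{\phi_v}$ is multiplication by the unbounded entire function $\phi_v(z)=e^{\pi\beta v^\dagger z}$. A bounded multiplier on a reproducing-kernel space must be a bounded function, so $M_{\phi_v}$ is unbounded. Hence the identity $T(\mu_k)\Phi_kK=c_k\Phi_kK$ cannot be passed to the limit in norm.

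The paper passes to the limit pointwise instead. The mean-value estimate in the proof of Lemma~\ref{lem:kernel_space_complete} upgrades $\Phi_kK\to K$ in norm to locally uniform convergence. Since $T(\mu_k)$ is translation by $\mu_k$ times an explicit exponential, $\mu_k\to rv$ gives $T(\mu_k)\Phi_kK\to T(rv)K$ locally uniformly, hence $T(rv)K=cK$ on $V_p$.

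You should also make the ``one-parameter family'' concrete, as follows.
\begin{itemize}
\item Take $r>0$ smaller than every nonzero vector length in $\Lambda_p$, and set $\mu_k=m_k\lambda_k$ with $m_k=\lfloor r/\|\lambda_k\|\rfloor$.
\item Pass to a subsequence with $c_k\to c$.
\item Your stabilizer characterization (eq.~\eqref{eq:gkp_scalar_stabilizer_recovery}) then gives $rv\in\Lambda_p$, a contradiction.
\end{itemize}
Your alternative that only $\Lambda^\perp$ degenerates is vacuous, since $d_n\Lambda^\perp\subset\Lambda$; for vectors of $\Lambda^\perp\setminus\Lambda$ the displacement would not act by a phase anyway. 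With this repair, the rest of your argument goes through as in the paper.
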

\begin{proof}
The free and properly discontinuous $\Gamma_D$-action makes $\CM_D$ a smooth manifold, locally identified with the parameter charts above. Since rebasing preserves $V_p$, the map $F$ descends to $\mathcal M_D$. Lemmas~\ref{thm:gkp_kernel_lipschitz}
and~\ref{lem:gkp_kernel_lower} make $F$ a smooth immersion and give the
local bi-Lipschitz bounds. It remains to show that $F$ is globally injective and has a continuous inverse.

Write $\Lambda_p$ for the physical stabilizer lattice, with $T(\lambda)$
acting on the left kernel variable. We recover it from $V_p$ as
\begin{equation}\label{eq:gkp_scalar_stabilizer_recovery}
 \Lambda_p=\{\lambda:T(\lambda)|_{V_p}=cI,\quad c\in U(1)\}.
\end{equation}
Indeed, every $T(\mu)$ with $\mu\in\Lambda_p^\perp$ preserves $V_p$.
A scalar $T(\lambda)|_{V_p}$ commutes with all these restrictions, so that $\Im H_B(\lambda,\Lambda_p^\perp)\subset\Z$,
hence $\lambda\in(\Lambda_p^\perp)^\perp=\Lambda_p$.
The converse is the stabilizer equation. Its scalar eigenvalues also recover the semicharacter, so $F$ is injective.

To show the continuity of the inverse, consider a sequence $V_{p_j}\to V_p$, and write $\Phi_j,\Phi$ for their orthogonal projections. 
For every $K\in V_p$, we have $\Phi_jK\to K$ in $\mathscr K_{H_B,\beta}$ and locally uniformly by the estimate in Lemma~\ref{lem:kernel_space_complete}. We must show that $V_{p_j}\to V_p$ inside $\mathscr K_{H_B,\beta}$ rules out a sequence of nonzero lattice vectors $\lambda_j\in\Lambda_{p_j}$ such that $\lambda_j\to 0$.

If nonzero $\lambda_j\in\Lambda_{p_j}$ tended to zero, consider their normalized limiting direction $v$ with $\lambda_j/\|\lambda_j\|\to v$, $\|v\|=1$. Let $r>0$ be smaller than every nonzero vector length in $\Lambda_p$ and let $m_j=\lfloor r/\|\lambda_j\|\rfloor$ such that $m_j\|\lambda_j\|\to r$.

Then $\mu_j=m_j\lambda_j\to rv$ and $T(\mu_j)|_{V_{p_j}}=c_jI$. 
Taking the limit $c_j\to c\in U(1)$ uniformly on compact sets in $(z, \overline{w})$, one then obtains
\begin{equation}
 T(\mu_j)\Phi_jK=c_j\Phi_jK
 \quad\longrightarrow\quad T(rv)K=cK,\qquad K\in V_p.
\end{equation}
Equation~\eqref{eq:gkp_scalar_stabilizer_recovery} now implies $rv\in\Lambda_p$, contradicting the fact that $r$ is smaller than every non-zero vector length.
Thus the shortest nonzero lattice vectors are bounded away from zero. Since all covolumes equal $\det D$, Mahler's compactness theorem \cite[Theorem~2]{Mahler1946} gives convergent lattice bases along a subsequence. Their integral symplectic Gram matrices are eventually
constant, so a single integral change of basis makes them convergent symplectic bases of type $D$. The syndromes lie in the compact torus $\T^{2n}$, hence the parameter classes $p_j$ are relatively compact in $\CM_D$. 
Every cluster point $q$ satisfies $F(q)=F(p)$ by continuity, and therefore $q=p$ by injectivity.  We have thus shown that $F(p_j)\to F(p)\Rightarrow p_j\to p$, which proves that $F$ is a homeomorphism onto its image and hence a smooth embedding.
\end{proof}

The preceding statements have shown that the quotient topology on $\CM_D$ agrees with the topology induced by the regularized distance $d_{H_B, \beta}$ on $\Gr_{d^2}(\mathscr{K}_{H_B, \beta})$. 

\subsection{The Theta bundle and Chern connection}
In this subsection we fix a rotation $R_0$ and use $X(\Omega,s)=R_0X_0(\Omega,s)$, suppressing the fixed rotation in the kernel and subspace labels. We show that the theta bundle induces a flat Chern connection on this fixed-orientation family, embedded by
Theorem~\ref{thm:gkp_local_embedding}.

Locally over $\mathcal J_D$, the spaces introduced in the previous subsection form the theta bundle
\begin{equation}
    \mathcal E_{D,(\Omega,s)}
    =
    H^0(X_{\Omega,D},L_s),
    \qquad
    \operatorname{rank}\mathcal E_D=\det D,
\end{equation}
and the corresponding logical operator bundle is
$\operatorname{End}(\mathcal E_D)$.
Using the fact that theta functions satisfy the heat equation~\cite{Mumford1966}, Hitchin~\cite{Hitchin1990FlatConnections} and later To and Weng~\cite{ToWeng2004} noted that the vector bundle spanned by theta functions admits a projectively flat connection. This observation immediately translates into flatness for the algebra of encoded logical operators for GKP codes parametrized by $\CJ_D$.
\begin{proposition}\label{prop:theta_end_flat}
Over the complex orbifold $\CJ_D$, the intrinsic Hermitian structure on the theta
bundle induces a flat Chern connection on $\operatorname{End}(\mathcal E_D)$.
\end{proposition}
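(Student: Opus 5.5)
The plan is to obtain flatness on $\operatorname{End}(\mathcal E_D)$ from projective flatness of $\mathcal E_D$ itself. The key point is that a connection on a vector bundle whose curvature is a scalar two-form $\kappa\,\mathrm{id}$ induces on $\operatorname{End}(\mathcal E_D)=\mathcal E_D\otimes\mathcal E_D^*$ a connection with curvature $[\kappa\,\mathrm{id},\cdot]=0$. So it suffices to exhibit the Chern connection of the intrinsic Hermitian metric $\langle f,g\rangle_{\Omega,s}$ on the holomorphic bundle $\mathcal E_D\to\CJ_D$ and show that its curvature is central. I would work $\Gamma_D$-equivariantly on the smooth cover $\HH_n\times\C^n$ (with the fixed rotation $R_0$ as above), where the orbifold structure only enters through descent. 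Since the metric is built from $\Im\Omega$ and $\Im s$, which transform covariantly under $\Gamma_D$, and the Chern connection is canonically determined by the metric and the holomorphic structure, it commutes with the $\Gamma_D$-action and descends to $\CJ_D$.

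\textbf{Holomorphic frame and Gram matrix.} The sections $\vartheta_\alpha(z-s,\Omega)$, $\alpha\in D^{-1}\Z^n/\Z^n$, give a holomorphic frame of $\mathcal E_D$ over $\HH_n\times\C^n$. This is holomorphic in $(\Omega,s)$, after absorbing the $s$-dependent automorphy factor coming from $G_{\Omega,s}$, which is the holomorphic structure the paper uses. First I compute the Gram matrix $h_{\alpha\gamma}=\langle\vartheta_\alpha,\vartheta_\gamma\rangle_{\Omega,s}$ by unfolding the torus integral against the Fourier expansion. Distinct characteristics are orthogonal, because they have distinct frequency cosets. The diagonal entries are all equal: integrating out the real part and completing the square in the Gaussian gives
\begin{equation*}
h_{\alpha\gamma}=\delta_{\alpha\gamma}\,c_n\,\det(\Im\Omega)^{-1/2}\,e^{\phi(\Omega,s)},
\end{equation*}
with $c_n$ a universal constant and $\phi$ a real function independent of $\alpha$ (possibly $\phi\equiv 0$ after the $G_{\Omega,s}$ normalization). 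The Chern connection is then $\nabla=d+h^{-1}\partial h=d+\partial\log\bigl(\det(\Im\Omega)^{-1/2}e^{\phi}\bigr)\,\mathrm{id}$. Its curvature $\bar\partial\partial\log(\cdots)\,\mathrm{id}$ is a scalar $(1,1)$-form times the identity, so the bundle is projectively flat. On $\operatorname{End}(\mathcal E_D)$ the induced connection acts on a logical operator $M(a)$, written in the theta frame, by $dM+[\theta\,\mathrm{id},M]=dM$. It is therefore the trivial connection in this frame, which makes it flat.

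\textbf{Descent and expected obstacle.} The remaining step, and the one I expect to be the main obstacle, is checking that this trivialization is compatible with the $\Gamma_D$ gluing. Under $\gamma\in\Gamma_D$ the theta transformation formula maps $\vartheta_\alpha$ to $\zeta\cdot(\text{automorphy factor})\cdot\sum_\beta\rho(\gamma)_{\alpha\beta}\vartheta_\beta$, where $\rho$ is the finite Weil representation (the $\ell_\gamma/2$ shift is exactly what keeps the semicharacter consistent). The scalar automorphy factor $\det(c\Omega+d)^{1/2}e^{(\cdots)}$ cancels in $\operatorname{End}$, and $\rho(\gamma)$ is a constant unitary matrix. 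The transition functions of $\operatorname{End}(\mathcal E_D)$ are therefore locally constant, conjugation by $\rho(\gamma)$, and flatness descends to the orbifold, with monodromy the projective Weil representation at orbifold points as in \eqref{eq:orbifold points}. The delicate parts of this step are verifying that the metric really is $\alpha$-independent including the $s$-dependence (I would handle this by reducing $s$ to $s=0$ via the Heisenberg translation $\overline{T}$, which is unitary for $\langle\cdot,\cdot\rangle_{\Omega,s}$), and tracking the metaplectic sign ambiguity, which only affects scalars and hence drops out on $\operatorname{End}$. As a cross-check, the same connection can be matched with Hitchin's heat-operator connection via the heat equation $\partial_{\Omega_{ij}}\vartheta_\alpha=\tfrac{1}{2\pi i}(1+\delta_{ij})^{-1}\partial_{z_i}\partial_{z_j}\vartheta_\alpha$. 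Its $(1,0)$ part differs from $\nabla$ by a scalar one-form, which again has no effect on $\operatorname{End}(\mathcal E_D)$.
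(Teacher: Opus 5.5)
Your proposal is correct and follows essentially the same route as the paper: the holomorphic theta frame has scalar Gram matrix $h\,I_d$, so the Chern connection matrix is $(\partial\log h)I_d$ and the induced connection on $\operatorname{End}(\mathcal E_D)$ is just $\mathrm dM$. Descent to $\CJ_D$ then comes from compatibility with the theta transformations; the paper invokes uniqueness of the Chern connection under these holomorphic isometries, while you make it explicit through constant Weil-representation transition matrices. One harmless slip: the Gram scalar is $h=\det D\,2^{-n/2}\sqrt{\det Y}$ with $Y=\Im\Omega$, independent of $s$, which is a positive rather than negative power of $\det Y$, so your guess $\phi\equiv 0$ would be wrong; only the scalar nature of $h$ matters, though.
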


\begin{proof}
On a local chart, let $Y=\Im\Omega$ and choose the holomorphic
theta frame
\begin{equation}
    f_\alpha(z;\Omega,s)=\vartheta_\alpha(z-s,\Omega),
    \qquad
    \alpha\in D^{-1}\Z^n/\Z^n,
\end{equation}
whose Hermitian matrix is~\cite[prop. 8.5.6]{Birkenhake_2004}
\begin{equation}
    \mathsf H_{\alpha\beta}:=\langle f_\alpha,f_\beta\rangle_{\Omega,s}
    =
    h(\Omega,s)\delta_{\alpha\beta},
    \qquad
    h(\Omega,s)
    =
    \frac{\det D}{2^{n/2}}\sqrt{\det Y}.
\end{equation}
Thus $\mathsf H=hI_d$, where $d=\det D$. In a holomorphic frame, the
Chern connection has matrix~\cite[Ch.~V, \S12, eq.~(12.2)]{Demailly2012CADG}
\begin{equation}
    \mathsf A=\mathsf H^{-1}\partial\mathsf H
    =(\partial\log h)I_d.
\end{equation}
Writing an endomorphism in this frame as a coefficient matrix $M$, we obtain the
induced Chern connection
\begin{equation}
    \nabla^{\operatorname{End}(\mathcal E_D)}M
    =\mathrm dM+\mathsf A M-M\mathsf A=\mathrm dM,
\end{equation}
and is therefore flat. Theta identifications act on $\mathcal E_D$ up to
scalar and preserve its Hermitian metric up to scale. They induce holomorphic isometries on $\operatorname{End}(\mathcal E_D)$, so uniqueness of the Chern connection makes these local connections agree on orbifold charts.
\end{proof}

To connect this observation to the operator frames developed earlier, let $p=(\Omega, s)\in \CJ_D$ be the code label and let \begin{equation}
    \psi_{p, \alpha}(z)=\sqrt{\det(\Im\Omega) /h(\Omega, s)}G_p(z)f_{p, \alpha}(z).
\end{equation}
Note that $\psi_{p, \alpha}(z)$ is non-holomorphic in $\Omega$. The kernels 
\begin{equation}
    E_{p,\alpha\beta}(z, \overline{w})= \psi_{p, \alpha}(z)\overline{\psi_{p, \beta}(w)}
\end{equation}
form an orthonormal operator frame basis with $\langle E_{p,\alpha\beta}, E_{p,\gamma\delta}\rangle_{{\rm HS}, p}=\delta_{\alpha\gamma}\delta_{\beta\delta}$, and we can write
\begin{equation}
    \CK_{\Delta_p(a)} = \sum_{\alpha, \beta} a_{\alpha\beta}E_{p,\alpha\beta},\quad  \Lambda_p(\cdot) = \sum_{\alpha\beta}e_{\alpha\beta}\langle E_{p,\alpha\beta}, \cdot \rangle_{{\rm HS}, p}.
\end{equation}
In physical coordinates this operator frame is
\begin{equation}\label{eq:gkp_physical_operator_frame}
 E^B_{p,\alpha\beta}(z_B,\overline{w_B})
 =E_{p,\alpha\beta}\bigl(X(p)^{-1}z_B,
                  \overline{X(p)^{-1}w_B}\bigr)
\end{equation}
and these kernels span $V_p$.
The induced principal connection is therefore locally trivial
\begin{equation}
    \omega_{\rm Ch}
    =
    \Lambda_p
    \nabla^{\operatorname{End}(\mathcal E_D)}
    \Delta_p
    =
    0.
\end{equation}

Indeed, after pullback to $\CM_D$ and identification with the physical
logical operator bundle, this Chern connection agrees with the flat
connection induced by Gaussian transport in ref.~\cite{Burchards_2025}. Together with Theorem~\ref{thm:gkp_local_embedding}, this identifies the Chern transport with a flat transport of logical operator frames over the GKP submanifold of $\operatorname{Gr}_{d^2}(\mathscr K_{H_B,\beta})$.

\section{Geometry of magic states}\label{sec:magic}
The previous section has shown that, for a fixed $\Omega \in \CA_D$, the GKP code space for syndrome $s\in X_{\Omega, D}$ in Segal-Bargmann representation is spanned by a basis for $H^0(X_{\Omega, D}, L_s)$ given by the functions
\begin{equation}
    \psi_{(s, \Omega), \alpha}(z)=G_{\Omega, s}(z)\vartheta_{\alpha}(z-s, \Omega),\quad \alpha\in D^{-1}\Z^n/\Z^n,
\end{equation}
where we consider normalized coordinates. With $F(z)=1$, the Segal-Bargmann wavefunction for a coherent state at ${\tt c}\in \C^n$ is given by $k_{{\tt c}}(z)=e^{-\frac{\pi}{2} H_{\Omega}({\tt c},{\tt c})-\pi H({\tt c}, z)}$ and the projection of a coherent state $\ket{{\tt c}}$ onto the GKP code space for the lattice indexed by $\Omega$ yields the logical state
\begin{equation}
    \ket{\psi_{\Omega}({\tt c})}=\frac{\sum_{\alpha\in D^{-1}\Z^n/\Z^n}\overline{\vartheta_{\alpha}(-{\tt c}, \Omega)}\ket{\alpha}}{\sqrt{\sum_{\alpha\in D^{-1}\Z^n/\Z^n}|\vartheta_{\alpha}(-{\tt c}, \Omega)|^2}},\label{eq:GKP_proj_coh}
\end{equation}
assuming the projection is nonzero. The above equation can also be understood as obtaining $\ket{\psi_{\Omega}({\tt c})}$ by projecting the vacuum state $\ket{\tt 0}$ onto the syndrome sector $s={\tt c}$. In particular, for ${\tt c}=\tt 0$, the corresponding logical state amplitudes are given by $\overline{\vartheta_{\alpha}(-{\tt c}, \Omega)}$. 

A special feature of the vacuum $\ket{{\tt 0}}$ is that it is a rotation-invariant quantum state, and the corresponding Segal-Bargmann function $F(z)=1$ is constant. In eqs.~\eqref{eq:orbifold points},~\eqref{eq:unitary_auto}, we pointed out that orbifold points of $\CJ_D$ are equipped with unitary automorphisms $U_{\gamma}\in U(n),\; \gamma\in \Gamma_p$. As the vacuum is unitarily invariant, this observation implies that orbifold points produce states $ \ket{\psi_{\Omega}({\tt c})}$ with additional symmetries.

Quantum states whose stabilizers necessarily include non-Pauli operators are called magic states and are a central resource in common approaches to fault-tolerant quantum computation, including those using the GKP code~\cite{BravyiKitaev, GKP}. In the following, we show that GKP codes with symplectic orthogonal automorphisms can arise from algebraic curves via their Jacobians and automorphisms, which extends the observation for the square GKP code from ref.~\cite{baragiola2019all}.

We first choose a new set of parameters. A real GKP lattice $S^TL_D$, with $L_D=\Z^n\oplus D\Z^n$, has normalized period $\Omega=S.(iI_n)$ and physical frame $X=(ic+d)^T$. Relative to the reference lattice $L_{\sqrt{D}}=\sqrt{D}\Z^n\oplus \sqrt{D}\Z^n$ we instead consider the principal period and frame $\tau\in \HH_n, X_{\tau}\in \GL_n(\C)$, which satisfy 
\begin{equation}
    \Omega = \sqrt{D}\tau \sqrt{D}=(\sqrt{D}\oplus \sqrt{D}^{-1}).\tau,\quad X=X_{\tau}D^{-\frac{1}{2}}.
\end{equation}
In particular, for $D=dI_n$, $d\in \Z_{\geq 1}$, we set $\Omega = d\tau$. 

\subsection{Jacobians}
In this introduction, we closely follow the discussion in ref.~\cite{Birkenhake_2004}. Conceptually, Jacobians are complex tori obtained via an ``abelianization'' of a compact Riemann surface, or curve $C$. Let $H^0(\omega_C)$ denote the space of holomorphic $1-$forms on $C$. We identify the genus of the curve $g=n=\dim H^0(\omega_C)$ with the number of modes. Paths $\gamma$ on the curve are identified with elements $p_{\gamma}\in H^0(\omega_C)^*$ via
\begin{equation}
    p_{\gamma}:H^0(\omega_C)\to \C, \quad \omega \mapsto \int_{\gamma}\omega.
\end{equation}
Let $\lrc{\omega_i},i=1,\hdots, g=\dim H^0(\omega_C)$ be a basis for the space of holomorphic $1$-forms of a genus $g$ curve $C$, which we choose to be orthonormal with respect to the Hermitian form
\begin{equation}
    (\omega_i, \omega_j) = \frac{-i}{2}\int_C \overline{\omega_i} \wedge \omega_j =\delta_{ij}.
\end{equation}
With $\vec{\omega}=(\omega_1,\hdots, \omega_g)^T$, every path $\gamma: [0,1]\to C$ 
is mapped to a vector 
\begin{equation}
    {\rm AJ}: H^0(\omega_C)^* \to \C^{g},\quad \gamma \mapsto \int_{\gamma}\vec{\omega}=\lr{\int_{\gamma}\omega_i}_{i=1\hdots g}.
\end{equation}
This map is well-defined up to the Jacobian lattice
\begin{equation}
    \Lambda_J = \lrc{\int_{\gamma}\vec{\omega}\in \C^g,\; \gamma \in H_1(C, \Z)}, 
\end{equation}
which is generated by a basis $\Pi\in \C^{g\times 2g}$ with $\Pi_{ij}=\int_{\gamma_j}\omega_i$ as $\gamma_j$ runs over a generating set for the $1^{\rm st}$ homology group $H_1(C,\Z)$.
Let $P(\gamma_{i})$ denote the Poincar\'e dual of $\gamma_i$, so that $\int_{\gamma_i}\omega = i2(P(\gamma_i), \omega)\; \forall \omega \in H^0(\omega_C)$. The Hermitian form $\Pi^{\dagger}\Pi$ can be understood as the inner products of the holomorphic projections of $P(\gamma_i)$, resulting in


\begin{equation}
    \lr{\Pi^{\dagger}\Pi}_{ij} = \int_C P(\gamma_i)\wedge * P(\gamma_j) + i \int_C P(\gamma_i)\wedge  P(\gamma_j). \label{eq:Hermitian_Jacobian}
\end{equation}
In this equation, $\int_C P(\gamma_i)\wedge  P(\gamma_j) = (\gamma_i \cdot \gamma_j)$ is the intersection pairing and $*$ is the Hodge star.
For a canonical homology basis $\lrc{\gamma_i}_{i=1...2g}$, we have $ (\gamma_i \cdot \gamma_j)=J_{ij}$, so that  the resulting lattice is symplectic and defines a (principally) polarized abelian variety. After interchanging the two blocks of the canonical homology basis, so that the intersection matrix becomes $-J$, one can normalize the basis $\lrc{\omega_i}_{i=1\hdots g}$ such that $\Pi=(\Omega, I_g)$, which yields the Hermitian form $H_{\Omega}$ used above.  

Automorphisms of Riemann surfaces $f\in Aut(C)$ are represented by unitary automorphisms of the Jacobian lattice $\Lambda_J$.
Because $f:C\to C$ is holomorphic, for all $a, b\in H^0(\omega_C)$, $f$ preserves the inner product, $(f^*a, f^*b)=(a,b)$, and hence also acts unitarily on $\C^g$. 


Before discussing the specific examples, we summarize the results of this section informally:
\begin{proposition}
GKP magic states can be obtained by projecting vacuum states onto the GKP code space associated to a GKP Jacobian code. In particular, we find 
\begin{enumerate}
\item single-qubit magic states 
\begin{equation}
    \ket{\psi_{2\tau}(\tt 0)} =\frac{1}{\sqrt{1+|\lambda\lr{2\tau}|^{\frac{1}{2}}}} \lr{\ket{0}+\overline{\lambda\lr{2\tau}^{\frac{1}{4}}}\ket{1}}
\end{equation}

for elliptic curves $E_{\tau}:\, y^2=x(x-1)(x-\lambda(\tau))$ at $\tt c =0$, where the fourth root is chosen so that $\lambda(z)^{1/4}=\vartheta_{1/2,0}(0,z)/\vartheta_{0,0}(0,z)$.

\item the $n=3$ Klein quartic state
\begin{equation}
    \ket{\psi_K}=\frac{1}{\sqrt{672}}\lr{16\ket{000}+\sqrt{3}(11+i\sqrt{7})\ket{W_3} -(5-i\sqrt{7})\ket{111}}
\end{equation}
with $\ket{W_3}=\frac{1}{\sqrt{3}}(\ket{001}+\ket{010}+\ket{100})$ for Klein's quartic $C_K:\, x^3y+y^3z+z^3x=0$ at $\tt c =0$, and \item the $n=3$ CCZ state
\begin{equation}
    \ket{\tt CCZ} = \frac{1}{\sqrt{8}}\sum_{\sigma\in \Z_2^3} (-1)^{\sigma_1\sigma_2\sigma_3}\ket{\sigma}
\end{equation}
for the octahedral curve $C_O:\ y^2=x^8+14x^4+1$ at ${\tt c} =2(\frac{it}{4}+\frac{1}{8})\bs{1}_3.$

In particular, $(1.)$ also implies that the projection of the $n-$mode vacuum onto the code space of a self-dual CSS code with $k=1$, concatenated with the square  ($\tau=i$) GKP code, yields a logical magic state.
\end{enumerate}

\end{proposition}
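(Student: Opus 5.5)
The common engine is eq.~\eqref{eq:GKP_proj_coh}. Projecting the coherent state $\ket{\tt c}$ (equivalently, the vacuum into syndrome sector $s={\tt c}$) gives logical amplitudes $\overline{\vartheta_\alpha(-{\tt c},\Omega)}$. Every item therefore reduces to three steps: (i) identify the period matrix $\Omega$ of the relevant Jacobian; (ii) evaluate the finitely many theta constants $\vartheta_\alpha(-{\tt c},\Omega)$, $\alpha\in D^{-1}\Z^n/\Z^n$; (iii) show the resulting state is magic.

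For (iii) I would use the symmetry mechanism of eqs.~\eqref{eq:orbifold points}--\eqref{eq:unitary_auto}. An automorphism $f\in\operatorname{Aut}(C)$ gives $\gamma\in\Gamma_p$ with $U_\gamma$ unitary. $U_\gamma$ fixes the vacuum and, when ${\tt c}$ is fixed modulo the lattice, also $\ket{\tt c}$. Its metaplectic lift is a passive Gaussian unitary preserving the stabilizer group, so it is a logical Clifford $\bar U_\gamma$ stabilizing the projected state. Magic then follows by checking that $\bar U_\gamma$ is non-Pauli, or more directly by showing the computed state is not a stabilizer state.

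\textbf{Item 1.} Take $D=2$, so $\Omega=2\tau$ and $\alpha\in\{0,1/2\}$. At ${\tt c}=0$ the amplitudes are $\vartheta_{0,0}(0,2\tau)$ and $\vartheta_{1/2,0}(0,2\tau)$. Their ratio is $\lambda(2\tau)^{1/4}$ by the Jacobi identity $\lambda=\vartheta_{10}^4/\vartheta_{00}^4$ in the stated branch convention. Normalization gives the stated prefactor; the ratio is real-analytic in $\tau$ and the conjugation is harmless.

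The concatenation corollary uses that a self-dual CSS code with $k=1$ concatenated with the square GKP code is again a GKP code. Its lattice is rotation-equivalent (via the self-duality) to a product structure, so vacuum projection factorizes into the single-mode case at an orbifold point such as $\tau=i$. There $\lambda=1/2$ gives a $\sqrt{\lambda}$-type, i.e.\ $H$-type, magic state.

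\textbf{Item 2 (Klein quartic).} Use the known period matrix of $C_K$ (Rauch--Lewittes / Riera form), with $D=2I_3$, $\Omega=2\tau$. The $\Z_7$ and $\Z_3$ automorphisms, via $\bar U_\gamma$, force the amplitudes to be permutation symmetric, since the $\Z_3$ acts by cyclic permutation of modes. This reduces the state to the four amplitudes $\vartheta_\alpha(0,2\tau)$ for $|\alpha|=0,1,2,3$ (Hamming weight). I would compute these via the genus-3 Thomae/Riemann relations, with the known values of the Klein-quartic theta constants in $\Q(\sqrt{-7})$, then normalize.

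\textbf{Item 3 (CCZ).} Use the octahedral hyperelliptic curve, whose $\operatorname{Aut}$ contains $S_4$ and whose period matrix is of the form $t\,\cdot$(structured matrix). At ${\tt c}=2(\tfrac{it}{4}+\tfrac18)\bs 1_3$, the shift introduces characteristic-dependent phases $e^{i\pi\alpha^T\ldots}$. I would show, by Riemann's addition formula splitting $\vartheta_\alpha(-{\tt c},2\tau)$ into products of half-characteristic thetas at $\tau$, that all $|\vartheta_\alpha|$ coincide and the phases equal $(-1)^{\alpha_1\alpha_2\alpha_3}$ up to a global phase.

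\textbf{Main obstacle.} The main obstacle is this explicit theta-constant evaluation in items 2 and 3. Symmetry alone fixes only orbits, not the relative magnitudes and phases. My first attempt would be to use the $\bar U_\gamma$ eigen-equations to pin the state as the unique joint eigenvector of the stabilizing logical Cliffords, for example the CCZ state as the unique $+1$ joint eigenvector of $X_iCZ_{jk}$. Only if that joint eigenspace is not one-dimensional would I fall back to numerics or to Thomae's formulas.
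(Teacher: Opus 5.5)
\textbf{Item 1} agrees with the paper: it is eq.~\eqref{eq:GKP_proj_coh} at ${\tt c}=0$ together with $\lambda=\vartheta_{1/2,0}^4/\vartheta_{0,0}^4$.

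\textbf{The genuine gap is in items 2 and 3.} These carry the content of the proposition, and you defer them to unspecified evaluations.

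For the CCZ state your first route cannot work. A passive automorphism of the $D=2I_3$ code acts on the logical Paulis $\Lambda^\perp/\Lambda\cong\Lambda_J/2\Lambda_J$ through its reduction in $\Sp_6(\F_2)$. The hyperelliptic involution acts there as $-1\equiv1$. So, even before requiring that ${\tt c}$ be fixed, the image of $\operatorname{Aut}(C_O)$ is a quotient of the reduced automorphism group $S_4$, which has no elementary abelian subgroup of order $8$. But the symplectic parts of $X_1CZ_{23}$, $X_2CZ_{13}$, $X_3CZ_{12}$ generate $\Z_2^3$, so these three Cliffords cannot all be induced by curve automorphisms.

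More importantly, the CCZ form is a condition on the modulus. In the paper, take the whole family $\tau(t)=itI_3+\tfrac12(\bs 1\bs 1^T-I_3)$ at $z_0=(\tfrac{it}{4}+\tfrac18)\bs 1_3$. The exponent splits so that each $\Theta_\sigma$ is a product of genus-one $q$-series, giving the pattern $(U,V,V,U,V,U,U,-V)$ with $U=u^3$ and $V=e^{-i\pi/4}v^3$. The duplication formulas then show that $U=V$ singles out $\lambda(2it)=\tfrac14$. The decisive step is that $C_O$ realizes this modulus. After an integral change of basis, the Ejiri--Shoda period matrix becomes $-2iB(\tau(t)\mid I_3)$ with $2t=K(\sqrt{3/4})/K(\sqrt{1/4})$, and $\lambda\bigl(iK(\sqrt{1-k^2})/K(k)\bigr)=k^2$ at $k^2=\tfrac14$. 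Your Riemann-addition sketch neither identifies nor checks this condition.

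\textbf{Klein quartic.} Naming Thomae/Riemann relations and ``known values'' is not a derivation; the classical Thomae formula is for hyperelliptic curves, which $C_K$ is not. The stated amplitudes are also basis dependent: a different symplectic basis (e.g.\ the one you propose) generally gives a Clifford-equivalent but different-looking state. So you must work with $\tau_K=\tfrac12\bigl((\epsilon-1)I_3+\bs 1\bs 1^T\bigr)$.

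What makes the computation tractable is that, for $k\in\sigma+2\Z^3$, the exponent of $\vartheta_{\sigma/2,0}(0,2\tau_K)$ becomes $\tfrac{i\pi}{4}\epsilon\|k\|^2+\tfrac{i\pi}{2}(k_1k_2+k_1k_3+k_2k_3)$. Consequently:
\begin{itemize}
\item the amplitudes factor into genus-one theta constants at $\epsilon$, namely $\vartheta_{0,0}^3$, $\vartheta_{1/2,0}\vartheta_{0,1/2}^2$ and $-i\vartheta_{1/2,0}^3$;
\item the weight-two amplitudes vanish under $k_1\mapsto-k_1$;
\item the relation $\epsilon/2=-1/(\epsilon+1)$, combined with the duplication and modular formulas, yields the ratios in the field generated by $\sqrt{-7}$.
\end{itemize}

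\textbf{Concatenation claim.} Your argument is wrong as stated. $\Lambda_{d,\tau,H}=\tfrac{1}{\sqrt d}QD(\tau\Z^n+\Z^n)$ differs from a product by a real, generally non-orthogonal change of basis. For the Steane code at $\tau=i$ it is $E_7\oplus E_7$ as a real lattice, which admits no orthogonal splitting into rank-two pieces. So it is not rotation-equivalent to a product. Also, the single-mode ratio at $\tau=i$ is $\lambda(2i)^{1/4}=\sqrt2-1=\tan(\pi/8)$, not $\sqrt{\lambda(i)}$.

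The missing step is that $z\mapsto iz$ on every mode commutes with this real change of basis and preserves the lattice. It therefore acts as transversal Hadamard, hence logical Hadamard, on the self-dual CSS code. Equivalently, the concatenated code space lies inside the $n$-fold square-GKP code space, so the projected vacuum is $P_{\rm CSS}\ket{H+}^{\otimes n}$, an eigenstate of $\bar H$.
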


\subsection{Examples}

\subsubsection{The square and hexagonal magic states}
The two simplest examples occur when $n=1$ for $D=(2)$ and $\Omega=2 \tau$ with $\tau = i$ and $\tau=\rho \coloneqq e^{i2\pi/3}$, which have symmetries given by
\begin{equation}
    S=\begin{pmatrix}
        0 & -1 \\ 1 & 0
    \end{pmatrix},\quad T=\begin{pmatrix}
        1 & 1 \\ 0& 1
    \end{pmatrix}
\end{equation}
since
\begin{equation}
    S.i = i,\quad (ST).\rho =\rho.\label{eq:square_hex_sym}
\end{equation}
Evaluating eq.~\eqref{eq:GKP_proj_coh} for ${\tt c}=0$, we find
\begin{align}
    \ket{\psi_{2i}(\tt 0)} = N\lr{\ket{0}+\overline{\lambda\lr{2i}^{\frac{1}{4}}}\ket{1}}=N\lr{\ket{0}+\tan\lr{\frac{\pi}{8}}\ket{1}}\\
    \ket{\psi_{2\rho}(\tt 0)} = N'\lr{\ket{0}+\overline{\lambda\lr{2\rho}^{\frac{1}{4}}}\ket{1}}=N'\lr{\ket{0}+\sqrt{2-\sqrt{3}}e^{i\frac{\pi}{4}}\ket{1}}
\end{align}
where we have used the modular $\lambda$ function given by $\lambda(z)=\vartheta_{\frac{1}{2},0}(0,z)^4/\vartheta_{0,0}(0,z)^4$ and $N, N'$ are normalization parameters. 

\subsubsection{The Klein quartic magic state}
The Klein quartic is the curve
\begin{equation}
K = \{[x:y:z] \in \mathbb{P}^2 : x^3 y + y^3 z + z^3 x = 0\}, \quad g=3.
\end{equation}
Its automorphism group is $\mathrm{Aut}(K) = \mathrm{PSL}_2(\F_7)$, of order $|\mathrm{PSL}_2(\F_7)|=168$, saturating the Hurwitz bound $|\mathrm{Aut}| \leq 84(g-1)$. According to ref.~\cite{Braden_2010}, the period matrix is
\begin{equation}
\tau_{K} = \frac{1}{2} \begin{pmatrix} \varepsilon & 1 & 1 \\ 1 & \varepsilon & 1 \\ 1 & 1 & \varepsilon \end{pmatrix}, \qquad \varepsilon = \frac{-1 + i\sqrt{7}}{2}.
\end{equation}

We now give an expression for the associated vacuum projection onto the $d=2$ GKP code. As $\tau_{K}$ is cyclic, the coefficients are also cyclic, so we only need to determine $c_{000},\, c_{100}$, $c_{110}$ and $c_{111}$. Computing the coefficients using theta identities (see appendix~\ref{app:Klein}) and normalizing yields
\begin{align}
    \ket{\psi_K}&=N''\sum_{\sigma\in \Z_2^3}\;\overline{\vartheta_{\sigma/2,0 }(0, 2\tau_{K})}\ket{\sigma}\nonumber\\
    &=\frac{1}{\sqrt{672}}\lr{16\ket{000}+\sqrt{3}(11+i\sqrt{7})\ket{W_3} -(5-i\sqrt{7})\ket{111}},\label{eq:310}
\end{align}
where $\ket{W_3}=\frac{1}{\sqrt{3}}(\ket{001}+\ket{010}+\ket{100})$. We recognize that $\ket{\psi_K}$ is a superposition of a GHZ-like state with the non-stabilizer Dicke state $\ket{W_3}$.

In a fixed standard Bargmann space with $Q=I_3$, this state is obtained from the code with stabilizer lattice $\Lambda=X(2\tau_{K}\vert I_3)(\Z^3\oplus2\Z^3)$ with $X=\sqrt{2/\sqrt{7}}I_3$. In these coordinates, the theta coordinate maps to $z_{B}=Xz$.

\subsubsection{The CCZ magic state}
Let $\Theta_{\sigma}(z_0, \tau)=\vartheta_{\sigma/2,0}(2z_0, 2\tau),\, \sigma\in \Z_2^3$.
For $\tau(t)=it I_3+\frac{1}{2}(\bs{1}_3\bs{1}^T_3 -I_3)$, $t>0$ and $z_0=(\frac{it}{4}+\frac{1}{8}) \bs{1}_3$, a calculation with $q=e^{-\pi t}$ yields
\begin{align}
    \Theta_{\sigma}(z_0, \tau(t)) &=e^{i\frac{\pi}{4}{{\rm wt}}(\sigma)}i^{\binom{\wt(\sigma)}{2}}\prod_{j=1}^3f_{\sigma_{j}}(\sigma, q), \nonumber\\ f_{\sigma_{j}}(\sigma, q)&=\sum_{n\in \Z} i^n(-1)^{n({\rm wt}(\sigma)-\sigma_j)}q^{2n^2+n(2\sigma_j+1)+\sigma_j}.
\end{align}
With
\begin{equation}
    u=\sum_{n\in \Z} i^nq^{2n^2+n},\quad v= \sum_{n\in\Z} (-i)^nq^{2n^2+n } 
\end{equation}
and $U=u^3, V=e^{-i\frac{\pi}{4}}v^3$, the coefficients are 
\begin{equation}
    (\Theta_{000}, \Theta_{001}, \Theta_{010},\hdots,\Theta_{111} ) = (U, V, V, U, V, U, U, -V).
\end{equation}
Now, using the theta duplication formulas~\cite{DLMF}, one finds that $U=V$ is achieved when $\lambda(i2t)=\frac{1}{4}$, such that
\begin{equation}
    \lr{\frac{v}{u}}^2=\sqrt{1-\lambda(i2t)}+i\sqrt{\lambda(i2t)}=e^{i\frac{\pi}{6}}.
\end{equation}
In ref.~\cite{ejiri2018morseindextriplyperiodic}, Ejiri and Shoda show that the Jacobian of the octahedral hyperelliptic curve
\begin{equation}
    C_O:\quad y^2=x^8+14x^4+1
\end{equation}
realizes these parameters. There, the authors choose a basis of holomorphic $1$-forms 
\begin{equation}
    \omega = (1-z^2\; i(1+z^2)\; 2z)^T \frac{dz}{\sqrt{z^8+14z^4+1}}
\end{equation}
to obtain a period matrix $\Pi_O$, for which a basis transformation via $O\in \GL_{6}\lr{\Z}$ yields
\begin{equation}
    \Pi_O O ^T=
\begin{pmatrix}
-iB & -A & iB & -iB & -i2B & -iB \\
A & iB & -A & iB & 0 & -iB \\ 
-iB & iB & -iB & A & 0 & A    
\end{pmatrix}O^T =
-i2B \lr{\tau\lr{t}\vert I_3}
\end{equation}
with $t=\frac{A}{2B}$ and
\begin{equation}
    A=K\lr{\sqrt{\frac{3}{4}}},\quad B = K\lr{\sqrt{\frac{1}{4}}},
\end{equation}
where 
\begin{equation}
    K(\sqrt{m}) = \int_{0}^{\infty} \frac{dt}{\sqrt{t^4+(2-4m)t^2+1}}
\end{equation}
is the Legendre integral for $0\leq m<1$ (see~\cite[19.2]{DLMF}).
Using the definitions of the modular lambda function and theta functions~\cite[22.2]{DLMF}, we obtain 
\begin{equation}
    \lambda\lr{i\frac{K(\sqrt{1-k^2})}{K(k)}} = k^2,
\end{equation}
and $C_O$ can be verified to yield the desired parameter $\lambda(i2t)=\frac{1}{4}$. 

Back in the Segal-Bargmann space with $Q=I$, we use $X=(2t)^{-1/2}I_3$ and $\Pi=X(2\tau(t)\mid I_3)$, such that the theta argument $2z_0$ corresponds to the physical evaluation point $z_{B}=\sqrt{2/t}z_0$.

\subsection{Concatenated GKP-CSS code}

A concatenated GKP-CSS code~\cite{Conrad_2021, NielsenChuang2010} is a GKP code given by the lattice
\begin{align}
    \Lambda_{d, \CT, H_X, H_Z} = \frac{1}{\sqrt{d}}\lr{\CT A_X\, \vert \, A_Z }\Z^{r_{X}+r_Z +2n},\nonumber\\
    A_X=(H_X^T | dI_n),\quad A_Z=(H_Z^T | dI_n)
\end{align}
for $\CT=\bigoplus_{i=1}^n \tau_i\in \HH_1^n$ in coordinates where the Hermitian form is $H_{\CT}$ and $X=(\Im\CT)^{-1/2}$.
The matrices $H_X\in \Z_d^{r_X \times n}$, $H_Z\in \Z_d^{r_Z \times n}$ define the parity-check matrix $H=H_X\oplus H_Z$, which satisfies
\begin{equation}
    HJH^T = 0 \mod d \Leftrightarrow H_XH_Z^T =0\mod d.
\end{equation}
The basis presented above is generically overcomplete and can be reduced using the Smith normal forms of $A_X, A_Z$. 

In the special case of a self-dual CSS code and $\CT=\tau^{\oplus n}$, we have $H_X=H_Z=H\in \Z_d^{r\times n}$. 
$A$ has Smith normal form $A=Q(D | 0_{n\times r})U$, where $Q\in \GL_{n}(\Z), U \in \GL_{n+r}(\Z)$ and $D$ is diagonal with integer entries $d_i|d_{i+1} | d,\, i=1,..,n$. In this case, the lattice can be written as
\begin{equation}
    \Lambda_{d, \tau, H} = \frac{1}{\sqrt{d}} QD \lr{ \tau\Z^n +\Z^n }.
\end{equation}
The structure of the CSS code is fully absorbed into the basis for $\C^n$, and, up to this basis transformation, the lattice is simply the $n$-fold copy of the single-mode lattice $\tau\Z+\Z$. A consequence of this simple structure is that for $\tau= i, \rho$ the symmetries $S.i=i$ and $(ST).\rho=\rho$ of eq.~\eqref{eq:square_hex_sym}, applied  simultaneously to each individual mode with parameter $\tau$, also realize symmetries of $\Lambda_{d, \tau, H}$.\\
Following the same reasoning as at the beginning of this section, this implies that for GKP-CSS codes whose CSS code is self-dual, the projection of the vacuum state onto the GKP code space for $\Lambda_{d, \tau, H}$ prepares an eigenstate of the logical Clifford gate associated to the symmetry when it also preserves the semicharacter as in eq.~\eqref{eq:unitary_auto}.

In simpler quantum error correction terms, take a self-dual CSS code encoding $k=1$ logical qubit into $n$ physical qubits, such as the $[\![7,1,3]\!]$ Steane code. For the square GKP code $\tau=i$, we can combine the fact that the square GKP code projects the vacuum onto a $\ket{H+}$ magic state with the existence of transversal Hadamard gates for self-dual CSS codes to conclude that projecting the $n$-mode vacuum onto the code space of the concatenated code results in a logical $\ket{\overline{H+}}$ state for the concatenated code.

\bibliographystyle{plain}
\bibliography{references}

\appendix
\phantomsection
\addcontentsline{toc}{section}{Appendix}
\addtocontents{toc}{\protect\setcounter{tocdepth}{0}}

\section{The Klein curve: coefficient computations}\label{app:Klein}

We evaluate the coefficients for the Klein magic state. Let $L_{a}=\lrc{k\in \Z^3:\; k \mod 2 = a}$. We need to evaluate
\begin{align}
    \vartheta_{a/2 ,0}(0,2\tau_K)
    &=\sum_{m\in \Z^3}e^{i\frac{\pi}{2}(2m+a)^T\tau_K(2m+a) } \nonumber\\
    &=\sum_{k\in L_{a}}e^{i\frac{\pi}{4}k^T\lrq{(\epsilon -1)I_3 + \bs{1}\bs{1}^T}k }\nonumber\\
    &=\sum_{k\in L_{a}}e^{i\frac{\pi}{4} (\epsilon -1)\|k\|^2} e^{i\frac{\pi}{4} (\bs{1}^Tk)^2}.\label{eq:125}\\
    &=\sum_{k\in L_{a}}e^{i\frac{\pi}{4} (\epsilon -1)\|k\|^2} e^{i\frac{\pi}{4} (\|k\|^2+ 2k_1k_2 +2k_1k_3+2k_2k_3)}\nonumber\\
    &=\sum_{k\in L_{\bs{a}}}e^{i\frac{\pi}{4} \epsilon \|k\|^2} e^{i\frac{\pi}{2} ( k_1k_2 +k_1k_3+k_2k_3)}. \label{eq:theta_form}
\end{align}
Now, for $a=(1\, 1\, 0)$,  $\bs{1}^Tk=2t$ is even. Using $(-1)^{t^2}=(-1)^t$ we can write the above sum (eq.~\eqref{eq:125}) as 
\begin{equation}
    \sum_{k\in L_{a}}e^{i\frac{\pi}{4} (\epsilon -1)k^Tk} (-1)^{\frac{\bs{1}^Tk}{2}}.
\end{equation}
$L_{a}$ is invariant under the orthogonal involution $O=\diag(-1,1,1)$, but $(-1)^{\frac{\bs{1}^TOk}{2}}=-(-1)^{\frac{\bs{1}^Tk}{2}}$, so we obtain $c_{110}=-c_{110}=0$ and consequently also $c_{011}=c_{101}=0$.
For the remaining coefficients, we have
\begin{align}
    \overline{c_{000}}
    &=\sum_{m\in \Z^3} e^{i\pi (\epsilon -1)\|m\|^2+i\pi (\bs{1}^Tm)^2}  \nonumber\\
    &=\sum_{m\in \Z^3} e^{i\pi \epsilon\|m\|^2 } = \vartheta(0,\epsilon)^3
\end{align}
where we used $(\bs{1}^Tm)^2=\|m\|^2 +2(m_1m_2+m_1m_3+m_2m_3)$.
Next, we have
\begin{align}
    \overline{c_{100}}&=\sum_{k\in L_{100}} e^{i\frac{\pi}{4}\epsilon\|k\|^2 +i\frac{\pi}{2}(k_1k_2+k_1k_3+k_2k_3)} \nonumber\\
    &=\vartheta_{\frac{1}{2},0}(0,\epsilon)\vartheta_{0,\frac{1}{2}}(0,\epsilon)^2.
\end{align}
and, finally,
\begin{align}
    \overline{c_{111}}&=\sum_{m\in \Z^3} e^{i\frac{\pi}{4}(\epsilon-1) \|2m+\bs{1}\|^2 +i\frac{\pi}{4}(2 \bs{1}^Tm+3)^2} \nonumber\\
    &=\sum_{m\in \Z^3} e^{i\pi \epsilon\| m+\frac{\bs{1}}{2} \|^2} e^{i\pi \lr{- \| m+\frac{\bs{1}}{2} \|^2 +(\bs{1}^Tm)^2 +3(\bs{1}^T\bs{m}) + \frac{9}{4} }} \nonumber\\
    &=-i\vartheta_{\frac{1}{2},0}(0,\epsilon)^3.
\end{align}

To evaluate the values stated in eq.~\eqref{eq:310}, we identify and set \begin{equation}A=\theta_3(\epsilon)=\vartheta(0,\epsilon),\quad B=\theta_2(\epsilon)=\vartheta_{\frac{1}{2},0}(0,\epsilon),\quad D=\theta_4(\epsilon)=\vartheta_{0,\frac{1}{2}}(0,\epsilon). \end{equation}
Since it holds that
\begin{equation}
    \epsilon^2+\epsilon+2=0\Rightarrow \frac{\epsilon}{2}=-\frac{1}{\epsilon+1}, 
\end{equation}
We may use the known relations between theta functions to solve for our coefficients. The relevant identities are the duplication formulas~\cite[20.7.12, 20.7.14]{DLMF} 
\begin{equation}
    \theta_2(\tau/2)^2=2\theta_2(\tau)\theta_3(\tau),\quad \theta_3(\tau/2)^2 = \theta_3(\tau)^2+\theta_2(\tau)^2,
\end{equation}
and modular properties of the theta functions
\begin{equation}
    \theta_2(-\frac{1}{\tau})^2=-i\tau\theta_4(\tau)^2,\quad \theta_3(-1/\tau)^2=-i\tau\theta_3(\tau)^2.
\end{equation}
Together with $\theta_3(\tau+1)=\theta_4(\tau)$ and $\theta_4(\tau+1)=\theta_3(\tau)$ these imply that
\begin{equation}
    2BA=-i(\epsilon+1)A^2,\quad A^2+B^2=-i(\epsilon+1)D^2.
\end{equation}
Calling $u=B/A$ and $v=D^2/A^2$ these equations can be solved to show that
\begin{equation}
    u=-i\frac{1+i\sqrt{7}}{4},\quad uv= \frac{11-i\sqrt{7}}{16},
\end{equation}
which implies the stated coefficients in eq.~\eqref{eq:310} after insertion and normalization.

\end{document}